\newif\ifnotes\notestrue
\definecolor{mygrey}{gray}{0.50}
\newcommand{\notename}[2]{{\textcolor{mygrey}{\footnotesize{\bf (#1:} {#2}{\bf ) }}}}
\newcommand{\pnote}[1]{{\endnote{#1}}}
\newcommand{\notename}[2]{{}}
\newcommand{\pnote}[1]{}
\begin{document}

\title{
		\LARGE Classical algorithms, correlation decay, and complex zeros of partition functions of quantum many-body systems \\
}
\author{Aram Harrow \thanks{Center for Theoretical Physics, MIT. \href{mailto:aram@mit.edu}{aram@mit.edu}}\and Saeed Mehraban \thanks{IQIM, Caltech.
 \href{mailto:mehraban@caltech.edu}{mehraban@caltech.edu}} \and Mehdi Soleimanifar \thanks{Center for Theoretical Physics, MIT. \href{mailto:mehdis@mit.edu}{mehdis@mit.edu}}}

\date{\today}

\maketitle

\begin{abstract}
Various statistical properties of quantum many-body systems in thermal equilibrium such as the free energy, entropy, and average energy can be obtained from the \emph{partition function}. The problem of estimating the partition function has been the subject of numerous studies in statistical physics, computer science, and machine learning.  The aim of this work is to present a new classical algorithm for estimating the partition function of quantum systems. We achieve this by studying the connection between the \emph{hardness of approximating} the partition function and the \emph{thermal phase transition}. In particular, we show the following:

\begin{enumerate}
     \item[(1)] We demonstrate a quasi-polynomial time classical algorithm that estimates the partition function of quantum systems above the phase transition point. The running time of this algorithm relies heavily on the locus of the complex zeros of the partition function. Intriguingly, these complex zeros are known to mark where the phase transition occurs. By a result of \cite{sly_hardcore}, in the worst case, the same problem is $\NP$-hard below this point. Together with our work, this shows that the transition in the phase of a quantum system is also accompanied by a transition in the hardness of approximation.
    \item [(2)]

    We show that in a system of $n$ particles at temperatures above the phase transition point, where the complex zeros are far from the real axis, the correlations between two observables whose distance is $\Omega(\log n)$ decay exponentially. We can improve the factor of $\log n$ to a constant when the Hamiltonian has commuting terms or is on a $1$D chain. Previously, the decay of correlations was only proved for translationally-invariant $1$D systems \cite{araki_1d} or at very high temperatures \cite{Kastoryano_locality}.
     
    \item[(3)] We find a deterministic quasi-polynomial time approximation algorithm for the $\mathrm{XXZ}$ model in the \emph{ferromagnetic} regime at any temperature over arbitrary graphs. Previously, a randomized algorithm was known only for the ferromagnetic $\mathrm{XY}$ model \cite{Bravyi_ferro}.
\end{enumerate}
This work is the first rigorous study of the connection between the complex zeros of the partition function and the decay of correlations in quantum many-body systems and extends a seminal work of Dobrushin and Shlosman on classical spin models \cite{Dobrushin1}. On the algorithmic side, our result extends the scope of a recent approach due to Barvinok for solving classical counting problems \cite{Barvinok_book} to quantum many-body problems.
\end{abstract}

\newpage
{
  \hypersetup{linkcolor=black,linktoc=all}
  \tableofcontents
  \newpage
}
\section{Introduction}\label{sec:intro}
At low temperatures, the main characteristics of many-body systems in condensed matter physics or quantum chemistry are captured in the structure of the ground state of their Hamiltonian. The computational complexity of estimating the ground state energy has been extensively studied through numerous works. In particular, it has been shown that in the worst case, for many physically relevant systems including even a two-local Hamiltonian on a one-dimensional ($1$D) chain, estimating the ground state energy is \QMA-complete \cite{aharonov_line}. On the other hand, there is a host of classical algorithms for efficiently estimating the ground state energy in certain restricted examples like a gapped Hamiltonian on a $1$D chain \cite{arad_1d} or a dense interaction graph \cite{brandao2013product}.

While at low temperatures the system is in the vicinity of the ground space, at finite temperatures, the state of the system is a mixture of different excited states. In thermal equilibrium, a quantum system characterized by a local Hamiltonian $H$ is in the Gibbs (or thermal) state $\r=\exp(-\b H)/Z_{\b}(H)$, where $\b$ is the inverse of temperature and $Z_{\b}(H)=\Tr[\exp(-\b H)]$ is the \emph{partition function} of the system. A natural equivalent to the ground energy at finite temperatures is the \emph{free energy} which is defined as $F_{\b}(H)=-1/\b \log Z_{\b}(H)$. Many useful statistical properties of the system including the free energy and entropy can be obtained from the partition function and its derivatives. However, exactly evaluating the partition function is known to be $\sharpP$-hard. Hence in order to characterize the finite-temperature behavior of the system, it is crucial to have efficient algorithms that \emph{approximate} this quantity. 

Our starting point for finding such approximation algorithms is based on the observation that the phenomenon of the thermal phase transition is an obstacle for finding \emph{efficient} algorithms. Consider a quantum many-body system that consists of $n$ qudits interacting according to a local Hamiltonian $H$. As the temperature of this system increases, meaning $\b\rightarrow 0$, the Gibbs state $\rho$ approaches the maximally mixed state $\iden/d^n$. Thus, in this case, finding the partition function is trivial since $Z_{\b=0}(H)=d^n$. On the other hand, this problem becomes significantly harder at lower temperatures. In particular, as $\b\rightarrow \infty$, the Gibbs state approaches the ground space of the Hamiltonian $H$ and the free energy $F_{\b}(H)$ approaches the ground energy which is known to be $\QMA$-hard to estimate. Hence, we see that the computational hardness of estimating the partition function (or equivalently the free energy) depends on the inverse temperature $\b$ and goes through a transition from being trivial to $\QMA$-hard as $\b$ increases. 

In statistical physics, however, another transition occurs as $\b$ increases, namely, the transition in the \emph{phase} of the system. At the thermal phase transition point, certain physical properties of the system undergo an abrupt change. An example of such a transition is when a magnetic material that consists of a network of interacting spins goes from the \emph{ferromagnetic} to the \emph{paramagnetic} phase. In the ferromagnetic phase, most spins are pointing in the same direction and their net magnetic effect is non-zero, whereas in the paramagnetic phase, the spins are distributed equally in opposite directions making their net magnetic effect zero. This transition does \emph{not} happen gradually as $\b$ varies. On the contrary, the phase of the system changes suddenly at some critical inverse temperature $\b_c$ known as the phase transition point. 

Does the computational hardness of estimating the partition function also undergo an abrupt change at the same transition point? This question has been studied in the context of the \emph{classical Ising} or \emph{hard-core model}, and the answer is known to be affirmative. For these systems, there are efficient algorithms for estimating the partition function when $\b<\b_c$ \cite{Weitz,Sinclair_antiferro} whereas by a result of Sly and Sun \cite{sly_sun,sly_hardcore} the same problem is $\NP$-hard for $\b>\b_c$.

Hence, it appears that the thermal phase transition poses a barrier to obtaining efficient algorithms, and we need a framework for characterizing this phenomenon. There are at least two methods for such purpose. One, which is the basis of our algorithm, stems from analyzing the locus of the complex zeros of the partition function. Another seemingly different method involves the decay of long-range order in the Gibbs state of the system. In this work, we study the interface between these two methods and their algorithmic implications. In particular, we find a quasi-polynomial time approximation algorithm for the partition function for temperatures far from the complex zeros and show that the correlations in the Gibbs state decay exponentially in the same temperature range. The following section summarizes our results. 

\subsection{Our main results}
\subsubsection{The complex zeros of the partition function}

In general, the partition function can be written as $Z_{\b}(H)=\sum_k \exp(-\b E_k)$, where each $E_k$ is an eigenvalue of the Hamiltonian $H$. If $\b$ is real, the terms $\exp(-\b E_k)$ are all strictly positive, and hence the partition function $Z_{\b}(H)$ is strictly positive itself. However, this changes when $\b$ is allowed to be complex. In that case, the terms $\exp(-\b E_k)$ acquire complex phases that when added together might cancel each other and make the partition function zero. We call the solutions of $Z_{\b}(H)=0$ for $\b\in \bbC$ the \emph{complex zeros} of the partition function. 

The significance of these zeros becomes more clear if one looks at the free energy $F_{\b}(H)$. The zeros of $Z_{\b}(H)$ are the \emph{singularities} of $\log Z_{\b}(H)=-\b F_{\b}(H)$. Since $Z_{\b}(H)\neq0$ when $\b$ is real, we see that all these singularities are located in the complex plane and the free energy is analytic near the real axis. As the number of particles $n$ grows, the number and location of these points change. Perhaps rather surprisingly, some of these singularities approach the real axis in the limit of a large number of particles, $n\rightarrow \infty$. The point on the real axis where these zeros converge in the large $n$ limit is called the critical inverse temperature and denoted by $\b_c$ (see \fig{zeros_fisher}). This critical temperature separates different phases of matter and important quantities such as the free energy become non-analytic in the vicinity of $\b_c$. The study of these complex zeros in connection with the phase transition phenomenon in classical Ising models was initiated by Lee and Yang \cite{Lee-yang} and later extended by Fisher \cite{Fisher}. This approach is one of the few rigorous methods available in the theory of phase transitions. 

One can go beyond partition functions and consider complex roots of other high-degree polynomials that appear in combinatorics such as estimating the permanent of a matrix. Recently, there has been a surge of interest in studying these complex zeros in theoretical computer science due to their algorithmic applications. In particular, a new approach introduced by Barvinok \cite{Barvinok_book} directly connects the locus of the complex zeros to approximation algorithms for counting problems. In this work, we extend the scope of this method by applying it to quantum many-body systems.

We first state the condition on the location of zeros that we use in our approximation algorithm. Under this condition, it is guaranteed that the inverse temperature $\b$ at which the partition function is estimated is connected to $\b=0$ by a path in the complex plane that avoids the complex zeros along its way with a significant margin. Even though this algorithm works for any such path, we restrict our attention to the physically-relevant case when this zero-free region contains the real $\b$-axis. Hence, we define:

\begin{definition}\label{def:informal vicinity}
The $\d$-neighborhood of the interval $[0,\b]$ for some $\b\in \bbR^+$ is a region of the complex plane defined as $\Omega_{\d,\b}=\{z\in \bbC: \exists z'\in [0,\b], |z-z'|\leq \d \}$ (see \fig{zeros_fisher} for an example of such a region).
\end{definition}
\begin{definition}[Informal version of \condreftwo{simple absence of zeros}{analyticity after measurement}]\label{def:informal absence of zeros}
For a system of $n$ particles with a local Hamiltonian $H$, we define:
\begin{enumerate}
    \item A $\d$-neighborhood $\Omega_{\d,\b}$ of the interval $[0,\b]$ (see \defref{informal vicinity}) is called zero-free if $\d$ is some constant and $\forall\b'\in \Omega_{\d,\b}$ the partition function $Z_{\b'}(H)\neq 0$ and moreover, $|\log Z_{\b'}(H)|\leq O(n)$. 
    \item Equivalently, the free energy $F_{\b}(H)$ is called $\d$-analytic along $[0,\b]$ if $\Omega_{\d,\b}$ is a zero-free region. 
    \end{enumerate}
\end{definition}

We now state our first result.

\begin{thm}[Informal version of \thmref{extrapolation algorithm}]\label{thm:informal extrapolation algorithm}
There is a deterministic classical algorithm that takes a local Hamiltonian $H$ and a number $\e$ as inputs, runs in time $n^{O(\log(n/\e))}$, and outputs a value within $\e$-multiplicative error of the partition function $Z_{\b}(H)$ at inverse temperature $\b$ as long as the free energy is $\d$-analytic along the $[0, \b]$ line (see \defref{informal absence of zeros}).
\end{thm}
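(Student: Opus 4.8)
The plan is to follow Barvinok's interpolation method, adapted to partition functions of quantum Hamiltonians. The core object is $p(\beta) = \log Z_\beta(H)$, which we want to evaluate at the target inverse temperature $\beta$. By the hypothesis that the free energy is $\delta$-analytic along $[0,\beta]$, the function $Z_{\beta'}(H)$ is nonzero and of magnitude $e^{O(n)}$ throughout the constant-width tube $\Omega_{\delta,\beta}$, so $p$ is analytic there. The idea is to approximate $p(\beta)$ by a truncation of its Taylor series. First I would expand $Z_{\beta'}(H) = \Tr[\exp(-\beta' H)] = \sum_{k\geq 0} \frac{(-\beta')^k}{k!} \Tr[H^k]$; the low-order Taylor coefficients of $Z$ at $\beta' = 0$ are the moments $\Tr[H^k]$, and since $H = \sum_a h_a$ is a sum of $m = \poly(n)$ local terms, $\Tr[H^k]/d^n$ is a sum of at most $m^k$ local trace expressions, each computable by brute force over the constantly-many qudits in the support of the relevant terms. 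Hence the first $K$ Taylor coefficients of $Z_{\beta'}(H)$ (equivalently, via the standard Newton-identity / exp-log transfer, the first $K$ Taylor coefficients of $p$) can be computed deterministically in time $m^{O(K)} = n^{O(K)}$.

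The second ingredient is a quantitative bound showing that the degree-$K$ Taylor polynomial of $p$ at $0$ approximates $p(\beta)$ to within the desired error once $K = O(\log(n/\varepsilon))$. This does not follow directly because $\beta$ need not lie inside the disk of convergence of the Taylor series at $0$ — the zero-free region is only a constant-width tube around $[0,\beta]$, not a disk of radius $\beta$. The standard fix (following Barvinok) is a conformal change of variables: choose a polynomial $\phi$ mapping the unit disk into $\Omega_{\delta,\beta}$ with $\phi(0) = 0$ and $\phi(1) = \beta$, consider $q = p \circ \phi$, which is analytic on a neighborhood of the closed unit disk, and bound $|q(z)| \leq |\log Z| = O(n)$ there; then Cauchy estimates on $q$ give that its degree-$O(\deg\phi \cdot K)$ Taylor polynomial, evaluated at $z=1$, is within $n \cdot 2^{-\Omega(K)}$ of $q(1) = p(\beta)$. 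Taking $K = O(\log(n/\varepsilon))$ yields additive error $\varepsilon$ in $\log Z$, i.e. multiplicative error $e^\varepsilon \approx 1+\varepsilon$ in $Z_\beta(H)$, and the Taylor coefficients of $q$ are computable from those of $p$ and the (explicit, constant-degree) $\phi$ by elementary polynomial composition. Since $\deg\phi$ is a constant depending only on $\delta$ and (a bound on) $\beta$, the total running time is $n^{O(\log(n/\varepsilon))}$ as claimed.

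The step I expect to be the main obstacle is establishing the \emph{global} bound $|\log Z_{\beta'}(H)| = O(n)$ on the whole tube $\Omega_{\delta,\beta}$ — i.e. the second half of \defref{informal absence of zeros}. The lower bound $Z_{\beta'}(H) \neq 0$ is the zero-freeness hypothesis, and an upper bound $|Z_{\beta'}(H)| \leq d^n e^{|\beta'|\,\|H\|}$ is immediate, giving $\mathrm{Re}\log Z \leq O(n)$; the genuinely delicate part is ruling out that $|\log Z|$ blows up, which here is folded into the hypothesis, so in the formal version one must check this condition is what the later sections actually verify (e.g., via the cluster-expansion / polymer bounds that give zero-freeness in the first place, which typically also yield $|\log Z - n\log d| = O(n)$ simultaneously). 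A secondary technical point is making the conformal map $\phi$ and its degree fully explicit as a function of $\delta$ and an a priori bound on $\beta$, and confirming that the $m^{O(K)}$ bruteforce evaluation of the moments $\Tr[H^k]$ is genuinely deterministic and correct — in particular that overlapping supports of the local terms $h_a$ are handled correctly when one multiplies out $(\sum_a h_a)^k$ and takes the normalized trace. These are routine but must be done carefully to get the stated running time.
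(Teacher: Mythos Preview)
Your proposal is correct and follows essentially the same approach as the paper: compute the moments $\Tr[H^k]$ in time $n^{O(k)}$, use Barvinok's polynomial change of variables $\phi$ (the paper cites Lemma~2.2.3 of \cite{Barvinok_book}) to map a disk of radius $b>1$ into the strip $\Omega_{\delta,\beta}$, and truncate the Taylor series of $\log Z_{\phi(\cdot)}(H)$ at order $K=O(\log(n/\varepsilon))$ using the hypothesis $|\log Z|\leq O(n)$. Your anticipated obstacle about the $O(n)$ bound on $|\log Z|$ is, as you note, folded into the hypothesis for this theorem and established separately (via cluster expansion) in \thmref{t_higH_temp}.
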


The critical point $\b_c$ where the zero-free region ends has been precisely determined for some specific systems such as the \emph{classical} Ising model. In general, though, it is a hard problem to exactly find this point given an arbitrary Hamiltonian. One can compare this with when a $1$D Hamiltonian is assumed to have a constant gap. Under this condition, there is an efficient algorithm for estimating the ground energy. However, it has been shown that validating this condition, i.e. determining if a Hamiltonian is gapped or not, is undecidable in the worst case \cite{cubitt_undecidablegapp}. 

In our next result, we find a constant \emph{lower} bound on the critical point $\b_c$. We show that there is a zero-free \emph{disk} of radius $\b_0$ around $\b=0$ for some constant $\b_0\leq \b_c$. We prove this for geometrically-local Hamiltonians in which the local terms act on neighboring qudits that are located on a $D$-dimensional lattice $\L\subset \mathbb{Z}^D$.

\begin{thm}[Informal version of \thmref{t_higH_temp}]\label{thm:informal high t disk of no zeros}
There exists a real constant $\b_0$ such that for all $\b\in \bbC$ with $|\b| \leq \b_0$, the partition function $Z_{\b}(\L)$ of a geometrically-local Hamiltonian $H$ does not vanish, and furthermore, $\big|\log|Z_{\b}(\L)|\big|\leq O(n)$.
\end{thm}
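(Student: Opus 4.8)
The plan is to establish a convergent \emph{cluster (Mayer) expansion} for $\log Z_\b(\L)$ on the disk $|\b|\le\b_0$; this single estimate yields both analyticity (hence the absence of zeros) and the bound $\bigl|\log|Z_\b(\L)|\bigr|\le O(n)$. Write $H=\sum_Z h_Z$, where $Z$ ranges over the geometrically-local supports of the interaction terms, so that $\|h_Z\|\le J$ for a constant $J$ and each support $Z$ overlaps at most $\Delta=O(1)$ other supports (this uses geometric locality and bounded interaction range). Since $Z_0(\L)=d^n$ and $|\log d^n|=O(n)$, it suffices to prove that $\log\bigl(Z_\b(\L)/d^n\bigr)$ is analytic on $|\b|\le\b_0$ with absolute value $O(n)$ there.

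First I would Taylor-expand (equivalently, Dyson-expand) the exponential:
\[
\frac{Z_\b(\L)}{d^n}\;=\;\sum_{m\ge0}\frac{(-\b)^m}{m!}\sum_{Z_1,\dots,Z_m}\frac{\Tr[h_{Z_1}\cdots h_{Z_m}]}{d^n}.
\]
The structural fact that makes a cluster expansion possible is that operators on disjoint regions commute: if the multiset $\{Z_1,\dots,Z_m\}$ splits into sub-multisets with pairwise disjoint total supports, then the normalized trace factorizes over those pieces, with the operator ordering only needing to be respected \emph{within} each piece. Thus $Z_\b(\L)/d^n$ is the partition function of an abstract polymer model whose polymers are connected, \emph{time-ordered} clusters of local terms, and the standard Mayer/Ursell machinery gives
\[
\log\frac{Z_\b(\L)}{d^n}\;=\;\sum_{W}\b^{|W|}\,\varphi(W),
\]
the sum being over connected time-ordered clusters $W$ of local terms, with $\varphi(W)$ the usual Ursell weight assembled from the normalized traces. (Equivalently, this is the statement that the $m$-th derivative of $\log Z_\b$ at $\b=0$ — the $m$-th cumulant of $H$ — is a sum over connected clusters of $m$ terms, which one can instead bound directly.)

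Next I would bound the cluster weights. A cluster of $m$ terms contributes at most $J^m$ from operator norms; the $1/m!$ from the expansion together with the tree-graph / Penrose bound on the alternating sums defining $\varphi$ controls the combinatorial signs; and geometric locality bounds the number of connected clusters of $m$ terms anchored at a fixed term by $C^m$ for a constant $C=C(\Delta)$. Combining these, $\sum_{W\ni Z_0}|\b|^{|W|}|\varphi(W)|$ converges whenever $|\b|\le\b_0$ for a small enough constant $\b_0$ depending only on $J$, $\Delta$, $d$, and the interaction range. Summing the anchor $Z_0$ over the $O(n)$ local terms gives absolute convergence of the full series with $\bigl|\log(Z_\b(\L)/d^n)\bigr|\le O(n)$, hence $Z_\b(\L)\neq0$ and $\bigl|\log|Z_\b(\L)|\bigr|\le O(n)$ throughout the disk. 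This zero-free disk is precisely the input needed by the extrapolation algorithm of \thmref{informal extrapolation algorithm}.

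The main obstacle is the non-commutativity of the $h_Z$. Unlike the classical cluster expansion, the normalized traces carry an operator-ordering constraint, so one must work with \emph{time-ordered} polymers and check that the ordering neither breaks the factorization over connected components nor inflates the counting and combinatorial estimates beyond what the factorials absorb. A secondary point is that getting an $n$-\emph{independent} radius $\b_0$ cannot be done with a crude geometric-series bound on the number of connected subgraphs; it requires a Koteck\'y--Preiss / Fern\'andez--Procacci type convergence criterion, in which the $1/m!$ and the smallness of $|\b|$ are balanced carefully against the exponential growth in the number of clusters.
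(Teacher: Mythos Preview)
Your proposal is correct and follows a well-trodden route (the quantum Mayer/polymer expansion, with a Koteck\'y--Preiss type convergence criterion), but it is organized differently from the paper's proof.

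The paper does not expand $\log Z_\b(\L)$ directly. Instead it writes a telescoping product
\[
Z_\b(\L)=d^n\prod_{j=0}^{n-1}\Bigl(\tfrac{1}{d}\,\tfrac{Z_\b(\L_{j+1})}{Z_\b(\L_j)}\Bigr),
\]
and proves by \emph{induction on the number of sites} that each ratio satisfies $\bigl|\log|\tfrac{1}{d}Z_\b(\L_{j+1})/Z_\b(\L_j)|\bigr|\le e^2gh|\b|$. The cluster expansion enters only locally: Lemma~4.5 expands $Z_\b(\L)$ as $d\,Z_\b(\L\setminus\{x_0\})$ plus a sum over connected clusters touching $x_0$, weighted by $W_\b(\cX)$ and multiplied by partition functions on smaller sublattices. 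The induction hypothesis then bounds those smaller partition functions, and a second short induction (Lemma~4.6) controls the resulting sum over connected $\cX$. This yields the explicit radius $\b_0=1/(5egh\k)$.

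What each approach buys: your direct Mayer expansion is the classical statistical-mechanics argument and gives analyticity of $\log Z_\b$ in one stroke, but requires setting up the abstract polymer model carefully (your ``time-ordered polymers'' are essentially the weights $W_\b(\cX)$ of the paper) and invoking a KP/FP criterion as a black box. The paper's inductive site-removal argument is more self-contained---it avoids the Ursell/tree-graph combinatorics entirely---and the ``complex site removal bound'' it establishes is reused verbatim later in the paper (e.g.\ in \S6) as a building block for the decay-of-correlations $\Rightarrow$ analyticity direction. So the paper's organization is chosen partly with those later applications in mind.
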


\begin{figure}[t!]
\centering
	\includegraphics[width=.48\textwidth]{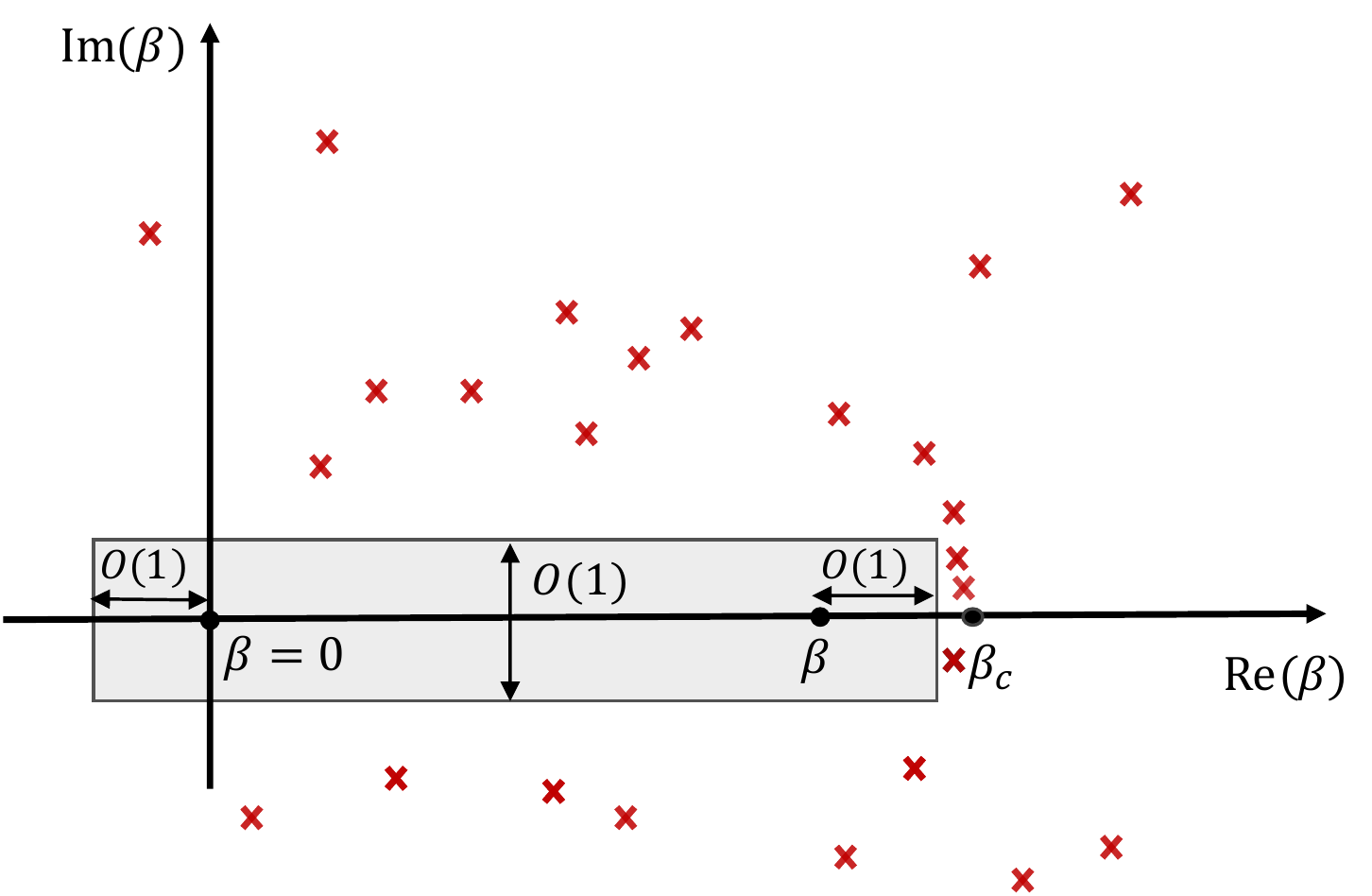}
	\caption{The location of complex zeros of the partition function, the critical point $\b_c$, and the zero-free region near the real axis (as in \defref{informal absence of zeros}). The free energy is analytic in this region.
 \label{fig:zeros_fisher}
}
\end{figure}
\subsubsection{The decay of correlations in the Gibbs state}
Another signature of the thermal phase transition is the appearance of long-range order in the system. In the example of a magnetic system, below the phase transition in the ferromagnetic phase (also called the \emph{ordered} phase), distant spins are correlated and point in the same direction, whereas in the paramagnetic phase (also known as the \emph{disordered} phase), the correlations between disjoint parts of the system decay exponentially with their distance. More precisely, we define the exponential decay of correlations as

\begin{definition}[Informal version of \condref{Exponential decay of correlations}]\label{def:informal decay of cor}
The Gibbs state $\rho_{\b}(H)$ of a geometrically-local Hamiltonian $H$ at inverse temperature $\b$ exhibits an exponential decay of correlations if for any two disjoint observables $O_1$ and $O_2$ there exist constants $\xi$ and $c$ such that 
\ba\label{eq:a3}
\big|\Tr\left[ \rho_{\b}(H) O_1 O_2 \right]-\Tr\left[ \rho_{\b}(H) O_1 \right]\Tr\left[ \rho_{\b}(H) O_2 \right] \big| \leq c\norm{O_1}\norm{O_2} e^{-\dist(O_1,O_2)/\xi}.
\ea
\end{definition}

Besides its physical significance, this property also has algorithmic applications and has been studied both in classical \cite{Weitz} and quantum \cite{Kastoryano_locality, Brandao_gibbs_preparing} settings. What is the relation between this notion of the phase transition and the complex zeros of the partition function? Note that the former involves correlations in the system at a \emph{real} temperature while the latter concerns the \emph{complex} temperature features of the partition function. Could it be that these two apparently distinct characterizations are indeed equivalent? 

In a seminal work \cite{Dobrushin1}, Dobrushin and Shlosman proved that for translationally-invariant classical systems, the decay of correlations is actually equivalent to the analyticity of the free energy and the existence of a zero-free region. Recently, a more refined version of this equivalence was proved for the classical Ising model \cite{Liu_zeros}.

The same question has been open for quantum systems. Our next two results suggest an affirmative answer. Our first result shows that the absence of complex zeros around some real $\b$ implies the exponential decay of correlations at that $\b$. 

\begin{thm}[Informal version of results in \secref{Zero-free region implies the exponential decay of correlations}]\label{thm:informal decay from zero}
Let $\r_{\b}(H)$ be the Gibbs state of a geometrically-local Hamiltonian at inverse temperature $\b$ in the zero-free region $\Omega_{\d,\b}$ given in \defref{informal absence of zeros}. This state has the decay of correlation property as in \defref{informal decay of cor} in any of the following cases: 
\begin{itemize}
    \item [(i.)] The distance between the observables $O_1$ and $O_2$ is at least $\Omega(\log n)$,
    \item [(ii.)] The Hamiltonian $H$ is the sum of mutually commuting local terms, or
    \item [(iii.)] The Hamiltonian $H$ is defined on a $1$D chain.
\end{itemize}
\end{thm}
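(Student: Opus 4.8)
The plan is to derive the decay of correlations from the analyticity of the free energy by a cluster-expansion / polynomial-interpolation argument applied to a modified partition function. First I would encode the correlator $\Tr[\rho_\beta O_1 O_2] - \Tr[\rho_\beta O_1]\Tr[\rho_\beta O_2]$ as (a derivative of) the logarithm of an auxiliary partition function in which the observables $O_1,O_2$ are inserted as source terms, i.e.\ consider $Z_\beta(H,\lambda_1 O_1 + \lambda_2 O_2) = \Tr[\exp(-\beta H + \lambda_1 O_1 + \lambda_2 O_2)]$ and note that the truncated correlation is $\partial_{\lambda_1}\partial_{\lambda_2}\log Z$ evaluated at $\lambda_1=\lambda_2=0$. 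The goal is to show this mixed partial derivative is exponentially small in $\dist(O_1,O_2)$. The standard route is to argue that, in the zero-free region $\Omega_{\delta,\beta}$, $\log Z_\beta(H)$ admits a convergent Taylor-type (cluster) expansion whose order-$m$ terms involve only connected subgraphs of the interaction graph touching both $\supp(O_1)$ and $\supp(O_2)$; any such connected cluster has at least $\dist(O_1,O_2)$ terms, so the contribution is suppressed like $e^{-\Omega(\dist)}$ once the expansion converges with a fixed per-term weight. Barvinok's method (invoked via \thmref{extrapolation algorithm}'s underlying machinery) supplies precisely this: the condition $Z_{\beta'}(H)\neq 0$ with $|\log Z_{\beta'}|\leq O(n)$ on $\Omega_{\delta,\beta}$ controls the low-order derivatives of $\log Z$ and hence the cluster weights.

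The three cases differ in how one controls the \emph{combinatorial growth} of the number of relevant clusters of a given size, which is the competing factor against the $e^{-\Omega(\dist)}$ suppression. In case (i), for a general geometrically-local Hamiltonian, the number of connected subgraphs of size $m$ through a fixed vertex is $n^{O(1)}\cdot C^m$ at worst (or one pays a $\poly(n)$ overhead), so convergence of the truncated expansion down to the error scale requires summing $m$ up to $\Omega(\log n)$; below that distance the bound is vacuous, which is exactly why the hypothesis $\dist(O_1,O_2)\geq \Omega(\log n)$ appears. In case (ii), when $H=\sum_j h_j$ with $[h_i,h_j]=0$, the partition function factors through the joint eigenbasis and the problem reduces to a \emph{classical} model on the interaction graph; there one can quote the Dobrushin--Shlosman equivalence \cite{Dobrushin1} (or re-run the classical cluster expansion), and the number of clusters is genuinely $C^m$ with no $\poly(n)$ factor, giving a constant correlation length. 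In case (iii), a $1$D chain, the interaction graph has bounded "connected-subgraph entropy" — the number of connected intervals of length $m$ through a point is $O(m)$, again polynomial in $m$ rather than exponential — so the geometric series converges with a constant correlation length; here one might alternatively transfer-matrix the analyticity hypothesis, but the cluster-expansion bookkeeping is cleaner and uniform with the other cases.

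The main obstacle I anticipate is controlling the non-commutativity of the $h_j$'s and of $O_1,O_2$ when expanding $\log Z_\beta(H + \text{sources})$: unlike the classical case, $\exp(-\beta H)$ does not factor over the terms, so the "cluster" is not literally a subgraph but a formal sum of ordered operator products (a Dyson-series / Duhamel expansion), and one must show that the weight of an order-$m$ term is still bounded by $(\text{const})^m/m!$ uniformly and that only terms forming a connected path from $O_1$ to $O_2$ survive. The zero-freeness of $Z$ on a disk of radius $\delta$ around each point of $[0,\beta]$ is what converts this into a convergent series via the Cauchy estimates on $\log Z$ (as in \thmref{informal high t disk of no zeros} for the high-temperature case, and by hypothesis in general), but turning the $O(n)$ bound on $|\log Z_{\beta'}|$ into a \emph{local} (support-size-dependent) bound on the source-differentiated quantity — so that the prefactor $c$ in \eqref{eq:a3} does not blow up with $n$ — requires a locality/Lieb-Robinson-type argument or a careful telescoping over the cluster; this is the step I expect to be the crux, and it is presumably where the $\Omega(\log n)$ distance requirement in case (i) is genuinely needed rather than a technical artifact.
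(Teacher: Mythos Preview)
Your overall architecture for case (i) is close to the paper's: encode the truncated correlation in a log-partition-function quantity, show that its low-order Taylor coefficients at $\beta=0$ vanish unless the derivative touches enough local terms to build a connected chain from $\supp(O_1)$ to $\supp(O_2)$, and then invoke analyticity plus a global bound to get exponential decay. The paper does this with a slightly different encoding (a log of a ratio of four partition functions with positive operators $N_1,N_2$ inserted, rather than $\partial_{\lambda_1}\partial_{\lambda_2}\log Z$), but the ``only connected clusters survive'' mechanism is the same. One correction: the prefactor $n$ does \emph{not} come from the combinatorics of counting connected subgraphs (that count is $C^m$ with no $n$ for bounded-degree lattices); it comes from the only available upper bound $|\log Z_{\beta'}|\le O(n)$ on the analytic function, which via Lemma~5.1 (essentially Cauchy/Schwarz lemma) turns into the $M$ in $Me^{-c\,\dist}$. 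This matters because it is precisely what has to be fixed in (ii) and (iii).

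Your proposed arguments for (ii) and (iii) would not work. For (ii), commuting local terms do \emph{not} reduce the problem to a classical spin model: the joint eigenbasis need not be a product basis (think of the toric code), and the observables $O_1,O_2$ need not be diagonal in it, so Dobrushin--Shlosman is unavailable. What commutativity actually buys is the exact factorization $e^{-\beta H}=e^{-\beta H_B}e^{-\beta(H-H_B)}$, which lets one trace out everything outside a ball $B$ of radius $\sim\dist(O_1,O_2)$ and absorb the exterior into a boundary density operator on $\partial B$; rerunning the case-(i) argument on the restricted system replaces $n$ by $|B|$, killing the prefactor. For (iii), the one-dimensionality is not used to improve cluster counting---that would not touch the $O(n)$ bound on $|\log Z|$ which is the actual bottleneck. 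Instead the paper uses the \emph{quantum belief propagation} operator $\eta$ (\propref{Quantum belief propagation}) to approximately achieve the same factorization as in the commuting case, with a truncation error $e^{\alpha_1|\partial B|-\alpha_2\ell}$; this is only usable when $|\partial B|=O(1)$, which is exactly the $1$D hypothesis. In both (ii) and (iii), then, the mechanism is ``localize the system to a ball around the observables so that the global bound becomes $O(|B|)$ instead of $O(n)$,'' not improved cluster combinatorics or a classical reduction.
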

The class of commuting Hamiltonians include important examples such as stabilizer Hamiltonians like the Toric code, Color code, or Levin-Wen model \cite{levin_wen}. 
 
Proving the converse of \thmref{informal decay from zero} turns out to be more challenging. Nevertheless, we can give evidence for this direction by generalizing the result of \cite{Dobrushin1} to classical systems that are not translationally invariant, and also quantizing certain steps in the proof.  

\begin{thm}[Informal version of \thmref{The decay of correlations implies the absence of zeros}] \label{thm:informal zeros from decay}
Let $H$ be a geometrically-local Hamiltonian of a classical spin system, i.e. the local terms $H_i$ are all diagonal in the same product basis. For this system, the exponential decay of correlations given in \defref{informal decay of cor} implies the absence of zeros near the real axis as in \defref{informal absence of zeros}.
\end{thm}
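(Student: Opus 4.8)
The plan is to follow the classical strategy of Dobrushin and Shlosman, reorganized so that the translational-invariance hypothesis is never used. The starting point is that for a classical spin system the partition function $Z_{\b}(H) = \sum_{\sigma} e^{-\b H(\sigma)}$ is (after substituting $w = e^{-\b}$ or a similar change of variables) a polynomial in the activity/edge variables, and the decay-of-correlations hypothesis of \defref{informal decay of cor} is a statement about the Gibbs measure at real $\b$. The goal is to upgrade "correlations decay at real $\b$" to "the free energy, viewed as a function of a complex parameter, is analytic in a strip around the real axis," which by \defref{informal absence of zeros} is exactly a zero-free region together with the $O(n)$ bound on $|\log Z|$. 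First I would set up a polymer/cluster-expansion representation of $\log Z_{\b}(H)$: express it as a sum over connected subgraphs (polymers) of the interaction graph, with polymer weights that are explicit functions of $\b$. The key analytic fact to establish is that exponential decay of correlations at real $\b$ forces a quantitative bound on these polymer weights — roughly, the sum of $|w(\gamma)| e^{a|\gamma|}$ over polymers $\gamma$ containing a fixed vertex is bounded for some $a>0$. This is the Dobrushin–Shlosman "analyticity from mixing" mechanism, and it is where the bulk of the work lives.

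The key steps, in order, would be: (1) translate \defref{informal decay of cor} into a strong-spatial-mixing / finite-energy-sensitivity statement for the family of finite-volume Gibbs measures $\rho_{\b}(H_\Lambda)$ obtained by restricting $H$ to finite regions, with mixing constants uniform in the volume — here the absence of translational invariance means I must work directly with the given (arbitrary) geometrically-local interaction rather than a single transfer operator; (2) use strong spatial mixing to prove a bound on the derivatives $\partial_\b^k \log Z_{\b}(H_\Lambda)$ at real $\b$ that is of the form $k! \, C^k \, |\Lambda|$ — this is the analytic heart: correlation decay controls how much a local perturbation of $\b$ (equivalently, a reweighting concentrated near a site) can propagate, and a telescoping/interpolation argument over sites converts the decay rate $\xi$ into the radius of convergence $C^{-1}$ of the Taylor series of $\log Z_{\b}$; (3) conclude that $\log Z_{\b}(H_\Lambda)$ extends analytically to a complex disk of radius $\sim C^{-1}$ around every real point, hence to a strip $\Omega_{\d,\b}$ of some constant width $\d$, on which it is bounded by $O(|\Lambda|) = O(n)$; (4) since $\log Z$ is finite and analytic there, $Z_{\b}(H) \neq 0$ throughout $\Omega_{\d,\b}$, which is precisely \defref{informal absence of zeros}.

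The main obstacle I expect is step (2): extracting a \emph{quantitative, dimension-independent} Taylor-coefficient bound on $\log Z_{\b}$ purely from the hypothesis "correlations decay at a single real $\b$," without any monotonicity, FKG, or transfer-matrix structure that translational invariance would normally supply. In the original Dobrushin–Shlosman argument, translational invariance is used to pass from a mixing condition in a finite box to uniform control over all scales via a sub-multiplicativity (a "finite-size criterion implies mixing at all scales"), and one must find a replacement. The natural substitute is to assume — or derive from \defref{informal decay of cor} applied to boxes of all sizes — a uniform strong-spatial-mixing condition, and then run a direct cluster-expansion convergence argument: differentiate $\log Z$ with respect to a formal parameter attached to the interaction terms, recognize each derivative as a connected correlation function of local observables, and bound the sum of these connected correlations using the decay rate and the fact that the number of polymers of size $s$ through a fixed vertex grows only exponentially in $s$ (geometric locality / bounded degree). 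Getting the combinatorial entropy of polymers to be beaten by the correlation-decay factor $e^{-s/\xi}$ — possibly after first boosting $\xi$ to be small enough by restricting to a sub-interval, or by invoking the self-improvement of spatial mixing — is the delicate quantitative point, and it is also where the diagonal ("classical") assumption on $H$ is essential, since it is what makes $Z$ a genuine polynomial with a bona fide polymer expansion rather than forcing the more elaborate complex-interpolation machinery used for the quantum direction in \thmref{informal decay from zero}.
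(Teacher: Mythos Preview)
Your proposal takes a genuinely different route from the paper. The paper does \emph{not} attempt to bound the Taylor coefficients of $\log Z_\b$ directly via connected correlation functions or a polymer expansion. Instead it telescopes in the \emph{imaginary part} of $\b$: writing $\b=\b_r+i\b_i$, it adds the complex perturbations $i\b_i H_k$ one local term at a time, so that only a \emph{first} derivative (a one-point expectation in a complex-perturbed Gibbs state) ever needs to be controlled. Bounding that expectation is reduced, through a sequence of intermediate propositions, to a ``small relative phase'' condition comparing two partition functions that differ only in a single boundary spin. That comparison is established by an induction on the volume: one peels off a constant-size ball $B$ around the changed boundary site, uses the Markov property of the classical Gibbs measure to condition on $\partial\bar B$, applies the inductive hypothesis on the complement, and invokes the two-point decay hypothesis exactly once to show the boundary-condition change on $\partial\bar A$ is felt only weakly on $\partial\bar B$. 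No higher-order connected correlations appear anywhere.

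Your plan, by contrast, commits you at step (2) to bounding $\partial_\b^k\log Z$ by $k!\,C^k|\Lambda|$, i.e.\ to controlling $k$-point \emph{Ursell} (fully connected) correlation functions for all $k$, starting only from the two-point decay of \defref{informal decay of cor}. This is the genuine gap: two-point exponential decay does not by itself imply the tree-graph-type bounds on $k$-point truncated correlations that would make the Taylor series converge; one typically needs either a Dobrushin uniqueness / strong-spatial-mixing hypothesis (which is strictly stronger than what is assumed), or an a priori analyticity input, or model-specific structure (FKG, GKS, etc.) that you have explicitly ruled out. Your step (1), upgrading covariance decay to strong spatial mixing uniformly in the volume, is already essentially the theorem you are trying to prove --- in the Dobrushin--Shlosman framework these are the ``equivalent conditions,'' and the hard direction is precisely what is at stake here. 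The paper's telescoping-plus-boundary-induction argument is designed exactly to avoid ever facing the $k$-point problem: it trades a single high-order derivative for a product of $O(n)$ first-order ratios, each of which is controlled by one application of two-point decay and the Markov property.
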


The importance of fully establishing this equivalence between the decay of correlations and the absence of zeros is twofold. On one hand, this can be thought of as an improvement on \thmref{informal high t disk of no zeros}. This means we can prove the analyticity of the free energy not only below the lower bound $\b_0$ that we found, which might be smaller than the exact value $\b_c$, but also for any $\b$ at which the decay of correlations holds. On the other hand, this equivalence allows us to use the locus of zeros to extend the range of $\b$ where the system exhibits the decay of correlations from a constant (by a result of \cite{Kastoryano_locality}) to the critical point $\b_c$. Overall, this equivalence rigorously confirms the physical intuition that a quantum system enters the disordered phase at the point where the free energy becomes analytic. 

\subsubsection{Two-local Hamiltonians and Lee-Yang zeros}

For our last result, we switch gears and focus on a specific family of $2$-local Hamiltonians. We again use the idea of extrapolation, but this time, our extrapolation parameter instead of $\b$ is the strength of the external magnetic field applied to the system in the $z$-direction. The physical motivation is that when the system is subject to a large external field in a specific direction (the $z$-direction in our case), all spins align themselves in that direction, and estimating the properties of the system becomes trivial. On the other hand, as we move to smaller fields, the other interaction terms between the particles gain significance, making the problem non-trivial. Our result is an approximation algorithm for the quantum $\mathrm{XXZ}$ model with the following Hamiltonian:

\begin{definition}\label{def:informal XXZ}
The anisotropic $\mathrm{XXZ}$ Hamiltonian on an interaction graph $G=(V,E)$ is given by
\ba\label{eq:a4}
H(\mu)=-\sum_{(i,j)\in E} \left(J_{ij}(X_i X_j+Y_i Y_j)+J_{ij}^{zz}Z_i Z_j\right)-\mu \sum_{i\in V}Z_i.
\ea
\end{definition}
We find an approximation algorithm for this model. This is stated in the following theorem. 
\begin{thm}[Informal version of \thmref{alg for xxz}]\label{thm:informalalg for xxz}
There is a deterministic algorithm that runs in $n^{O(\log (n/\e))}$ time and outputs an $\e$-multiplicative approximation to the partition function of the anisotropic $\mathrm{XXZ}$ model (see \defref{informal XXZ}) in the ferromagnetic regime, i.e. when $J_{ij}^{zz}\geq |J_{ij}|$, and $\mu$ is an arbitrary constant.
\end{thm}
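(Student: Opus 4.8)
\medskip

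The plan is to run Barvinok's polynomial-interpolation method with the \emph{fugacity} of the magnetic field, rather than $\b$, as the extrapolation variable. The first step is to exhibit the relevant polynomial: since both $X_iX_j+Y_iY_j$ and $Z_iZ_j$ commute with the total magnetization $S^z=\sum_i Z_i$, the field-free part $H_0:=H(\mu)+\mu S^z$ is block-diagonal in the eigenbasis of $S^z$, so with $\lambda:=e^{\b\mu}$,
\ba\label{eq:xxz-poly}
Z_{\b}(H(\mu))=\Tr\!\big[e^{-\b H_0+\b\mu S^z}\big]=\sum_{m} c_m\,\lambda^{m},\qquad c_m:=\Tr_{S^z=m}\!\big[e^{-\b H_0}\big]>0 ,
\ea
where $m$ ranges over the $O(n)$ eigenvalues of $S^z$ and each $c_m$ is strictly positive as the trace of a positive operator restricted to a subspace. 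After clearing the common power of $\lambda$ this is, up to a monomial, a polynomial of degree at most $n$ in $\lambda^2$ with strictly positive coefficients; moreover conjugating $H(\mu)$ by the global spin flip $\prod_i X_i$ sends it to $H(-\mu)$, so $Z_{\b}(H(\mu))=Z_{\b}(H(-\mu))$ and the polynomial is self-inversive, $c_{-m}=c_m$.

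The heart of the argument is a \textbf{quantum Lee--Yang theorem}: in the ferromagnetic regime $J^{zz}_{ij}\geq|J_{ij}|$, every zero of $\lambda\mapsto Z_{\b}(H(\mu))$ lies on the unit circle $|\lambda|=1$, and by the self-inversive symmetry it suffices to rule out zeros with $|\lambda|>1$. I would prove this by an Asano-contraction argument adapted to the $\mathrm{XXZ}$ bond. One first checks that the single-bond Gibbs operator $e^{-\b h_{ij}}$, with $h_{ij}=-J_{ij}(X_iX_j+Y_iY_j)-J^{zz}_{ij}Z_iZ_j$, decouples into the block $\{|00\rangle,|11\rangle\}$ (on which the $XY$ term vanishes) and the hopping block $\{|01\rangle,|10\rangle\}$, and that attaching local fugacities $u,v$ to the two sites produces a bond polynomial with the single-bond Lee--Yang property — no zeros with $|u|,|v|>1$ — precisely when $e^{2\b J^{zz}_{ij}}\geq\cosh(2\b J_{ij})$, which is implied by $J^{zz}_{ij}\geq|J_{ij}|$; one then glues the bonds by repeated Asano contractions, each of which preserves the absence of zeros outside the unit polydisk, and finally sets all fugacities equal to $\lambda$. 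I expect this step to be the main obstacle: unlike the classical Ising case one cannot simply invoke the classical Lee--Yang circle theorem, because the bond operators do not commute, so both the single-bond estimate and the correctness of the contraction scheme on the full matrix-valued partition function have to be set up with care, and this is where $J^{zz}\geq|J|$ enters essentially. (If a direct Asano argument proves too delicate, a fallback is to establish the needed half-plane zero-freeness via a Lieb--Sokal-type multi-affine criterion applied directly to \eqref{eq:xxz-poly}.)

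Given the zero-free region, the remainder is the standard Barvinok wrapper, exactly as in \thmref{extrapolation algorithm}. For a fixed nonzero real $\mu$ (say $\mu>0$; the case $\mu<0$ is symmetric) the fugacity $\lambda=e^{\b\mu}$ is a positive real bounded away from $1$, and the straight segment from the trivial high-field point $\lambda_0=e^{\b\mu_0}$ (with $\mu_0$ large, where $Z_{\b}\approx e^{\b\mu_0 n}\,e^{\b\sum_{(i,j)\in E}J^{zz}_{ij}}$ is essentially explicit) down to $\lambda$ stays inside the open region $\{|\lambda|>1\}$ at a constant distance from the unit circle. One computes the first $O(\log(n/\e))$ Taylor coefficients of $\log Z_{\b}(H(\mu))$ along this segment; each is a sum over connected sub-clusters of the interaction graph of size $O(\log(n/\e))$, of which there are $n^{O(\log(n/\e))}$, each evaluated exactly on a subsystem of that size, for a total running time $n^{O(\log(n/\e))}$. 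Combined with the crude bound $|\log Z_{\b}(H(\mu))|=O(n)$ (immediate from positivity and size bounds on the $c_m$), Barvinok's estimate — after a conformal map taking the zero-free region to a disk — shows that the truncated series approximates $\log Z_{\b}(H(\mu))$ to within $\log(1+\e)$, which yields the claimed $\e$-multiplicative approximation.
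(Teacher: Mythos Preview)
Your overall architecture---write $Z_\b(H(\mu))$ as a degree-$n$ polynomial in the fugacity $\lambda=e^{\b\mu}$ via conservation of $S^z$, locate its zeros on $|\lambda|=1$ by a quantum Lee--Yang theorem, then run Barvinok---matches the paper. Two points deserve comment.

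\textbf{Lee--Yang step.} The paper does not attempt a direct Asano contraction on the quantum bond operators. Instead it cites the Suzuki--Fisher result, which proceeds by a Suzuki--Trotter quantum-to-classical mapping: one gets a sequence of classical $4$-local spin models whose partition functions converge uniformly to $Z_\b(H(\mu))$, applies a generalized (classical) Lee--Yang theorem to each, and then invokes Hurwitz to transfer the zero-free region to the limit. Your Asano-on-matrices approach may well be workable for the $\mathrm{XXZ}$ case (and you are right that $J^{zz}_{ij}\ge|J_{ij}|$ is exactly what the single-bond check needs), but you flag it yourself as the delicate part; the paper simply sidesteps this by quoting the Trotterized classical argument.

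\textbf{Computing the Taylor coefficients --- here is a real gap.} Your description ``each is a sum over connected sub-clusters of the interaction graph of size $O(\log(n/\e))$'' is the mechanism for the $\b$-extrapolation at $\b=0$, where $\Tr[H^k]$ splits into products of local traces. It does \emph{not} apply to derivatives in $\mu$ (equivalently $\lambda$): those are cumulants of $S^z$ in the full Gibbs state at finite $\b$, and on an arbitrary interaction graph there is no reason they localize to small subgraphs. The paper uses a completely different and much simpler mechanism, which is special to the $\mathrm{XXZ}$ model. The $k$-th coefficient of the polynomial in $\lambda$ is
\[
q_k=\Tr_{\cH_k}\!\big[e^{-\b H_0}\big],
\]
where $\cH_k$ is the Hamming-weight-$k$ sector of dimension $\binom{n}{k}=n^{O(k)}$. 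One writes down the $\binom{n}{k}\times\binom{n}{k}$ block of $H_0$ on $\cH_k$ and exponentiates it exactly, in time $n^{O(k)}$. Expanding the polynomial around $\lambda=0$ (equivalently, by spin-flip, around $\lambda=\infty$) means the first $K$ Taylor data are precisely $q_0,\dots,q_K$, so $K=O(\log(n/\e))$ gives the stated running time. Note this works on \emph{arbitrary} graphs with no degree bound, which your cluster-counting would also need but does not obviously have. Replacing your cluster paragraph with this magnetization-sector computation fixes the proof.
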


\subsection{Sketch of our techniques}\label{sec:Sketch of our techniques}

\paragraph{Sketch of the proof for \thmref{informal extrapolation algorithm}} The basis of our algorithm in \thmref{informal extrapolation algorithm} is the following observation. It is computationally easy to find the partition function and its derivatives at $\b=0$. Note that in a system of $n$ qudits, $Z_{\b=0}(H)=d^n$ and its derivatives are
\ba\label{eq:a1}
\frac{d^k Z_{\b}(H)}{d \b^k}\Bigr|_{\b=0}=(-1)^k \Tr[H^k].
\ea

Since the local Hamiltonian $H$ equals $\sum_{i=1}^m H_i$ for some $m=\poly(n)$, its $k$th power $H^k$ is also the sum of $n^{O(k)}$ many local terms, i.e.
\ba
H^k=\sum_{j=1}^{n^{O(k)}} H^{(k)}_j,
\ea
where $H^{(k)}_j$ is a product of $k$ local terms $H_i$. Each of the new terms $H^{(k)}_j$ acts on a region that is at most $k$ times larger than the support of the original terms $H_i$ which is still some constant. We can find $\Tr[H^k]$ by adding $n^{O(k)}$ many terms like $\Tr[H^{(k)}_j]$, which allows us to compute the derivatives \eqref{eq:a1} in time bounded by $n^{O(k)}$. 

How can the solution at $\b=0$ be used to estimate the one at some non-zero $\b$? We use a technique due to Barvinok \cite{Barvinok_permanent,barvinok_clique} that has been applied to similar counting problems. The idea is to \emph{extrapolate} this solution at $\b=0$ to find $Z_{\b}(H)$ at some non-zero $\b$ where the problem is non-trivial. The extrapolation is done simply by using a \emph{truncated Taylor expansion} of $\log Z_{\b}(H)$ at $\b=0$. Since our goal is to find the partition function with some $\e$-multiplicative error, it is sufficient to estimate $\log Z_{\b}(H)$ within $\e$-additive error.

The main barrier to the reliability of this algorithm is establishing the fast convergence of the Taylor expansion. Such a Taylor expansion is only valid when $\log Z_{\b}(H)$ remains a complex-analytic function, meaning the extrapolation is done along a path contained in the zero-free region. This is precisely the condition stated in \defref{informal absence of zeros}. Under this assumption, the Taylor theorem along with the bound $|\log Z_{\b}(H)|\leq O(n)$ that we get from being in the zero-free region give
\ba
\left|\log Z_{\b}(H)- \sum_{k=0}^{K-1} \frac{1}{k!}\frac{d^k \log Z_{\b}(H)}{d \b^k}\Bigr|_{\b=0}\right|\leq c_1 n e^{-c_2 K}
\ea
for some constants $c_1,c_2$ (see \propref{taylor_bounded} in the body for details). The running time of computing the terms in this expansion is dominated by that of finding the derivatives which, as mentioned earlier, takes time $n^{O(K)}$. To get an additive error of $\e$ for $\log Z_{\b}(H)$, it suffices to choose $K=O(\log(n/\e))$ resulting in a quasi-polynomial time algorithm.

The running time of this algorithm depends exponentially on the distance between the zeros and the extrapolation path. This allows us to clearly see why our algorithm fails beyond the phase transition point. If we try to extrapolate to $\b\geq\b_c$, we need to find a zero-free region that avoids the ``armor" of zeros that are concentrated around the real axis at $\b_c$. This results in a zero-free region with a vanishing width. Hence, the running time blows up, which matches our expectation from the $\NP$ hardness result above $\b_c$ \cite{sly_sun}.

\paragraph{Sketch of the proof for \thmref{informal decay from zero}} The technique used in the proof of \thmref{informal decay from zero} is inspired by the extrapolation idea of \thmref{informal extrapolation algorithm} and also the proof of the similar statement for the classical systems due to \cite{Dobrushin1}. 

For any given disjoint observables $O_1$ and $O_2$, we define a function $f(\b)$ that measures the correlation between them. This function is defined in a slightly different way than the \emph{covariance} form in \eqref{eq:a3} and is tuned to have specific properties. In particular, we show that at $\b=0$, the value of this function is zero, i.e. $f(0)=0$. This is expected intuitively since the system is in the maximally mixed state at $\b=0$ and particles are distributed independently at random. However, we further show that the low order derivatives of this function up to $O(\dist(O_1,O_2))$ are all zero at $\b=0$, i.e. 
\ba
\frac{d^k f(\b)}{d\b^k}\Bigr|_{\b=0}=0,\quad \text{for\ } k=0,1,\dots, O(\dist(O_1,O_2)).
\ea
Hence, this function looks very flat around the origin. Additionally, we prove that $f(\b)$ is an analytic function in the zero-free region. Finally, we show that this together with the constraints on the derivatives imply that the value of $f(\b)$, which shows how correlated $O_1$ and $O_2$ are, remains exponentially small when moving from the origin to a constant $\b$. 

This gives us an upper bound $\propto n\exp(-\dist(O_1,O_2)/\xi)$ on the amount of correlation. The extra factor of $n$ makes this bound exponentially small when $\dist(O_1,O_2)=\Omega(\log n)$.

\begin{rem}
Even with the extra factor of $n$, our bound remains useful for algorithmic applications such as in \cite{Brandao_gibbs_preparing}. There one needs to split the system into computationally tractable smaller pieces and solve the problem for those pieces locally. The error of this strategy can be bounded using the exponential decay of correlations. To keep this error less than $1/\poly(n)$, one needs to choose the distances to be $O(\log n)$ which is the regime that our result covers. 
\end{rem}

We are able to remove the constraint $\dist(O_1,O_2)=\Omega(\log n)$ in certain instances. This includes when the Hamiltonian consists of commuting terms or when it is defined on a $1$D chain. In both cases, using either the commutativity of local terms or the quantum belief propagation \cite{hastings_belief_propagation} (refer to \propref{Quantum belief propagation} in the body for the precise statement), we show that by removing the interaction terms acting on particles that are far from the observables $O_1$ and $O_2$, the correlations between $O_1$ and $O_2$ do not change by much. Hence, the system size reduces to the number of particles in the vicinity of the two observables. This number replaces the prefactor $n$ we had before and is negligible compared to the exponential factor $\exp(-\dist(O_1,O_2)/\xi)$. Thus, for these systems, the decay of correlations holds even when $\dist(O_1,O_2)$ is a constant. 

\paragraph{Sketch of the proofs for \thmref{informal high t disk of no zeros} and \thmref{informal zeros from decay}} We first introduce a core idea which plays a central role in the proofs of both \thmref{t_higH_temp} and \thmref{informal decay from zero}. For ease of notation, we denote the partition function of a geometrically-local Hamiltonian $H$ defined over a $D$-dimensional lattice $\L\subset \mathbb{Z}^D$ by $Z_{\b}(\L)$. The particles are located on the vertices of this lattice. 

In \thmref{informal high t disk of no zeros}, our goal is to show that $Z_{\b}(\L)\neq 0$ inside a disk of radius $\b_0$, i.e. for $\b\in \bbC$ where $|\b|\leq \b_0$ for some constant $\b_0$. We consider a series of sublattices $\emptyset=\L_0\subset \L_1\subset \L_2\subset \dots \subset \L_n=\L$ such that each sublattice $\L_i$ has one fewer vertex than $\L_{i+1}$. By convention, we let $Z_{\b}(\emptyset)=1$. As long as the sublattice $\L_i$ has only a constant number of particles, we can always ensure $Z_{\b}(\L_i)\neq 0$ by choosing $\b$ to be a sufficiently small constant. One might worry that by adding more particles, the partition function vanishes. Our main contribution is to prove this does not happen. We do so by showing that the partition function after involving new particles cannot become smaller than a constant fraction of the partition function before adding the particles. In other words, we show there exists a constant $c> 1$ such that
\ba
|Z_{\b}(\L_{i+1})|\geq c^{-1} |Z_{\b}(\L_{i})|,\quad i\in\{1,2,\dots,n-1\}.\label{eq:a5}
\ea
By repeatedly applying this bound, we obtain the following exponentially small (yet sufficiently large for our purposes) lower bound on the partition function of the whole system
\ba
|Z_{\b}(\L)|\geq c^{-n} |Z_{\b}(\L_1)|.
\ea
This leads to the bound given in \thmref{informal high t disk of no zeros}. This lower bound is obtained using a method known as the \emph{cluster expansion}. These expansions are widely used in statistical physics to study the high temperature behavior of classical and quantum many-body systems. The cluster expansion we use is due to Hastings \cite{Hastings_solving_gapped_locally,Kastoryano_locality}, which represents the operator $\exp(H)$ as sum of products of local terms $H_i$. This allows us to express $Z_{\b}(\L_{i+1})$ in terms of $Z_{\b}(\L_{i})$ plus some small correction terms that account for the interaction terms acting on the added particle. Our main contribution is to use an inductive proof to connect such a decomposition to the lower bound \eqref{eq:a5} (see the proof of \thmref{t_higH_temp} in the body for details). 

A similar strategy is used in the proof of \thmref{informal zeros from decay} which closely follows the proof of the same statement for translationally-invariant systems in \cite{Dobrushin1}. We essentially show a similar bound to \eqref{eq:a5} on how much the partition function can shrink after adding new particles. Here, instead of cluster expansions, we use the exponential decay of correlations to show such a lower bound. However, notice that the decay of correlations is a property of the system at a \emph{real} $\b$, whereas we want to bound the absolute value of the partition function at some \emph{complex} $\b$. There are multiple steps in the proof before we can get around this issue. 

One crucial step is to reduce the proof of the analyticity of the free energy to a condition that roughly speaking (see \propref{fraction with projectors} for the details) states that changing the value of a spin in the system only causes a small relative change in the partition function of the system even for complex $\b$. We prove this by isolating the effect of this spin flip from the rest of the system using the decay of correlations. This requires removing the imaginary part of $\b$ for all the interactions in the vicinity of the flipped spin and bounding the resulting error. 

This overall approach involves a subtle use of the boundary conditions in the spin system. In the quantum case, this means applying local projectors to the Gibbs state before evaluating the partition function. These projectors can in general be entangled which makes using this proof technique more challenging for quantum systems.

\paragraph{Sketch of the proof for \thmref{informalalg for xxz}} Thus far we have only considered complex zeros of the partition function as a function of $\b$. These are often called Fisher zeros \cite{Fisher}. One can, however, fix $\b$ and consider the partition function as a function of other parameters in the Hamiltonian. When that parameter is the strength of the external magnetic field denoted by $\mu$, these zeros are called Lee-Yang zeros \cite{Lee-yang}. In a pioneering result, Lee and Yang showed that for ferromagnetic systems, the locus of these zeros can be exactly determined and they are all on the imaginary axis in the complex $\mu$-plane.

A generalization of this theorem has been proved for a class of $2$-local quantum systems including the anisotropic Heisenberg model \cite{Suzuki_XYZ}. The result follows by mapping the quantum system to a classical spin system and applying a Lee-Yang type argument to the classical model. 

Knowing the location of the complex zeros, we use the extrapolation algorithm to estimate the solution at a constant $\mu$ by finding the low-order derivatives of the partition function at $\mu=0$. We can apply this to the quantum $\mathrm{XXZ}$ model given in \eqref{eq:a4}. 

\subsection{Previous work }

\subsubsection{Classical statistical physics and combinatorial counting}

The Gibbs distribution and partition function appear naturally in combinatorial optimization, statistical physics, and machine learning. In particular, the classical \emph{Ising model} has been studied extensively within these areas. These studies have cultivated in various probabilistic and deterministic approximation algorithms for this model and its variants. In the following, we summarize some of these results.

Most notable and the first rigorously proven efficient algorithm for the Ising model is the result of Jerrum and Sinclair \cite{Jerrum_ising} that uses a Markov chain Monte Carlo (MCMC) sampling algorithm to estimate the partition function in the ferromagnetic regime on arbitrary graphs. More generally, it has been shown that one can set up Markov chains for sampling from the Gibbs distribution that mix rapidly \emph{if} and \emph{only if} the correlations decay exponentially. This is known as the equivalence of mixing in time and mixing in space \cite{weitz2004mixing} .  

Another approach uses the decay of correlations in the Gibbs distribution. This property essentially allows one to decompose the interaction graph of the system into smaller computationally tractable pieces, and then combine the results of the computation on those pieces to find the overall partition function. In contrast to the MCMC approach, algorithms based on the decay of correlations can be deterministic -- even in the regime where no MCMC algorithm is known. This approach, for instance, has lead to efficient deterministic algorithms for the hard-core model up to the hardness threshold \cite{Weitz} and the antiferromagnetic Ising model \cite{Sinclair_antiferro}.

There is a recent conceptually different approach to estimating the partition function, which is the basis of this work. This approach views the partition function as a high-dimensional polynomial and uses the truncated Taylor expansion to extend the solution at a computationally easy point to a non-trivial regime of parameters. Since its introduction \cite{Barvinok_book}, this method has been used to obtain deterministic algorithms for various interesting problems such as the ferromagnetic and antiferromagnetic Ising models \cite{Liu_ising,Regts_ising} on bounded graphs.

As mentioned before, the equivalence of the analyticity of the free energy and the decay of correlations was first proved by Dobrushin and Shlosman \cite{Dobrushin1}. The Fisher zeros of the classical Ising model and their relation to the correlation decay was also recently studied in \cite{Liu_zeros}.  

\subsubsection{Quantum many-body systems}
The problem of estimating the partition function and correlation decay in quantum systems has also been studied in the past. We review some of these results here. 

There are various results (e.g., \cite{poulin2009sampling,chowdhury_gibbs_sampling}) that estimate the partition function by sampling from the Gibbs state using a quantum computer (also known as quantum Gibbs sampling). The best known bound on the running time of these algorithms is exponential in the number of particles. This running time can be reduced if we assume other conditions. For example, \cite{kastoryano_commuting} shows that a strong form of the decay of correlations implies an efficient quantum Gibbs sampler for commuting Hamiltonians. If in addition to the decay of correlations, we add the decay of quantum conditional mutual information, then this result can be extended to non-commuting Hamiltonians \cite{Brandao_gibbs_preparing}. Turning these quantum algorithms into classical ones results in an $n^{\polylog(n)}$ running time. Although we cannot directly compare these results with our algorithm due to different conditions that are imposed, the $n^{O(\log n)}$ running time that we achieve outperforms that of these algorithms.

Considering the success of approximation schemes for the classical statistical problems, it is desirable to import those results to evaluate the thermal properties of interacting quantum many-body systems. This indeed can be done for some models like the quantum transverse field Ising model \cite{bravyi_classicalmapping} or the quantum $\mathrm{XY}$ model \cite{Bravyi_ferro} in the ferromagnetic regime using what is called the \emph{quantum-to-classical mapping}.

Establishing the decay of correlations in the Gibbs state has also been studied in quantum settings. In particular, it has been shown that the Gibbs state has this property in the $1$D translationally invariant case \cite{araki_1d} or above some constant temperature in higher dimensions \cite{Kastoryano_locality}. Thus, in these regimes, there exist efficient representations for the state of the system using a tensor network ansatz like matrix product states or projected entanglement pair states \cite{Hastings_solving_gapped_locally,Kastoryano_locality,Cirac_gibbs_peps}. However, this does not necessarily imply an efficient algorithm that finds and faithfully manipulates these tensor networks.

The decay of conditional mutual information is another property of the Gibbs state that has been rigorously proved for $1$D systems \cite{Brandao_gibbs_cmi} and conjectured for higher dimensions. This result has been used to find algorithmic schemes for preparing the Gibbs state on a quantum computer \cite{Brandao_gibbs_preparing} or estimating the free energy in $1$D \cite{Kim_gibbs,gibbs_one_dim}. A recent result of \cite{Kohtaro_cmi_cluster} uses cluster expansions along with a technique very similar to the one we use in \thmref{informal decay from zero} (i.e. showing the low-order derivatives of the correlation function are zero) to establish the decay of conditional mutual information above some constant temperature. 

\subsection{Discussion and open questions}

Our work raises many questions that we leave for future work. Here we mention some of them.

\renewcommand\labelitemi{$\vcenter{\hbox{\tiny$\bullet$}}$}

\begin{enumerate}
\item
Perhaps the most immediate problem is to fully establish (or refute) the connection between the decay of correlations and the absence of zeros. There are at least two directions to pursue.

\begin{enumerate}
    \item It would be interesting to prove the exponential decay of correlations in the zero-free region of non-commuting Hamiltonians in higher dimensions. Currently we can only show this when the distance of the observables is $\Omega(\log n)$. It seems for this to work, the region of applicability of certain tools such as the quantum belief propagation needs to be extended to the complex regime.

    \item Establishing the absence of zeros in quantum systems when the correlations decay exponentially is also open. A first step might be to prove this for commuting Hamiltonians or $1$D chains. In \secref{Exponential decay of correlations gives the absence of complex zeros}, we have already extended some parts of the proof of this statement for the classical systems to commuting Hamiltonians, but it seems to complete the proof, a more careful analysis of the entangled boundary conditions is required.
\end{enumerate}

\item While we focus on the covariance form of the correlations \eqref{eq:a3}, one can also consider quantum conditional mutual information (qCMI) as a measure of correlations. Using the absence of zeros to prove the decay of qCMI is another interesting question. This would extend the result of \cite{Kohtaro_cmi_cluster} to lower temperatures down to the phase transition point. Since the approach of \cite{Kohtaro_cmi_cluster} resembles some of the techniques we use, this looks like a promising direction. 

\item Is there some range of temperatures or Hamiltonian parameters that a quantum computer cannot efficiently sample from the Gibbs state but the extrapolation technique still works? At least, when the parameter of interest is temperature, this depends on the fate of the previous questions we mentioned, i.e. showing that the decay of correlations and qCMI are necessary for the absence of zeros. The result of \cite{Brandao_gibbs_preparing} implies an efficient quantum sampler under the same conditions. Are there other parameters besides temperature for which one can show a separation between these notions?

\item Is it possible to improve the lower bound we obtained for the critical point $\b_c$ in \thmref{informal high t disk of no zeros} without using other conditions such as the decay of correlations? In general, what is the computational hardness of determining the thermal phase transition point $\b_c$? 

\item Can the running time of our algorithm be improved for specific systems to polynomial time? This has been achieved for the classical Ising model \cite{Liu_ising,Regts_ising} by relating the derivatives of the partition function to combinatorial objects that can be efficiently counted. 

\item Can we use the extrapolation idea to avoid the \emph{sign problem}? The easy regime, which includes the starting point of the extrapolation, could be a regime of parameters where the Hamiltonian is sign-free and MCMC algorithms yield a good estimate, whereas the end point is where the sign problem exists. A candidate parameter for extrapolation is the chemical potential. There are important physical systems such as lattice gauge theories for which at zero chemical potential the partition function is sign-free while there is a severe sign problem for non-zero chemical potentials.

\item Barvinok's approach has been used to obtain approximation algorithms for other problems related to quantum computing \cite{Mehraban_permanent,mann2018_approximation,bravyi2019_meanvalues}. Are there other relevent applications for this method?
\end{enumerate}

\section{Preliminaries and notation}
\subsection{Local and geometrically-local Hamiltonians}
Consider a $D$-dimensional lattice $\L \subset \mathbb{Z}^D$ containing $n$ sites with a $d$-dimensional particle (qudit) on each site. The Hilbert space is $\cH=\Ot_{i \in \L} \cH_i$ where $\cH_i$ is the local Hilbert space of site $i$. For a region $A\subseteq\L$, we denote its size by $|A|$ and its complement by $\bar{A}$. The diameter of $A$ is defined to be $\diam(A)=\sup\{\dist(x,y): x,y\in A\}$. The interaction of these particles is described by a local Hamiltonian $H$ that has the following form:
\ba\label{eq:local_decom}
H=\sum_{X\subset \L} H_X.
\ea
Each term $H_X$ is a Hermitian operator with operator norm at most $h$ that is acting non-trivially only on the sites in $X$. We denote this by writing $\supp(H_X)=X$. The local terms $H_X$ do not necessarily commute with each other. Similarly, we define $H_A=\sum_{X \subseteq A} H_X$ to be the Hamiltonian restricted to a region $A\subseteq \L$. We denote the number of local terms in the Hamiltonian by $m$ and often also write $H=\sum_{i=1}^m H_i$. The $1$-norm of an operator $O$ is denoted by $\norm{O}_1$ and its operator norm by $\norm{O}$. 

In order to impose geometric locality on the interactions between the particles, we consider the interactions that satisfy the following condition. 
\begin{definition}[Geometrically-local Hamiltonians]\label{def:geometrically-local Hamiltonians}
A Hamiltonian $H=\sum_{X\subset \L} H_X$ such that $\supp(H_X)=0$ when $\diam(X)>R$ or $|X|> \k$ is called a $(\k,R)$-local Hamiltonian. We call $\k$ the locality and $R$ the range of $H$. We use the words geometrically-local and $(\k,R)$-local interchangeably when $\k,R$ are kept constant. 
\end{definition}
This should be contrasted with the case where $\supp(H_{X})=0$ when $|X|>\k$ but there is no restriction on $\diam(X)$. In order to distinguish between these two, we use the terms \emph{geometrically-local} versus \emph{local} throughout this paper. We also focus mostly on geometrically-local Hamiltonians with a \emph{finite range} $R$, but most of our results also apply to Hamiltonians with interactions that decay fast enough, for example, with some exponential rate. 
\begin{rem}
In general, the locality $\k$ of a geometrically-local Hamiltonian on a $D$-dimensional lattice $\L$ can be bounded as $\k\leq O(D^{R})$, which is the size of a ball of diameter $R$. Nevertheless, we treat both $\k$ and $R$ as independent parameters in this paper.
\end{rem}

For the Hamiltonians we consider, the sum $|\sum_{X\cap \{x_0\} \neq \emptyset}H_X|$ is bounded from above by a constant like $O(hD^{\k R})$, but in general, this is a loose bound and we introduce the \emph{growth constant} as an independent parameter such that:
 
\begin{definition}[Growth constant]\label{def:growth constant}
Given a geometrically-local Hamiltonian $H$, the growth constant $g$ is defined such that $|\sum_{X\cap \{x_0\} \neq \emptyset}H_X|\leq gh$ for all sites $x_0\in \L$. 
\end{definition}
 
Given a $(\k,R)$-local Hamiltonian $H$, we denote the boundary of a region $A \subseteq \L$ by $\partial A$ and define it as $\partial A=\{v\in \L\setminus A: \exists v'\in A,\ \dist(v-v')\leq R\}$. Defined this way, the boundary of $A$ is a subset of $\bar{A}$.

For local Hamiltonians with $\k=2$, we define an \emph{interaction graph} which is an undirected graph $G=(V,E)$ with a qudit on each vertex $v\in V$ and an edge $(i,j)$ between any two vertices $i,j$ that are acted on by a local term in the Hamiltonian. For qubits, $d=2$ and such a Hamiltonian is of the following form:

\ba\label{eq:u2_local}
H=-\sum_{\substack{(i,j)\in E\\a,b\in\{x,y,z\}}} J_{ij}^{ab} \s_a \ot \s_b- \sum_{\substack{i\in V \\a\in \{x,y,z\}}} h^{a}_i \s_a,
\ea
where $J_{ij}^{ab}, h_i^a \in \bbR$ are the interaction coefficients and $\s_a\in\{X,Y,Z,\iden\}$ are Pauli matrices.

\subsection{Quantum thermal state and partition function}
The free energy of state $\r$ at inverse temperature $\b$ is defined as 
\ba
F_{\b}(\r)=\Tr(H\r)-\frac{1}{\b}S(\r)\nonumber,
\ea
where $S(\r)=-\Tr(\r\log \r)$ is the von Neumann entropy of $\r$ (here and throughout this paper, we assume $\log$ denotes the natural logarithm). In thermal equilibrium, the free energy of the system is minimized. Using the non-negativity of the relative entropy $S(\r \|\frac{e^{-\b H}}{Z(\b)})\geq 0$, one can see that 
\ba \label{eq:convex_free_energy}
\min_{\r} F_{\b}(\r)&=\min_{\r}\Tr(H\r)-\frac{1}{\b}S(\r)\\\nonumber
&=-\frac{1}{\b}\log(Z_{\b}(\L)), 
\ea
where $Z_{\b}(H)=\Tr[\exp(-\b H)]$ is the partition function of the system at inverse temperature $\b$. When dealing with spin systems on a lattice, we often denote the partition function of the system by $Z_{\b}(\L)$ rather than $Z_{\b}(H)$. 

Furthermore, the state that achieves this minimization, known as the Gibbs (or thermal) state, is given by
\ba
\r_{\b}(H)=\frac{\exp(-\b H)}{Z_{\b}(H)}.
\ea
We often need to consider the Gibbs state after some measurement has been performed on a local region of the lattice. The post-selected state $\r_{\b}(H|N)$ associated with a positive operator $N$ is given by
\ba
\r_{\b}(H|N)=\frac{\sqrt{N}\exp(-\b H)\sqrt{N}}{\Tr[\exp(-\b H)N]}.
\ea
\subsection{Quantum belief propagation}
Suppose certain local terms in Hamiltonian $H$ are removed and consider the Gibbs state before and after this change. We would like to relate these Gibbs states by applying a local operator on the old state to obtain the new one. This has been addressed before in \cite{hastings_belief_propagation} under the name \emph{quantum belief propagation}. We only mention this result without the proof and refer the reader to \cite{hastings_belief_propagation,Brandao_gibbs_cmi} for the derivation and more details. 

\begin{prop}[Quantum belief propagation]\label{prop:Quantum belief propagation}
Let $H$ be a geometrically-local Hamiltonian on lattice $\L$. Consider a sublattice $C\subset \L$. We denote the terms in $H$ acting on both $C$ and $\bar{C}$ by $H_{\partial C}$.
There exists a quasi-local operator $\eta$ such that 
\ba\label{eq:p0}
e^{-\b H}=\eta e^{-\b (H-H_{\partial C})} \eta ^{\dag},
\ea
where $\norm{\eta} \leq \exp(\beta/2 \norm{H_{\partial C}})$. Moreover, there exists a truncation of $\eta$ denoted by $\eta_{\ell}$ supported non-trivially only on $\partial C$ and sites within distance $\ell$ from $\partial C$ such that for some positive constants $\a_1,\a_2$,
\ba\label{eq:p1}
\Norm{\eta - \eta_{\ell}} \leq e^{\a_1 |\partial C|-\a_2 \ell}.
\ea
\end{prop}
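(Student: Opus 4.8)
The plan is to construct $\eta$ as the flow of an operator-valued ODE obtained by switching on the boundary terms along a one-parameter interpolation, in the style of Hastings' quantum belief propagation. First I would set $V=H_{\partial C}$ and $H_s=(H-V)+sV$ for $s\in[0,1]$, so that $H_0=H-H_{\partial C}$ and $H_1=H$, and write $\r_s=e^{-\b H_s}$. Differentiating with Duhamel's formula gives
\ba
\partial_s\r_s=-\int_0^{\b}e^{-\tau H_s}\,V\,e^{-(\b-\tau)H_s}\,d\tau.
\ea
The crucial step is to rewrite the right-hand side in the ``belief-propagation form'' $O_s\r_s+\r_s O_s^{\dag}$ for a Hermitian $O_s$. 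Passing to the eigenbasis of $H_s$, one checks that this forces $O_s$ to act on the matrix element between eigenvectors of energies $E_a,E_b$ by multiplication with $-f_{\b}(E_a-E_b)$, where $f_{\b}(\omega)=\tanh(\b\omega/2)/\omega$ (equal to $\b/2$ at $\omega=0$); equivalently $O_s=-\int_{-\infty}^{\infty}\hat f_{\b}(t)\,e^{\mathrm{i}tH_s}\,V\,e^{-\mathrm{i}tH_s}\,dt$, with $\hat f_{\b}$ the Fourier transform of $f_{\b}$. Letting $\eta_s$ be the forward time-ordered exponential solving $\partial_s\eta_s=O_s\eta_s$, $\eta_0=\iden$ (a well-posed linear ODE, since the Hilbert space is finite-dimensional and $s\mapsto O_s$ is continuous and bounded), a one-line computation shows $\partial_s(\eta_s\r_0\eta_s^{\dag})=O_s(\eta_s\r_0\eta_s^{\dag})+(\eta_s\r_0\eta_s^{\dag})O_s^{\dag}$, so $\eta_s\r_0\eta_s^{\dag}$ and $\r_s$ solve the same equation with the same initial value; hence $\eta:=\eta_1$ satisfies \eqref{eq:p0}.

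For the norm bound $\norm{\eta}\leq\exp(\b\norm{H_{\partial C}}/2)$, I would use that $f_{\b}$ is real, is maximized at $f_{\b}(0)=\b/2$, and is positive-definite: its partial-fraction expansion writes it as a nonnegative combination of Lorentzians $1/(a^2+\omega^2)$, whose Fourier transforms are nonnegative, so $\hat f_{\b}\geq 0$ and $\norm{\hat f_{\b}}_1=f_{\b}(0)=\b/2$. Hence $\norm{O_s}\leq\norm{\hat f_{\b}}_1\norm{V}=(\b/2)\norm{H_{\partial C}}$ uniformly in $s$, and $\norm{\eta}\leq\exp\big(\int_0^1\norm{O_s}\,ds\big)\leq\exp(\b\norm{H_{\partial C}}/2)$.

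For the locality bound \eqref{eq:p1}, the idea is to truncate $O_s$ in two stages and then compare flows. The same partial-fraction expansion shows $f_{\b}$ is analytic in the strip $|\mathrm{Im}\,\omega|<\pi/\b$, so $\hat f_{\b}(t)$ decays like $e^{-\pi|t|/\b}$ and the $t$-integral can be cut at $|t|\leq T$ with error $e^{-\Omega(T)}$. By a Lieb--Robinson bound, $e^{\mathrm{i}tH_s}\,V\,e^{-\mathrm{i}tH_s}$ — with $V=H_{\partial C}$ a sum of $O(|\partial C|)$ local terms near $\partial C$ — is approximated within error $C|\partial C|\,e^{-(\ell-v|t|)/\xi_0}$ by an operator supported on $\partial C$ and the sites within distance $\ell$ of $\partial C$. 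Optimizing $T\asymp\ell$ yields a truncated $O_s^{(\ell)}$ supported in that region with $\norm{O_s-O_s^{(\ell)}}\leq e^{\a_1'|\partial C|-\a_2'\ell}$. Taking $\eta_{\ell}$ to be the time-ordered exponential of $O_{s'}^{(\ell)}$ — which is automatically supported in that region, being a norm limit of products of operators supported there — and using the Duhamel identity for the difference of two flows,
\ba
\norm{\eta-\eta_{\ell}}\leq e^{\b\norm{H_{\partial C}}}\int_0^1\norm{O_s-O_s^{(\ell)}}\,ds\leq e^{\a_1|\partial C|-\a_2\ell},
\ea
where the factor $e^{\b\norm{H_{\partial C}}}=\exp(O(|\partial C|))$ coming from the norm bounds on the propagators of $O$ and $O^{(\ell)}$ is absorbed into $\a_1$.

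The main obstacle — the only place the argument is more than bookkeeping — is this last step: getting the prefactor in \eqref{eq:p1} down to $e^{\a_1|\partial C|}$ requires both the sharp operator bound $\norm{O_s}\leq(\b/2)\norm{H_{\partial C}}$ (rather than something that grows with the lattice size) and a careful accounting of the Lieb--Robinson tails summed over the $O(|\partial C|)$ constituent terms of $H_{\partial C}$, balanced against the $e^{-\pi|t|/\b}$ decay of $\hat f_{\b}$, so that the Lieb--Robinson velocity $v$ and the strip width $\pi/\b$ combine into a decay rate $\a_2\asymp\pi/(\pi\xi_0+v\b)$. The remaining ingredients — well-posedness of the ODEs (finite dimension), the product identity for $\partial_s(\eta_s\r_0\eta_s^{\dag})$, and closure of the local region under products — are routine.
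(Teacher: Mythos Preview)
The paper does not give its own proof of this proposition; it explicitly defers to \cite{hastings_belief_propagation,Brandao_gibbs_cmi} for the derivation. Your construction is precisely the standard Hastings belief-propagation argument carried out in those references: interpolate $H_s=(H-V)+sV$, use Duhamel to identify the filter $f_{\b}(\omega)=\tanh(\b\omega/2)/\omega$, build $O_s$ as the $\hat f_{\b}$--smeared Heisenberg evolution of $V$, integrate the flow $\partial_s\eta_s=O_s\eta_s$, and control locality by combining the exponential decay of $\hat f_{\b}$ (from analyticity of $f_{\b}$ in the strip $|\mathrm{Im}\,\omega|<\pi/\b$) with Lieb--Robinson. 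The details you supply---positive-definiteness of $f_{\b}$ via its Mittag--Leffler expansion as a sum of Lorentzians, the sharp bound $\|\hat f_{\b}\|_1=f_{\b}(0)=\b/2$ yielding $\norm{O_s}\leq(\b/2)\norm{H_{\partial C}}$, and the Duhamel comparison of the two flows with the $e^{O(|\partial C|)}$ propagator norm absorbed into $\a_1$---are all correct and match those derivations. So your proposal is both correct and is the intended proof-by-reference.
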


\subsection{Tools from complex analysis}\label{sec:interpolation}
Given a function that is analytic in a region of the complex plane, i.e. it is complex differentiable, we are interested in approximating the function in that region with a low-degree polynomial. Conventional methods in complex analysis allow us to achieve this using a Taylor expansion around a point inside that region. 

\begin{definition}[Taylor expansion of analytical functions]
Given a complex function $f(z)$ that is analytic in a region $A$, the Taylor expansion of $f(z)$ around a point $z_0 \in A$ is a power series $\sum_{k=0}^{\infty} a_k (z-z_0)^k$, where for $\forall k=0,1,\dots$
\ba
a_k=\frac{1}{k!}\frac{d^kf(z_0)}{dz^k}=\frac{1}{2\pi i}\oint_C \frac{f(w)}{(w-z_0)^{k+1}}dw\label{eq:4}
\ea
for an arbitrary contour $C$ around $z_0$ inside the region $A$. 
\end{definition}

In \secref{Extrapolating from high external fields and Lee-Yang zeros}, we map the partition function of a quantum system to that of a classical system. As we increase the precision of the mapping, we get a family of classical systems with increasing size that in the limit of an infinite number of particles have the same partition function as the quantum system. The following theorem guarantees that the zero-free region of the classical ensemble coincides with that of the original quantum system. 
\begin{thm}[Multivariate Hurwitz's theorem]\label{thm:Multivariate Hurwitz' theorem}
If a sequence of multivariate functions $f_1,f_2,f_3,\dots$ are analytic and non-vanishing on a connected open set $D\subset \bbC^n$ and converge uniformly on compact subsets of $D$ to $f$, then f is either non-vanishing on $D$ or is identically zero.
\end{thm}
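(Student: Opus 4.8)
The plan is to reduce the multivariate statement to the classical one-variable Hurwitz theorem by restricting $f$ to a suitably chosen complex line and applying Rouch\'e's theorem there. First I would record the preliminary fact that the limit $f$ is itself analytic on $D$: since each $f_j$ is analytic and $f_j\to f$ uniformly on compact subsets of $D$, writing $f_j$ via the Cauchy integral formula on small polydiscs contained in $D$ (handled one variable at a time, or directly with the multivariate Cauchy kernel) shows that $f$ is analytic and, moreover, that every partial derivative of $f_j$ converges locally uniformly to the corresponding partial derivative of $f$. In particular the hypotheses are stable under restriction to lines, which is what the rest of the argument exploits.

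Next I would invoke (or quickly reprove) the one-variable Hurwitz theorem: if $g_j$ are analytic and non-vanishing on a connected open set $U\subseteq\bbC$, converge uniformly on compact subsets to $g$, and $g$ has a zero at $\lambda_0\in U$ with $g\not\equiv 0$, then the zeros of $g$ are isolated, so one can choose a small circle $\gamma\subset U$ around $\lambda_0$ on which $|g|\geq m>0$; for $j$ large enough $|g_j-g|<m$ on $\gamma$, and Rouch\'e's theorem then forces $g_j$ to have at least as many zeros inside $\gamma$ as $g$ does, i.e.\ at least one, contradicting that $g_j$ is non-vanishing. Hence such a $g$ is either non-vanishing on $U$ or identically zero.

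Now I would argue by contradiction: suppose $f\not\equiv 0$ on $D$ but $f(a)=0$ for some $a\in D$. If the restriction of $f$ to \emph{every} complex line through $a$ were identically zero in a neighborhood of $a$, then $f$ would vanish on an open ball about $a$, and by the identity theorem on the connected open set $D$ we would conclude $f\equiv 0$, a contradiction; therefore there is a direction $v\in\bbC^n\setminus\{0\}$ such that $\lambda\mapsto f(a+\lambda v)$ is not identically zero on the connected component $U$ of $\{\lambda\in\bbC:\ a+\lambda v\in D\}$ containing $0$. Setting $g_j(\lambda)=f_j(a+\lambda v)$ and $g(\lambda)=f(a+\lambda v)$, each $g_j$ is analytic and non-vanishing on $U$ (because $f_j$ is non-vanishing on all of $D$), $g_j\to g$ uniformly on compact subsets of $U$ (since compact subsets of $U$ map to compact subsets of $D$), and $g(0)=f(a)=0$ while $g\not\equiv 0$. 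This contradicts the one-variable Hurwitz theorem, and the proof is complete.

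The hard part will not be any deep estimate but rather the bookkeeping around connectedness: one must pass to the correct connected component $U$ of the line-slice so that the one-variable theorem applies, and one must invoke the identity theorem correctly so that "f vanishes near $a$'' is upgraded to "f vanishes on all of $D$.'' The only other point to double-check is that the locally-uniform convergence genuinely descends to each line, which it does because any compact subset of $U$ has compact image inside $D$.
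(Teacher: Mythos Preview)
Your argument is correct and is precisely the standard textbook proof: reduce to one variable by slicing along a complex line through a putative zero and apply Rouch\'e. The paper itself does not supply a proof of this theorem at all; it simply states the result and refers the reader to a standard complex analysis text, so there is nothing to compare against beyond noting that your write-up is exactly what one finds in such references. The bookkeeping points you flag (passing to the connected component of the line-slice, and using the identity theorem to upgrade local vanishing to global vanishing on the connected $D$) are handled correctly.
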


The proof can be found in standard complex analysis textbooks \cite{gamelin_book_complex}.

\section{Algorithm for estimating the partition function}\label{sec:Algorithm for estimating the partition function}
In this section, we provide more details about the approximation algorithm that we presented in \secref{intro}. 

\begin{definition}
An approximation algorithm for the partition function $Z_{\b}(H)$ takes as input the description of the local Hamiltonian $H$, the inverse temperature $\b$, and a parameter $\e$ and gives an estimate $\tilde{Z}_{\b}(H)$ with $\e$-multiplicative error, i.e.
\ba
\left|\tilde{Z}_{\b}(H)-Z_{\b}(H)\right|\leq \e Z_{\b}(H).
\ea
This is, up to unimportant constants, equivalent to finding an $\e$-additive error for $\log Z_{\b}(H)$ or $F_{\b}(H)$.
\end{definition}

We now make a connection between analyticity of functions and approximation algorithms precise. Similar propositions were first proved by \cite{Barvinok_book} for bounded degree polynomials.

Suppose we want to estimate the value of a complex function $f(z)$. We consider two cases. One is when there is an upper bound on the absolute value of the function in the region that $f(z)$ is analytic. The other is when the given function is $f(z)=\log(g(z))$ for a polynomial $g(z)$ of degree $n$. The latter is used in \secref{An algorithm for the anisotropic xxz model} when studying the $\mathrm{XXZ}$ model. We need the former version since as we will see in \thmref{extrapolation algorithm}, the partition function of quantum (or even some classical) systems is not always a polynomial in $\b$. 
\begin{prop}[Truncated Taylor series for bounded functions and polynomials]\label{prop:taylor_bounded} We denote a disk of radius $b$ centered at the origin in the complex plane by $\D_b$, that is $\D_b=\{z\in\bbC: |z|\leq b\}$.
\begin{itemize}
    \item[(1)] Let $f(z)$ be a complex function that is analytic and bounded as $|f(z)|\leq M$ when $z\in\D_b$ for a constant $b>1$. Then the error of approximating $f(z)$ by a truncated Taylor series of order $K$ for all $|z|\leq 1$ is bounded by 
\ba
\left|f(z)-\sum_{k=0}^{K} a_k z^k\right|\leq \frac{M}{b^K(b-1)},\quad |z|\leq 1.\label{eq:o2}
\ea

\item[(2)] Assume $b$ is fixed and there is a deterministic algorithm that finds the coefficients $a_k$ in time $O(N^k)$ for some  parameter $N$. Then there exists a deterministic algorithm with running time $N^{O(\log(M/\e))}$ that outputs an $\e$-additive approximation for $f(z)$.

\item[(3)] [cf. \cite{Barvinok_book}] Let $f(z)=\log(g(z))$ for some polynomial $g(z)$ of degree $N$ that does not vanish when $z\in \D_b$. The error of approximating $f(z)$ by a truncated Taylor series of order $K$ for $|z|\leq 1$ is bounded by $\frac{N}{K+1}\frac{1}{b^K(b-1)}$.
\item[(4)] [cf. \cite{Barvinok_book}] Assuming $b$ is fixed, there exists a deterministic algorithm with running time $N^{O(\log(N/\e))}$ that outputs an $\e$-additive approximation for $\log(g(z))$.
\end{itemize}
\end{prop}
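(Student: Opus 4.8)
The plan is to dispatch the four parts in order using only elementary complex analysis: parts (1)--(2) are a soft estimate from Cauchy's inequality, and parts (3)--(4) are Barvinok's sharper bound, which exploits that $g$ is a polynomial of degree $N$.

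\textbf{Parts (1) and (2).} First I would expand $f(z)=\sum_{k\ge 0}a_k z^k$ with $a_k$ given by the Cauchy integral formula \eqref{eq:4} over a circle $|w|=r$, and then let $r\uparrow b$ using $|f|\le M$ on $\D_b$ to get the Cauchy estimate $|a_k|\le M b^{-k}$. For $|z|\le 1$ the truncation error is then
\[
\Bigl| f(z)-\sum_{k=0}^{K} a_k z^{k}\Bigr|\;\le\;\sum_{k=K+1}^{\infty} M b^{-k}\;=\;\frac{M}{b^{K+1}}\cdot\frac{1}{1-b^{-1}}\;=\;\frac{M}{b^{K}(b-1)},
\]
which is \eqref{eq:o2}. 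For (2), choose $K$ with $M/(b^{K}(b-1))\le\e$; since $b>1$ is a fixed constant this requires only $K=O(\log(M/\e))$. Running the assumed procedure to produce $a_0,\dots,a_K$ costs $\sum_{k=0}^{K}O(N^{k})=N^{O(K)}=N^{O(\log(M/\e))}$, and evaluating the truncated polynomial at $z$ is negligible; the algorithm outputs this value.

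\textbf{Parts (3) and (4).} Here I would factor $g(z)=g(0)\prod_{j=1}^{N}(1-z/\zeta_j)$ over the roots $\zeta_1,\dots,\zeta_N$ of $g$. Since $g$ is zero-free on the simply connected set $\D_b$, every $|\zeta_j|>b$, so on $\D_b$ one may take the branch $\log g(z)=\log g(0)+\sum_{j=1}^{N}\log(1-z/\zeta_j)$ with each summand the principal logarithm (legitimate because $|z/\zeta_j|<1$ there). Expanding $\log(1-w)=-\sum_{k\ge1}w^{k}/k$ shows that the $k$-th Taylor coefficient of $\log g$ at $0$ is $a_k=-\tfrac1k\sum_{j}\zeta_j^{-k}$ for $k\ge1$, hence $|a_k|\le\tfrac{N}{k}b^{-k}$, and so for $|z|\le1$
\[
\Bigl|\log g(z)-\sum_{k=0}^{K} a_k z^{k}\Bigr|\;\le\;\sum_{k=K+1}^{\infty}\frac{N}{k}\,b^{-k}\;\le\;\frac{N}{K+1}\sum_{k=K+1}^{\infty}b^{-k}\;=\;\frac{N}{K+1}\cdot\frac{1}{b^{K}(b-1)},
\]
the stated bound. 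For (4), the only step that is not a one-line calculation is that one cannot factor $g$ exactly; instead I would recover the $a_k$ from the coefficients $c_i$ of $g$ via $g'(z)=g(z)\sum_{k\ge1}k a_k z^{k-1}$, i.e.\ the recurrence $a_0=\log g(0)$ and $a_m=\frac{1}{m c_0}\bigl(m c_m-\sum_{k=1}^{m-1}k\,a_k c_{m-k}\bigr)$ (with $c_m=0$ for $m>N$), which produces $a_0,\dots,a_K$ in time $\poly(N,K)$, or in time $N^{O(K)}$ when obtaining each $c_k$ itself costs $N^{O(k)}$ as in the partition-function application. Taking $K=O(\log(N/\e))$ makes the part-(3) error at most $\e$, so the total running time is at most $N^{O(\log(N/\e))}$.

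I do not expect a substantive obstacle here — this is the standard Barvinok machinery. The two points that need care are (i) justifying the Cauchy estimate at the boundary radius $b$ in part (1), which is handled by the routine $r\uparrow b$ limiting argument together with continuity of $f$ up to $\partial\D_b$; and (ii) in parts (3)--(4), verifying that the branch of $\log g$ is globally well defined on $\D_b$ (it is, because $\D_b$ is simply connected and $g$ is zero-free there) and replacing the conceptual ``factor $g$'' step by the explicit coefficient recurrence so that the algorithm is genuinely effective.
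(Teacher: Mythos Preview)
Your proof is correct and follows essentially the same route as the paper: for (1) the paper bounds the Cauchy integral remainder directly while you bound $|a_k|\le Mb^{-k}$ and sum the tail, which is the same estimate; for (3)--(4) both you and the paper factor $g$, expand each $\log(1-z/\zeta_j)$, and recover the $a_k$ from the $g$-coefficients via the Leibniz/Newton recurrence (the paper writes it in terms of derivatives, you in terms of coefficients).
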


\begin{proof}
The proof of (1) is a basic result in complex analysis based on the Cauchy integral theorem for analytic functions. Let $C'$ be the circle $|z|=b$ that contains both $z$ and $z=0$. We have
\ba
f(z)=&\frac{1}{2\pi i}\oint_{C'} \frac{f(w)}{w-z} dw=\frac{1}{2\pi i}\oint_{C'}\frac{f(w)}{w} \left(1-\frac{z}{w}\right)^{-1} dw \nonumber\\
&=\frac{1}{2\pi i}\oint_{C'} \frac{f(w)}{w} \left( \sum_{k=0}^{K}\left(\frac{z}{w}\right)^k+\left(\frac{z}{w}\right)^{K+1}\left(1-\frac{z}{w}\right)^{-1}\right) dw\nonumber \\
&=\sum_{k=0}^K\frac{f^{(k)}(0)}{k!}z^k+\frac{1}{2\pi i}\oint_{C'} \frac{f(w)}{w-z}\left(\frac{z}{w}\right)^{K+1}dw\nonumber,
\ea
in which we used Eq. \eqref{eq:4} to get to the last line. We can now bound the remainder as
\ba
\left|f(z)-\sum_{k=0}^K\frac{f^{(k)}(0)}{k!}z^k\right|&\leq \frac{1}{2\pi}\oint_{C'} \frac{|f(w)|}{|w-z|}\left(\left|\frac{z}{w}\right|\right)^{K+1}dw.\nonumber\\
&\leq M\frac{b}{b-1}\left(\frac{1}{b}\right)^{K+1},
\ea
where the last line follows from the fact that $|w-z|\geq b-1$, $|z|\leq 1$, and $|f(w)|\leq M$ on $C'$.

The proof of part (3) is similar to that of (1). The degree $N$ polynomial $g(z)$ has at most $N$ complex roots $\{\z_k\}_{k=1}^N$ such that $|\z_k|\geq b$. Thus, $g(z)=g(0)\prod_{l=1}^{N}(1-\frac{z}{\z_l})$ and
\ba
\log(g(z))=\log(g(0))+\sum_{l=1}^{N} \log\left(1-\frac{z}{\z_l}\right),\quad \forall z:\ |z|\leq 1.
\ea
We can expand each term like $\log(1-\frac{z}{\z_l})$ as 
\ba\label{eq:o1}
\log\left(1-\frac{z}{\z_l}\right)=-\sum_{k=1}^K \frac{z^k}{k\z_l^k}+q_{\ell}(z),
\ea
where similar to  part (1), we see that $q_{\ell}(z)$ is a term that can be bounded by
\ba
|q_{\ell}(z)|\leq \frac{1}{K+1}\frac{1}{b^K(b-1)}.
\ea
Hence, the remainder term in the Taylor expansion of $\log(g(z))$ up to order $K$ is $q(z)=\sum_{{\ell}=1}^{N} q_{\ell}(z)$, which is bounded by $|q(z)|\leq \frac{N}{K+1}\frac{1}{b^K(b-1)}$ as claimed in part (3).

In order to find the algorithms of part (2) and (4), we need to evaluate the Taylor coefficients of $f(z)$ up to some degree $K$. Since we want an $\e$-additive approximation of $f(z)$, one can see from parts (1) and (2) that it is sufficient to keep the Taylor expansion until order $K=O(\log(\frac{M}{\e}))$ for part (2) and $K=O(\log(\frac{N}{\e}))$ for part (4). To be able to evaluate the derivatives $\frac{d^k f(z)}{dz^k}$, we express them in terms of the derivatives of $g(z)$, i.e. $\frac{d^k g(z)}{dz^k}$ \footnote{We are using the same definition $f(z)=\log(g(z))$ for the function in part (1) as well.}. This can be done by noticing that 
\ba
\frac{d^k g(z)}{dz^k}=\sum_{\ell=0}^{k-1}\binom{k-1}{\ell}\frac{d^{\ell} g(z)}{dz^{\ell}}\frac{d^{k-\ell}f(z)}{dz^{k-\ell}},
\ea
so if we have access to $\frac{d^k g(z)}{dz^k}$, we can find $\frac{d^k f(z)}{dz^k}$ by solving the system of equations in time $\poly(k)$. The important step, however, is to estimate $\frac{d^k g(z)}{dz^k}$. This by assumption takes time $N^{O(k)}$ for the $k$th derivative. Thus, evaluating the Taylor expansion in parts (2) and (4) can be done in time $N^{O(\log(M/\e))}$ and $N^{O(\log(N/\e))}$, respectively. 
\end{proof}

\begin{thm}[Extrapolation algorithm for estimating the partition function]\label{thm:extrapolation algorithm}
There exists a deterministic classical algorithm that runs in time $n^{O(\log(n/\e))}$ and outputs an estimate within $\e$-multiplicative error of the partition function $Z_{\b}(H)$ at some constant $\b$ in the zero-free region  $\Omega_{\d,\b}$ (see \defref{informal absence of zeros}).
\end{thm}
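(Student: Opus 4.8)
The plan is to carry out the Barvinok-style interpolation sketched in \secref{Sketch of our techniques}. Set $f(\b)=\log Z_{\b}(H)$; an $\e$-additive approximation to $f(\b)$ is, up to harmless constants, the same as an $\e$-multiplicative approximation to $Z_{\b}(H)$. The ingredients are: (i) the Taylor coefficients of $f$ at $\b=0$ can be computed exactly, with the $k$-th one costing time $n^{O(k)}$; (ii) on the zero-free region $\Omega_{\d,\b}$ the function $f$ is analytic and bounded in absolute value by $O(n)$ (this is exactly \defref{informal absence of zeros}); and (iii) by \propref{taylor_bounded}, analyticity together with the $O(n)$ bound force the order-$K$ truncated Taylor series of $f$ to approximate $f(\b)$ with error $O(n)/(b^{K}(b-1))$, which is $\leq\e$ once $K=O(\log(n/\e))$. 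Assembling these gives the algorithm, with running time dominated by the $n^{O(K)}=n^{O(\log(n/\e))}$ cost of computing the derivatives.

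\emph{Step 1: derivatives at the origin.} By \eqref{eq:a1}, $\frac{d^{k}Z_{\b}(H)}{d\b^{k}}\big|_{\b=0}=(-1)^{k}\Tr[H^{k}]$. Expanding $H=\sum_{i=1}^{m}H_{i}$ with $m=\poly(n)$, the operator $H^{k}$ is a sum of $m^{k}=n^{O(k)}$ products $H_{i_{1}}\cdots H_{i_{k}}$, each supported on a set of at most $\kappa k=O(k)$ qudits; the trace of such a product equals $d^{\,n-O(k)}$ times the trace of a $d^{O(k)}\times d^{O(k)}$ matrix, computable in $2^{O(k)}$ time. Hence $\Tr[H^{k}]$, and thus the $k$-th derivative of $Z_{\b}(H)$ at $0$, is obtained in time $n^{O(k)}$. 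Since $Z_{0}(H)=d^{n}\neq 0$, the derivatives of $f=\log Z$ at $0$ are then recovered from those of $Z$ via the recursion $\frac{d^{k}Z}{d\b^{k}}=\sum_{\ell=0}^{k-1}\binom{k-1}{\ell}\frac{d^{\ell}Z}{d\b^{\ell}}\frac{d^{k-\ell}f}{d\b^{k-\ell}}$, solved in time $\poly(k)$, exactly as in the proof of \propref{taylor_bounded}.

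\emph{Step 2: from a tube to a disk.} The catch is that \propref{taylor_bounded} controls the truncation error on a \emph{disk} $\D_{b}$ centered at the origin, whereas $\Omega_{\d,\b}$ is only a constant-width tube around the segment $[0,\b]$ and $\b$ may well exceed $\d$. There are two equivalent remedies. One is to precompose with a fixed polynomial $\phi$ of degree $O(\b/\d)=O(1)$ with $\phi(0)=0$ and $\phi(z^{\star})=\b$ for some $|z^{\star}|\leq 1$, chosen so that it maps some disk $\D_{b}$ with constant $b>1$ into $\Omega_{\d,\b}$; then $g=f\circ\phi$ is analytic and bounded by $O(n)$ on $\D_{b}$, so \propref{taylor_bounded}(1)--(2) applies to $g$ verbatim, and the Taylor coefficients of $g$ at $0$ are polynomial combinations (Fa\`{a} di Bruno) of the derivatives of $f$ at $0$ computed in Step 1 together with the known coefficients of $\phi$. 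The other is to chain $O(\b/\d)=O(1)$ local Taylor expansions along the segment with step length $<\d$, carrying the first $K$ derivatives from each center to the next and tracking the error, which stays exponentially small in $K$ across the constantly many steps. Either way we obtain an $\e$-additive estimate of $f(\b)$ in time $n^{O(\log(n/\e))}$.

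The main obstacle is Step 2 and its bookkeeping: one must arrange the change of variables (or the chaining) so that only the given tube $\Omega_{\d,\b}$---rather than a large disk centered at $0$---is used, while still making the order-$K$ remainder $\leq\e$. Everything else is either a hypothesis we are allowed to assume (analyticity and the $|\log Z_{\b}(H)|\leq O(n)$ bound from \defref{informal absence of zeros}), a direct locality computation ($\Tr[H^{k}]$), or a routine power-series manipulation ($Z$-derivatives to $\log Z$-derivatives, and the multiplicative-to-additive reduction).
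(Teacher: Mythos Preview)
Your proposal is correct and follows essentially the same approach as the paper: compute the derivatives of $Z_{\b}(H)$ at $\b=0$ in time $n^{O(k)}$, convert them to derivatives of $\log Z_{\b}(H)$ via the recursion in \propref{taylor_bounded}, and handle the tube-versus-disk mismatch by precomposing with a constant-degree polynomial $\phi$ mapping a disk $\D_b$ (with $b>1$ constant) into $\Omega_{\d,\b}$, so that \propref{taylor_bounded}(1)--(2) applies directly to $f\circ\phi$. The paper cites Lemma~2.2.3 of \cite{Barvinok_book} for the existence of such a $\phi$, and does not mention your alternative chaining argument, but otherwise the structure is identical.
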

\begin{proof}[\textbf{Proof of \thmref{extrapolation algorithm}}]
We apply the truncated Taylor expansion. To use that result, we first need to specify the zero-free region and then bound the running time of computing the $k$th derivative by $n^{O(k)}$.

We can without loss of generality assume that the zero-free region $\Omega_{\d,\b}$ is a rectangular region of constant width and size depicted in \fig{zeros_fisher}. The result of \propref{taylor_bounded}, however, holds when the zero-free region is a disk of radius $b$. To match these domains, we can compose the partition function with a function $\phi(z)$ that maps a disk of radius $b$ to the rectangular region $\Omega_{\d,\b}$ such that $\phi(0)=0$ and $\phi(1)=\b$ and $b$ is constant depending on $\d$. It is shown in Lemma 2.2.3 of \cite{Barvinok_book} that one can find such a $\phi(z)$ which is a constant degree polynomial. Hence, the composed partition function is non-zero and bounded on this disk and we can apply the bound \eqref{eq:o2} on the Taylor expansion. 

As mentioned in \secref{Sketch of our techniques}, for a system of $n$ qudits, we can compute the order $k$ derivatives of $Z_{\b}(H)$ in time $n^{O(k)}$. Similarly, we can evaluate the derivatives of $Z_{\b}(H)$ composed with the constant-degree polynomial $\phi(z)$ using the same running time. Keeping only $k=O(\log(n/\e))$ many terms results in a quasi-polynomial algorithm with multiplicative error $\e$.
\end{proof}
\section{Lower bound on the critical inverse temperature}\label{sec:High temperatures: Fisher zeros}
In this section, we show that at high temperatures, there are no complex zeros near the real axis. More precisely, we prove that there exists a disk of constant radius $\b_0$ centered at $\b=0$ that does not contain any zeros and the free energy is analytic inside it. The radius $\b_0$ depends only on the geometric parameters of the Hamiltonian such as the growth constant.

\begin{thm}[High temperature zeros]\label{thm:t_higH_temp}
Let $H$ be a gometrically-local Hamiltonian on qudits with range $R$, growth constant $g$, and local interactions with norm at most $h$ (see \defref{geometrically-local Hamiltonians} and \defref{growth constant}). There exists a real constant $\b_0= 1/(5egh\k)$ such that for all $\b\in \bbC$ with $|\b| \leq \b_0$, the partition function $Z_{\b}(\L)$ of $H$ does not vanish and $\log(Z_{\b}(\L))$ is analytic and bounded by $\big|\log|Z_{\b}(\L)|\big|\leq (e^2gh|\b|+\log d)n$.
\end{thm}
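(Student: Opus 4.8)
The plan is to follow the inductive strategy sketched in the introduction: build $\L$ one site at a time through a chain $\emptyset = \L_0 \subset \L_1 \subset \cdots \subset \L_n = \L$, and show that adding a single site can only shrink the partition function by a bounded multiplicative factor, so that $|Z_\b(\L)| \geq c^{-n}|Z_\b(\L_1)| > 0$ for $|\b| \leq \b_0$. To make this quantitative I would first record Hastings' cluster expansion for $e^{-\b H}$: writing $H = H_{\L_i} + V$ where $V = \sum_{X: X \cap \{x_{i+1}\} \neq \emptyset, X \not\subseteq \L_i} H_X$ is the collection of new interaction terms touching the freshly added site $x_{i+1}$, one expands $\Tr[e^{-\b H_{\L_{i+1}}}]$ as $\Tr[e^{-\b H_{\L_i}}]$ (tensored with the trace $d$ over the new qudit) plus a sum of connected cluster terms, each term being a product of a bounded number of local $H_X$'s, at least one of which touches $x_{i+1}$. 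The norm of the order-$k$ contribution is controlled by $(\text{combinatorial count}) \cdot (h|\b|)^k$, and the combinatorial count of size-$k$ clusters anchored at $x_{i+1}$ is at most roughly $(eg\k)^k$ by the standard tree/growth-constant counting bound. Summing the geometric series in $k$ converges precisely when $h|\b| \cdot eg\k < 1/e$ or so, which is where the constant $\b_0 = 1/(5egh\k)$ comes from.

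Next I would turn this expansion into the ratio bound $|Z_\b(\L_{i+1})| \geq c^{-1}|Z_\b(\L_i)|$. The cluster expansion gives something of the form $Z_\b(\L_{i+1}) = Z_\b(\L_i) \cdot d \cdot \big(1 + \text{(sum of clusters)}\big)$ — but this is not literally true because the clusters need not factor through $Z_\b(\L_i)$; they involve terms straddling $x_{i+1}$ and $\L_i$. The clean way is to take logarithms: the cluster expansion actually represents $\log Z_\b(\L_{i+1}) - \log Z_\b(\L_i) - \log d$ directly as a convergent sum over connected clusters that all contain $x_{i+1}$, and each such cluster of size $k$ contributes at most $(h|\b|)^k$ times its multiplicity. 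The total over all $k \geq 1$ is then bounded by a geometric-type series $\sum_{k \geq 1} (eg\k)^k (h|\b|)^k \leq e^2 gh|\b|$ for $|\b| \leq \b_0$ (choosing the constant $5$ generously so the series is dominated by its first term up to the factor $e^2$). This gives both $|\log Z_\b(\L_{i+1}) - \log Z_\b(\L_i) - \log d| \leq e^2 gh|\b|$, hence telescoping over $i$ yields $|\log Z_\b(\L) - n\log d| \leq e^2 gh|\b| n$, which is exactly the claimed bound $\big|\log|Z_\b(\L)|\big| \leq (e^2 gh|\b| + \log d)n$, and simultaneously shows $Z_\b(\L)$ never vanishes in the disk and $\log Z_\b(\L)$ is analytic there (being a uniformly convergent sum of analytic functions of $\b$).

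The main obstacle I anticipate is making the cluster expansion and its convergence fully rigorous with the correct constants — in particular, verifying that the increment $\log Z_\b(\L_{i+1}) - \log Z_\b(\L_i) - \log d$ really is given by the sum over connected clusters anchored at the new site (so that the "new site" bookkeeping is valid at every step of the induction, not just once), and pinning down the growth-constant counting bound $\#\{\text{clusters of size } k \text{ through } x_{i+1}\} \leq (eg\k)^k$ or similar. One has to be careful that the local terms $H_X$ are operators that do not commute, so the expansion of $e^{-\b H}$ is the ordered (Dyson-series / multicommutator) expansion rather than a naive product expansion; the norm bounds still go through because $\|H_X\| \leq h$ and the number of terms at each order is combinatorially controlled, but the algebra is more delicate than in the classical case. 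I would lean on the precise form of the cluster expansion as developed in \cite{Hastings_solving_gapped_locally,Kastoryano_locality} and simply cite its convergence radius, then do the telescoping/induction carefully. Analyticity and the non-vanishing then follow for free from uniform convergence on the disk $|\b| \leq \b_0$, and the explicit constant $\b_0 = 1/(5egh\k)$ is chosen to leave comfortable slack in the geometric series.
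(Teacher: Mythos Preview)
Your overall strategy --- telescope over sites and use the Hastings-style cluster expansion to control each increment --- matches the paper's. However, there is a genuine gap at the point where you write ``the cluster expansion actually represents $\log Z_\beta(\Lambda_{i+1}) - \log Z_\beta(\Lambda_i) - \log d$ directly as a convergent sum over connected clusters that all contain $x_{i+1}$.'' This is not what the Hastings--Kastoryano expansion gives you, and the paper does not claim it. What the expansion actually yields is
\[
\frac{1}{d}\frac{Z_\beta(\Lambda)}{Z_\beta(\Lambda\setminus X_0)} \;=\; 1 \;+\; \sum_{\substack{\mathcal{X}\ni x_0\\ \text{connected}}} W_\beta(\mathcal{X})\,\frac{1}{d}\,\frac{Z_\beta(\Lambda\setminus\operatorname{supp}(\mathcal{X}))}{Z_\beta(\Lambda\setminus X_0)},
\]
so each correction term carries a ratio of partition functions on \emph{smaller} sublattices, not simply a cluster weight. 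You cannot bound these ratios by a fixed geometric-series coefficient; you need the induction hypothesis itself (applied to the lattice $\Lambda\setminus X_0$ of size $n-1$) to control them, and that produces an extra factor $e^{e^2 gh|\beta|\,|\operatorname{supp}(\mathcal{X})\setminus X_0|}$ in each term.

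This extra factor, which grows with the \emph{support} of the cluster rather than just its cardinality $|\mathcal{X}|$, is what forces a second layer of work in the paper: a separate inductive lemma (inducting on cluster size) showing that the weighted sum $\sum_{\mathcal{X}\ni x_0}(e^{|\beta|h}-1)^{|\mathcal{X}|}e^{e^2gh|\beta|\,|\operatorname{supp}(\mathcal{X})|}$ stays below $e(e-1)gh|\beta|$ for $|\beta|\leq\beta_0$. Only after closing that loop can one conclude the per-site increment bound $\big|\log|\tfrac{1}{d}Z_\beta(\Lambda)/Z_\beta(\Lambda\setminus X_0)|\big|\le e^2gh|\beta|$ and telescope. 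Your proposal collapses these two nested inductions into a single geometric-series estimate, which works for classical polymer models where $\log Z$ admits a direct Mayer-type cluster expansion, but that expansion is not what is available here in the quantum non-commuting setting.
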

This gives a lower bound $\b_0\leq \b_c$ on the phase transition point $\b_c$. Also, as outlined in \thmref{extrapolation algorithm}, if we can establish an upper bound like $\big|\log|Z_{\b}(\L)|\big|\leq O(n)$ for small enough complex $\b$, we can devise an approximation algorithm for the partition function. Hence we get
\begin{cor}[Approximation algorithm for the partition function at high temperatures]\label{thm:alg_for_higH_temp}
There exists a quasi-polynomial time algorithm with running time $n^{O(\log(n/\e))}$ that outputs an $\e$-multiplicative approximation to the partition function $Z_{\b}(\L)$ of a geometrically-local Hamiltonian $H$ when $|\b|\leq \b_0$.
\end{cor}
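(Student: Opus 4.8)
The plan is to combine \thmref{t_higH_temp} with the machinery already assembled in \propref{taylor_bounded} and \thmref{extrapolation algorithm}. Concretely, \thmref{t_higH_temp} gives us exactly the two ingredients that the extrapolation algorithm needs in a zero-free region: first, that $Z_{\b}(\L)\neq 0$ for all complex $\b$ with $|\b|\leq\b_0$, so that $\log Z_{\b}(\L)$ is analytic on the disk $\D_{\b_0}$; and second, the quantitative bound $\big|\log|Z_{\b}(\L)|\big|\leq (e^2 g h|\b|+\log d)n = O(n)$ on that disk. Thus the hypotheses of \defref{informal absence of zeros} hold with the neighborhood taken to be the disk $\D_{\b_0}$ itself (which trivially contains the segment $[0,\b]$ for any real $\b\le\b_0$), and $M = O(n)$.

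First I would fix the target inverse temperature $\b$ with $0<\b\le\b_0$ and rescale so that the extrapolation endpoint lands at $1$: set $b = \b_0/\b > 1$ (a constant, since $\b_0$ is a constant and we are interested in constant $\b$), and consider $f(z) = \log Z_{\b z}(\L)$, which is analytic on $\D_b$ and satisfies $|f(z)|\le \big|\,\mathrm{Re}\,f(z)\big| + \big|\,\mathrm{Im}\,f(z)\big|$; the real part is controlled by \thmref{t_higH_temp} and the imaginary part (a branch-of-log issue) is handled exactly as in the proof of \thmref{extrapolation algorithm} by precomposing with the constant-degree polynomial map $\phi$ from Lemma 2.2.3 of \cite{Barvinok_book} — since here the zero-free region is already a disk, $\phi$ can essentially be taken to be the identity up to the rescaling, so $M = O(n)$ persists. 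Then I would invoke \propref{taylor_bounded}(1)--(2): the truncated Taylor series of order $K$ approximates $f$ to additive error $\le M/(b^{K}(b-1)) = O(n)\cdot b^{-K}$, so choosing $K = O(\log(n/\e))$ gives additive error $\e$ in $\log Z_{\b}(\L)$, hence $\e$-multiplicative error in $Z_{\b}(\L)$.

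Next I would verify the running-time claim. The Taylor coefficients $a_k = \tfrac{(-1)^k}{k!}\,\frac{d^k}{d\b^k}\log Z_{\b}(\L)\big|_{\b=0}$ are obtained, as in \secref{Sketch of our techniques}, from the moments $\Tr[H^k]$ via the combinatorial identity relating the derivatives of $\log Z$ to those of $Z$ (Newton's identities / the recursion in the proof of \propref{taylor_bounded}); and $\Tr[H^k]$ is a sum of $n^{O(k)}$ traces of products of $k$ geometrically-local terms, each of which acts on $O(\k k R^D)$ sites and so is computable in time $2^{O(k)}$. Hence each $a_k$ costs $n^{O(k)}$, and summing over $k\le K = O(\log(n/\e))$ gives total running time $n^{O(\log(n/\e))}$, as claimed. (The precomposition with the fixed polynomial $\phi$ only changes constants.)

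The main — and essentially only — obstacle is already discharged by \thmref{t_higH_temp}: establishing the zero-free disk together with the $O(n)$ logarithmic bound, which is the content of the cluster-expansion argument sketched in \secref{Sketch of our techniques}. Given that theorem, the corollary is a routine application of the interpolation framework, and the only mild subtleties are (i) confirming that a real target $\b\in(0,\b_0]$ can be reached by a straight-line path staying strictly inside $\D_{\b_0}$ with constant margin (immediate, since $[0,\b]\subset\D_{\b_0}$ with the closest zero at distance $\ge\b_0-\b$, a constant when $\b$ is bounded away from $\b_0$; if $\b$ is allowed all the way up to $\b_0$ one instead works on a slightly larger disk guaranteed by the proof of \thmref{t_higH_temp} or notes that the bound $\b_0=1/(5egh\k)$ has slack), and (ii) tracking that the bound on $|f|$ is genuinely $O(n)$ and not, say, $O(n\log n)$, which follows directly from the explicit constant in \thmref{t_higH_temp}.
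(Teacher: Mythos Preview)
Your proposal is correct and takes essentially the same approach as the paper: the corollary is presented there without a separate proof, as an immediate consequence of combining the zero-free disk and $O(n)$ bound from \thmref{t_higH_temp} with the extrapolation framework of \thmref{extrapolation algorithm} and \propref{taylor_bounded}. You have simply filled in the routine details (rescaling to place the endpoint at $1$, computing derivatives via $\Tr[H^k]$, choosing $K=O(\log(n/\e))$), and your two flagged subtleties---margin at $\b=\b_0$ and the $|\log|Z||$ versus $|\log Z|$ distinction---are genuine but minor and are glossed over in the paper as well.
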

Before getting to the proof of \thmref{t_higH_temp}, we need to gather some facts and lemmas. Given a lattice $\L\subset \mathbb{Z}^D$ with $n$ sites, we consider a series of sublattices $\L_0\subset \L_1\subset \L_2\subset \L_2\subset \dots \subset \L_n=\L$ such that each sublattice $\L_j$ has one fewer vertex than $\L_{j+1}$ and $\L_0=\emptyset$. The partition function of $\L_0$ is assigned to be $Z_{\b}(\emptyset)=1$ for any complex $\b$. Therefore, we can write 
\ba\label{eq:t2}
Z_{\b}(\L)=d^n \prod_{j=0}^{n-1}\left( \frac{1}{d} \frac{Z_{\b}(\L_{j+1})}{Z_{\b}(\L_{j})}\right),
\ea
where the factors of $d$ are added for later convenience and to account for the dimension of the removed sites. In order to show $\left|\log \left|Z_{\b}(\L)\right|\right|\leq O(n)$ for a $\b \in \bbC$, we just need to bound the logarithm of each of the terms in Eq. \eqref{eq:t2} by a constant, i.e. 
\ba \label{eq:t0}
\left|\log \Big|\frac{1}{d}\frac{Z_{\b}(\L_{j+1})}{Z_{\b}(\L_j)}\Big|\right|\leq O(1).
\ea
This bound tells us how much the partition function changes after removing a single site from the lattice. We later prove this by induction on the number of sites. However, as shown in the following lemma, this inequality is always satisfied when $\b$ is \emph{real}.
\begin{lem}[Site removal bound]\label{lem:site_removal}
The following bound holds for any  $X\subseteq\L$ and $\b \in \bbR^+$:
\ba\label{eq:t31}
\left|\log \Big|\frac{1}{d^{|X|}}\frac{Z_{\b}(\L)}{Z_{\b}(\L\setminus X)}\Big|\right|\leq gh|\b||X|.
\ea
Recall that $h$ is the maximum norm of the local terms $H_X$ in $H$ and the growth constant $g$ is chosen such that $|\sum_{X\cap \{x_0\} \neq \emptyset}H_X|\leq gh$ for all sites $x_0\in \L$. 
\end{lem}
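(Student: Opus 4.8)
The plan is to reduce the general claim to the single-site case and then bound that case directly. First I would note that it suffices to prove the inequality when $X = \{x_0\}$ is a single site: writing $X = \{x_1,\dots,x_t\}$ with $t = |X|$ and inserting the sites one at a time, we get $\frac{1}{d^{|X|}}\frac{Z_\beta(\Lambda)}{Z_\beta(\Lambda\setminus X)} = \prod_{s=1}^{t} \frac{1}{d}\frac{Z_\beta(\Lambda_s)}{Z_\beta(\Lambda_{s-1})}$ for the obvious telescoping chain $\Lambda\setminus X = \Lambda_0 \subset \Lambda_1 \subset \dots \subset \Lambda_t = \Lambda$, so the logarithms add and it is enough to prove $\big|\log\big|\tfrac{1}{d}\tfrac{Z_\beta(\Lambda_s)}{Z_\beta(\Lambda_{s-1})}\big|\big| \le gh|\beta|$ for each single added site. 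So from now on fix a site $x_0$, let $A = \Lambda\setminus\{x_0\}$, and compare $Z_\beta(\Lambda)$ with $Z_\beta(A)$.

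The key observation is that $H_\Lambda = H_A + V$, where $V = \sum_{X\ni x_0} H_X$ collects exactly the terms touching $x_0$, and by the definition of the growth constant $\|V\| \le gh$. Since $\beta\in\mathbb{R}^+$, both $e^{-\beta H_\Lambda}$ and $e^{-\beta H_A}\otimes \iden_{x_0}$ are positive operators, so $Z_\beta(\Lambda) = \Tr[e^{-\beta H_\Lambda}]$ and $d\cdot Z_\beta(A) = \Tr[e^{-\beta H_A}\otimes \iden_{x_0}]$ are both positive reals and we are comparing two positive quantities. I would apply the Golden–Thompson / operator-monotonicity type estimate: $e^{-\beta\|V\|} \, e^{-\beta H_A}\otimes\iden_{x_0} \preceq e^{-\beta(H_A + V)} \preceq e^{\beta\|V\|}\, e^{-\beta H_A}\otimes\iden_{x_0}$. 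This follows because for real $\beta$, conjugation/perturbation by a term of norm $\le gh$ changes every eigenvalue of the exponent by at most $gh$ in the sense that $H_A\otimes\iden - \|V\|\iden \preceq H_\Lambda \preceq H_A\otimes\iden + \|V\|\iden$, and $t\mapsto e^{-\beta t}$ is operator monotone decreasing on the reals so this passes through the exponential, up to the scalar factors $e^{\pm\beta\|V\|}$ pulled out of the additive shift. Taking traces of all three positive operators and using $\|V\|\le gh$ gives $e^{-gh|\beta|} \le \frac{Z_\beta(\Lambda)}{d\,Z_\beta(A)} \le e^{gh|\beta|}$, which is exactly $\big|\log\big|\tfrac{1}{d}\tfrac{Z_\beta(\Lambda)}{Z_\beta(A)}\big|\big| \le gh|\beta|$.

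The main obstacle is getting the operator inequality $e^{-\beta(H_A+V)} \preceq e^{\beta\|V\|} e^{-\beta H_A}$ cleanly: one cannot just multiply exponentials of non-commuting operators, but the version I actually need only requires comparing the \emph{exponents} additively ($H_\Lambda \preceq H_A\otimes\iden + \|V\|\iden$, which is immediate from $V \preceq \|V\|\iden$) and then invoking operator monotonicity of $-e^{-\beta t}$; this sidesteps Golden–Thompson entirely and is where I would be careful. An alternative route, should the operator-monotonicity argument feel delicate, is to use the integral (Duhamel) representation $e^{-\beta(H_A+V)} = e^{-\beta H_A} - \int_0^\beta e^{-(\beta-s)H_A} V\, e^{-s(H_A+V)}\,ds$ together with positivity of all the propagators and $\|V\|\le gh$ to bound $\log$ of the trace ratio by a Grönwall-type argument; but the spectral-shift argument above is the shortest and I expect it to be the one carried out. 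Finally I would reassemble the single-site bounds over the chain to recover the stated inequality with the factor $|X|$ on the right-hand side.
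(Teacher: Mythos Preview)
Your telescoping reduction to the single-site case is clean, and the splitting $H_\Lambda = H_A + V$ with $\|V\|\le gh$ is exactly right. The gap is the step where you invoke ``operator monotonicity of $-e^{-\beta t}$'': the exponential is \emph{not} operator monotone, so the operator inequality $e^{-\beta(H_A+V)} \preceq e^{\beta\|V\|}e^{-\beta H_A}\otimes\iden$ does not follow from $H_A\otimes\iden - \|V\|\iden \preceq H_\Lambda$ and is false in general for non-commuting $H_A,V$. Fortunately you only need the \emph{trace} inequality, and that holds for a more elementary reason: if $A\preceq B$ then by Courant--Fischer $\lambda_k(A)\le\lambda_k(B)$ for every $k$, hence $\Tr[e^A]=\sum_k e^{\lambda_k(A)}\le\sum_k e^{\lambda_k(B)}=\Tr[e^B]$. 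Applying this to $-\beta H_\Lambda \preceq -\beta H_A\otimes\iden + \beta\|V\|\iden$ (and its reverse) yields exactly $e^{-gh\beta}\le \tfrac{Z_\beta(\Lambda)}{d\,Z_\beta(A)}\le e^{gh\beta}$, and your argument goes through.

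The paper takes a different route: it keeps the general $X$ (no telescoping), applies Golden--Thompson $\Tr[e^{A+B}]\le\Tr[e^A e^B]$ to separate the exponential of $H_{\Lambda\setminus X}$ from that of the boundary terms, and then uses H\"older to pull out the operator norm $\|e^{-\beta V}\|\le e^{gh|\beta||X|}$. Once patched, your approach is arguably more elementary since it sidesteps Golden--Thompson entirely; the paper's version handles all of $X$ in one shot, but since the bound is linear in $|X|$ this buys nothing over telescoping.
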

\begin{proof}
We have  
\ba
Z_{\b}(\L)=\Tr_{\L}\big[e^{-\b (H_{\L\setminus X}+\sum_{X'\subset \L:X'\cap X \neq \emptyset} H_{X'})}\big] &\leq \Tr_{\L} \big[e^{-\b H_{\L\setminus X}} e^{-\b \sum_{X'\subset \L:X'\cap X \neq \emptyset} H_{X'}} \big]\nonumber \\
&\leq \Tr_{\L} \big[e^{-\b H_{\L\setminus X}} \big] \Norm{e^{-\b \sum_{X'\subset \L:X'\cap X \neq \emptyset}H_{X'}}}  \nonumber \\
&\leq d^{|X|}\Tr_{\L \setminus X} \big[e^{-\b H_{\L\setminus X}} \big] e^{ \Norm{\b \sum_{X'\subset \L:X'\cap X \neq \emptyset}H_{X'}} } \nonumber \\
&\leq Z_{\b}(\L\setminus X) d^{|X|} e^{gh|\b||X|},\label{eq:t18}
\ea
where $H_{\L \setminus X}$ corresponds to the terms in the Hamiltonian acting on the remaining sublattice $\L \setminus X$. We used the Golden-Thompson inequality in the first line and the H\"older inequality to get to the second line. The factor $d^{|X|}$ is added since the original trace is over the Hilbert space of $\L$ and not $\L \setminus X$. Similarly, one can show $d^{|X|} Z_{\b}(\L\setminus X)\leq Z_{\b}(\L) e^{g h|\b||X|}$. These bounds together prove the lemma.
\end{proof}

\thmref{t_higH_temp} extends bound \eqref{eq:t31} to the case where $\b$ is a small complex number. We prove this in two steps.

\paragraph{First step:} In contrast to the proof of \lemref{site_removal}, the Golden-Thompson inequality can no longer be used in the complex regime. Hence, to compare the partition function before and after removing a site $x_0$, we need to find another way of separating the contribution of the terms in the Hamiltonian that act on $x_0$. We achieve this using a \emph{cluster expansion} for the partition function that expands the operator $\exp (-\b H)$ \footnote{Note that this is different from Taylor expanding $\log (\exp(-\b H))$, which is our eventual goal.} into a sum of products of local terms in $H$. The idea of using cluster expansions to study high temperature properties of classical or quantum spin systems has been widely applied before \cite{kotecky_cluster,Dobrushin_estimates,Park_cluster,greenberg_cluster}. Here, we use a particular version of that expansion which is tailored for our application. This was first introduced in \cite{Hastings_solving_gapped_locally} and later improved and generalized in \cite{Kastoryano_locality}. In \secref{The cluster expansion for the partition function}, we modify the result of \cite{Hastings_solving_gapped_locally,Kastoryano_locality} and adapt it for complex partition functions.
    
\paragraph{Second step:} Our next step is to use the cluster expansion and show that in the partition function, the contribution of the sites acting on $x_0$ compared to the rest of the terms is bounded by a constant. We show this in \secref{Bounding the correction terms in the expansion} by induction on the number of sites. This is our main contribution and lets us prove the bound \eqref{eq:t31}.

\subsection{The cluster expansion for the partition function}\label{sec:The cluster expansion for the partition function}
When using the cluster expansion, we often need to consider products of local terms like $\prod_{j=1}^{\ell} H_{X_j}$, but since the local interaction terms $H_{X_j}$ do not necessarily commute with each other, we set an $\ell$-tuple $(X_1,\dots, X_{\ell})$ to indicate the order of multiplication. We also need to decompose the sequence $(X_1,\dots,X_{\ell})$ into the union of \emph{connected} components. Let us define what we mean by connected more formally.

\begin{definition}[Connected sets]\label{def:connected_set}
Fix a site $x_0 \in \L$. A collection of sublattices such as $\cX=\{X_1,X_2,\dots, X_k\}$ is called a connected set containing $x_0$ with size $|\cX|=k$ if the following conditions hold
\begin{itemize}
    \item[i)] All the sublattices $X_1,X_2,\dots, X_k$ have bounded size and diameter. That is $1\leq |X_i|\leq \k$ and $\diam(X_i)\leq R$. 
    \item[ii)] For any sublattice $X_i$ in $\cX$, a series of other members of $\cX$ connect this set to the site $x_0$. See \fig{connected_set_fig} for an example. More precisely we have: for any $X_i \in \cX$, there exists $I\subseteq [k]$ such that $i\in I$ and  $\forall j\in I, \exists \ell \in I: X_j \neq X_{\ell}$ yet $X_j \cap X_l \neq \emptyset$, and moreover, $x_0 \in \cup_{j\in I}X_j$. 
\end{itemize} Although $\cX$ consists of sublatices of $\L$ and not individual sites, in a slight abuse of notation, we specify a set $\cX$ that contains the site $x_0$ by $x_0 \in \cX$. We denote all the sites that a connected set $\cX$ includes by $\supp(\cX)$.
\end{definition}

\begin{figure}[t!]
\centering
	\includegraphics[width=.25
\textwidth]{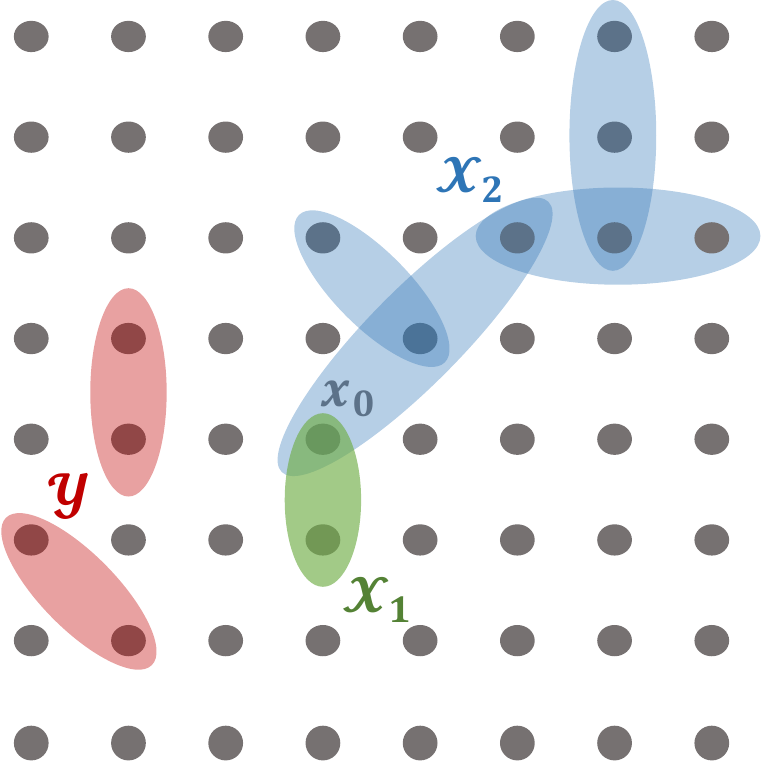}
	\caption{The sets $\cX_1, \cX_2$ are connected, contain $x_0$, and have size $|\cX_1|=1, |\cX_2|=4$. However, the set $\cY$ is not connected, does not include $x_0$, and has size $|\cY|=2$.}\label{fig:connected_set_fig}
\end{figure}
\begin{rem}
In \defref{connected_set}, we include an upper bound on the size and diameter of the subsets in $\cX$, i.e. $|X_i|\leq \k, \diam(X_i)\leq R$. This is because for geometrically-local Hamiltonians, $\norm{H_X}=0$ for $|X_i|> \k,\ \diam(X_i)> R$, so we do not need to consider those sets. 
\end{rem}

In the upcoming proofs, we need to have an upper bound on the number of the connected sets $\cX$ that contains a specific site $x_0 \in \L$. This is stated in the following lemma. 
\begin{lem}[Cf. \cite{Kastoryano_locality}]\label{lem:bound_connected}
The number of connected sets $\cX$ of size  $|\cX|$ containing the site $x_0 \in \L$ is upper bounded by $g^{|\cX|}$ where $g$ is the growth constant of the Hamiltonian $H$ (see \defref{growth constant}). In particular, for a D-dimensional lattice and $\k=2$, we have $g\leq 2eD$.  
\end{lem}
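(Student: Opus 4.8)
The plan is to encode each connected set as an Euler tour of a canonical spanning tree of an auxiliary graph, and then count the tours. Fix the site $x_0\in\L$ and let $\mathcal V$ be the collection of all subsets $X\subseteq\L$ with $1\le|X|\le\k$ and $\diam(X)\le R$ — that is, the only sets that can support a nonzero local term. Form the graph $\mathcal G$ on vertex set $\mathcal V$ in which $X$ and $Y$ are joined by an edge iff $X\ne Y$ and $X\cap Y\ne\emptyset$. Unwinding \defref{connected_set}, a connected set $\cX$ of size $k$ containing $x_0$ is exactly a $k$-element subset $S\subseteq\mathcal V$ such that $\mathcal G[S]$ is connected and $S$ meets $\mathcal V_{x_0}:=\{X\in\mathcal V:x_0\in X\}$; the slightly weaker wording of condition (ii) still forces connectivity, since a component of $\mathcal G[S]$ not containing $x_0$ would contain a set $X_i$ admitting no valid index set $I$. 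So it is enough to bound the number of these rooted connected subgraphs of $\mathcal G$.

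First I would fix once and for all a total order on $\mathcal V$. Given such an $S$, let $r(S)$ be the least element of $S\cap\mathcal V_{x_0}$ and let $T(S)$ be the depth-first spanning tree of $\mathcal G[S]$ grown from $r(S)$ in this order; both depend only on $S$. The closed Euler tour of $T(S)$ based at $r(S)$ is a sequence $r=Y_0,Y_1,\dots,Y_{2(k-1)}=r$ with $k-1$ ``descend'' steps (each introducing one of the non-root vertices) and $k-1$ ``ascend'' steps (each a forced backtrack), and $S=\{Y_0,\dots,Y_{2(k-1)}\}$ is read off from the tour, so $S\mapsto(\text{tour})$ is injective. A tour is then determined by: (a) the root $r\in\mathcal V_{x_0}$, at most $|\mathcal V_{x_0}|$ choices; (b) the Dyck-type descend/ascend pattern, at most $\binom{2(k-1)}{k-1}\le 4^{k-1}$ choices; and (c) at each descend step from a vertex $Y$, the choice of a $\mathcal G$-neighbour of $Y$, at most $\Delta:=\max_{Y\in\mathcal V}\deg_{\mathcal G}(Y)$ choices. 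Since every neighbour of $Y$ shares a site with $Y$, we have $\Delta\le\k\cdot\max_{y}|\mathcal V_y|$, and both $|\mathcal V_y|$ and $|\mathcal V_{x_0}|$ are controlled by the growth constant of \defref{growth constant} (the combinatorial degree of the interaction pattern). Multiplying (a)--(c) gives a bound of the form $(c\,g)^{|\cX|}$ for a universal constant $c$.

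For the $\k=2$ nearest-neighbour Hamiltonian on $\L\subset\mathbb Z^{D}$ the sharper estimate $g\le 2eD$ follows by specializing the same argument: each vertex lies on only $2D$ edges, and a more careful (Lagrange-inversion) count of the rooted trees replaces the crude $4^{k}$ factor by essentially $e^{k}$, which is where the factor $e$ in $2eD$ comes from.

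The step that needs real care is exactly this last bookkeeping: the naive Euler-tour estimate overcounts, so obtaining the clean base $g$ in $g^{|\cX|}$ (rather than a messier $(c\k g)^{|\cX|}$) requires exploiting that the depth-first search never re-enters a vertex and that each set is traversed from a fixed side, and then comparing with the exact tree-counting bound. Everything else — the reduction to counting rooted connected subgraphs of $\mathcal G$ and the injectivity of the Euler-tour encoding — is routine.
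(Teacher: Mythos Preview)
The paper does not actually prove this lemma: it is stated with a citation to \cite{Kastoryano_locality} and no argument is given. So there is nothing in the paper to compare your proof to line by line.

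That said, your approach is the standard one for this kind of bound (counting rooted connected subgraphs via an Euler tour of a canonical spanning tree), and it is correct in outline. Your reduction of \defref{connected_set} to ``connected $k$-subset of the intersection graph $\mathcal G$ meeting $\mathcal V_{x_0}$'' is right, and the tour encoding is injective for the reason you state. The issue you flag at the end is the real one: the raw count you produce is of the form $(c\,\Delta)^{|\cX|}$ with $\Delta$ the degree of $\mathcal G$, not literally $g^{|\cX|}$ with $g$ as in \defref{growth constant}. The paper is equally loose here --- note that its ``in particular'' clause gives $g\le 2eD$ for the nearest-neighbour $\k=2$ lattice, where the factor $2D$ is the lattice degree and the extra factor $e$ is exactly the combinatorial overhead from the tree/animal count. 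In other words, the paper is effectively absorbing your constant $c$ into $g$, and the growth constant is being used as ``whatever base makes $g^{|\cX|}$ an upper bound'' rather than strictly the norm bound of \defref{growth constant}. So your $(c\,g)^{|\cX|}$ is already what is intended; there is no missing trick that removes the constant.

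One small point: your $\mathcal V$ should really be the collection of $X$ with $H_X\neq 0$ (not all subsets of bounded size and diameter), since the growth constant is defined through the Hamiltonian; otherwise $|\mathcal V_{x_0}|$ need not be bounded by $g$. This does not affect the application, because in the cluster expansion only those $X$ ever appear.
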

The next lemma achieves the first step in our proof by setting up the cluster expansion for the partition function. 

\begin{lem}[High temperature expansion]\label{lem:higH_temp_expansion}
For any $x_0\in \L$, the partition function of the lattice $Z_{\b}(\L)$ admits the following decomposition for $|\b| \leq \frac{1}{gh(e-1)}$:
\ba
Z_{\b}(\L)=d\cdot Z_{\b}(\L \setminus X_0)+\sum_{\substack{\cX: x_0 \in \cX\\\cX \text{\ is connected\ }}} W_{\b}(\cX)Z_{\b}(\L\setminus \supp(\cX)),\label{eq:t_cluster_expansion}
\ea
where $X_0=\{x_0\}$ and we define $W_{\b}(\cX)$ as
\ba
W_{\b}(\cX)=\sum_{p=|\cX|}^{\infty} \frac{(-\b )^{p}}{p!}\big(\sum_{\substack{(X_1,\dots, X_p) \\  \forall i\in [p]: X_i \in \cX\\ \cX = \cup_{i=1}^p \{X_i\}}} \Tr_{\supp(\cX)}\big[\prod_{j=1}^p  H_{X_j}\big]\big).\label{eq:t32}
\ea
The last sum in \eqref{eq:t32} is over all p-tuples $(X_2,X_2,\dots,X_p)$ that one can form from members of $\cX$ by repeating them at least once.
\begin{proof}[\textbf{Proof}]
We start by Taylor expanding $\exp (-\b H)$. We have 
\ba
Z_{\b}(\L)&=\Tr_{\L} \big[\sum_{k=0}^{\infty} \frac{(-\b )^k}{k!}(\sum_{X\subset \L} H_X)^k\big]\nonumber\\
&=\Tr_{\L}\big[\sum_{k=0}^{\infty} \frac{(-\b )^k}{k!}(\sum_{X\subset \L \setminus X_0}H_X + \sum_{X \subset \L: X \cap X_0 \neq \emptyset} H_X)^k\big]\nonumber\\
&=\Tr_{\L}\big[\sum_{k=0}^{\infty} \frac{(-\b )^k}{k!}(\sum_{X\subset \L \setminus X_0} H_X)^k\big]+\Tr_{\L} \big[\sum_{\ell=1}^{\infty} \frac{(-\b)^{\ell}}{\ell !} \sum_{\substack{(X_1,\dots, X_{\ell})\\ \forall i\in [\ell]: X_i \subset \L \\ \exists X_i: X_i\cap X_0 \neq \emptyset}}\prod_{j=1}^{\ell} H_{X_j}\big], \label{eq:t4}
\ea
where the trace is over the Hilbert space of $\L$ as usual. The first term in the last line is just the Taylor expansion of $d\cdot Z_{\b}(\L\setminus X_0)$. As in Eq. \eqref{eq:t18}, the factor $d$ is included because the original trace is over $\L$ and not $\L\setminus X_0$. The last term, however, does not have a closed form, and involves summing over all the products of the local interaction terms $H_{X_j}$ such that at least one of the terms has non-empty overlap with the site $x_0$. We can simplify this term by partitioning the sequence $(X_1,\dots,X_l)$ into two parts. The first part forms a connected set $\cX$ that contains the site $x_0$. The second part contains all $X_i$ that do not intersect with this connected set $\cX$. We then change the order of the summation in \eqref{eq:t4} by first summing over all $X_i$ not connected to a fixed $\cX$ and then varying the set $\cX$. We get

\ba
& \sum_{\ell=1}^{\infty} \frac{(-\b )^{\ell}}{\ell !}\sum_{\substack{(X_1,\dots, X_l)\\ \forall i\in [\ell]: X_i \subset \L \\ \exists X_i: X_i\cap X_0 \neq \emptyset}} \Tr_{\L}\big[\prod_{j=1}^{\ell} H_{X_j}\big]\nonumber \\ &=\sum_{p=|\cX|,q=0}^{\infty} \binom{p+q}{p} \frac{(-\b )^{p+q}}{(p+q)!}\sum_{\substack{(X_1,\dots, X_p) \\  \forall i\in [p]: X_i \in \cX\\ \cX = \cup_{i=1}^p \{X_i\}}} \Tr_{\supp(\cX)}\big[\prod_{j=1}^p  H_{X_j}\big] \sum_{\substack{(X_{p+1},\dots, X_{p+q}) \\ X_{p+i} \cap \supp(\cX) = \emptyset}} \Tr_{\L\setminus\supp(\cX)}\big[\prod_{j=p+1}^{p+q} H_{X_j} \big]\Big).\nonumber
\ea
The coefficient $\binom{p+q}{p} \frac{(-\b )^{p+q}}{(p+q)!}$ in the second line counts the number of ways we can distribute our choices of $X_i\in \cX$ inside the tuple $(X_1,\dots,X_{p+q})$ and is equal to $\frac{(-\b )^{p}}{(p)!}\frac{(-\b )^{q}}{(q)!}$. The last sum in the right side term vanishes for $q=0$. We can restate this sum in terms of the Taylor expansion of $Z_{\b}(\L \setminus \supp(\cX))$. This gives us the following equality
\ba
&\sum_{\ell=1}^{\infty} \frac{(-\b )^{\ell}}{\ell !}\sum_{\substack{(X_1,\dots, X_{\ell})\\ \forall i\in [\ell]: X_i \subset \L \\ \exists X_i: X_i\cap X_0 \neq \emptyset}} \Tr_{\L}\big[\prod_{j=1}^{\ell} H_{X_j}\big]\nonumber\\ &=\sum_{\substack{\cX: x_0 \in \cX\\\cX \text{\ is connected\ }}}\sum_{p=|\cX|}^{\infty} \frac{(-\b )^{p}}{p!}\big(\sum_{\substack{(X_1,\dots, X_p) \\  \forall i\in [p]: X_i \in \cX\\ \cX = \cup_{i=1}^p \{X_i\}}}\Tr_{\supp(\cX)}\big[\prod_{j=1}^p  H_{X_j}\big]\big) Z_{\b}\big(\L \setminus \supp(\cX)\big) \nonumber\\
&=\sum_{\substack{\cX: x_0 \in \cX\\\cX \text{\ is connected\ }}} W_{\b}(\cX)Z_{\b}(\L\setminus \supp(\cX)),\label{eq:t7}
\ea
which by plugging into Eq \eqref{eq:t4} gives us the expansion  \eqref{eq:t_cluster_expansion}. Note that since we manipulated infinite series, we still need to prove the convergence of the expansion \eqref{eq:t_cluster_expansion} for small enough complex $\b$. We show the \emph{absolute convergence} of this expansion by first bounding the infinite series $W_{\b}(\cX)$ and then the expression \eqref{eq:t7}. A similar expansion for a different purpose has been considered before in \cite{Hastings_solving_gapped_locally,Kastoryano_locality} where an upper bound for $W_{\b}(\cX)$ is obtained. In particular, Lemma 5 in \cite{Kastoryano_locality} implies \footnote{Note that compared to \cite{Kastoryano_locality} we pick up the extra factor $d^{|\supp(\cX)|}$ when bounding $\big|\Tr\big[\prod H_{X_j}\big]\big|$.}
\ba
|W_{\b}(\cX)|\leq d^{|\supp(\cX)|} \left(e^{|\b| h}-1\right)^{|\cX|}\label{eq:t12}.
\ea
By using the result of \lemref{bound_connected}, we see that

\ba\label{eq:t15}
\sum_{\substack{\cX: x_0 \in \cX \\\cX \text{\ is connected\ }}}\big|W_{\b}(\cX)\big|\cdot \big|Z_{\b}(\L\setminus \supp(\cX))\big|&\leq d^{n} e^{g h|\b|n} \sum_{|\cX|=1}^{\infty} g^{|\cX|} \left(e^{|\b| h}-1\right)^{|\cX|},
\ea
in which we used the upper bound $|Z_{\b}(\L\setminus \supp{\cX})| \leq d^{n-|\supp(\cX)|}e^{gh|\b|n}$ that can be shown using the H\"older inequality. This right-hand side of the inequality \eqref{eq:t15} is finitely bounded when 
\ba
g (e^{|\b|h}-1)\leq 1,
\ea

which along with the inequality $e^x\leq 1+(e-1)x$ implies an upper bound on the size of the admissible $\b$ 
\ba\label{eq:t16}
|\b|\leq \frac{1}{gh(e-1)}.
\ea
Hence, we get 
\ba\label{eq:t13}
\sum_{\substack{\cX: x_0\in \cX \\\cX \text{\ is connected\ }}}\big|W_{\b}(\cX)\big|\cdot \big|Z_{\b}(\L\setminus \supp(\cX))\big|
&\leq d^n e^{gh|\b|n} \frac{g (e^{|\b|h}-1)}{1-g (e^{|\b|h}-1)},
\ea
which for a fixed $n$, shows the absolute convergence of \eqref{eq:t_cluster_expansion} and completes the proof of the lemma.\qedhere
\end{proof}
\end{lem}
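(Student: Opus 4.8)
\textbf{Proof proposal for \lemref{higH_temp_expansion}.}

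The plan is to obtain the decomposition \eqref{eq:t_cluster_expansion} by Taylor-expanding the operator exponential $e^{-\b H}$, then reorganizing the resulting sum of products of local terms according to the connected component that touches the distinguished site $x_0$, and finally verifying absolute convergence so that all the formal manipulations of infinite series are justified. First I would write $Z_{\b}(\L)=\Tr_{\L}\sum_{k\ge 0}\frac{(-\b)^k}{k!}H^k$ and split $H=H_{\L\setminus X_0}+\sum_{X:\,X\cap X_0\neq\emptyset}H_X$. Expanding the $k$-th power, each monomial is an ordered tuple $(X_1,\dots,X_\ell)$ of interaction supports; the monomials in which \emph{no} $X_i$ meets $x_0$ reassemble exactly into the Taylor series of $\Tr_{\L\setminus X_0}e^{-\b H_{\L\setminus X_0}}$, which equals $d^{-1}$ times what I am calling $d\cdot Z_{\b}(\L\setminus X_0)$ after accounting for the removed site's Hilbert space dimension. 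This is the first term on the right of \eqref{eq:t_cluster_expansion}.

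The heart of the argument is the combinatorial reorganization of the remaining tuples — those with at least one $X_i$ meeting $x_0$. For such a tuple, I would define its \emph{$x_0$-connected cluster} $\cX$ to be the set of all distinct $X_i$ reachable from $x_0$ by a chain of pairwise-overlapping members of the tuple (\defref{connected_set}); the complementary $X_i$'s are all disjoint from $\supp(\cX)$, so under the trace the operator product factors as a product over $\supp(\cX)$ times a product over $\L\setminus\supp(\cX)$. Here I would be careful that because the $H_X$ need not commute, I must sum over \emph{ordered} tuples and keep track of the interleaving: a tuple with $p$ factors from $\cX$ and $q$ factors disjoint from $\cX$ can interleave in $\binom{p+q}{p}$ ways, and the ordering \emph{within} the two blocks is preserved since disjoint operators supported on $\supp(\cX)$ versus its complement commute through each other. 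Combining $\binom{p+q}{p}\frac{(-\b)^{p+q}}{(p+q)!}=\frac{(-\b)^p}{p!}\cdot\frac{(-\b)^q}{q!}$, the $q$-sum reassembles into the Taylor series of $Z_{\b}(\L\setminus\supp(\cX))$ and the $p$-sum is exactly $W_{\b}(\cX)$ as in \eqref{eq:t32}; summing over all connected $\cX\ni x_0$ yields \eqref{eq:t7}, hence \eqref{eq:t_cluster_expansion}.

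The last step — and the one I expect to be the main technical obstacle — is justifying that this rearrangement of infinite series is legitimate, i.e. establishing \emph{absolute convergence} in the stated disk $|\b|\le \frac{1}{gh(e-1)}$. For this I would first bound $|W_{\b}(\cX)|$: each factor $H_{X_j}$ has norm at most $h$, the trace over $\supp(\cX)$ contributes at most $d^{|\supp(\cX)|}$, and counting ordered $p$-tuples that use every element of $\cX$ at least once gives the bound $|W_{\b}(\cX)|\le d^{|\supp(\cX)|}(e^{|\b|h}-1)^{|\cX|}$ (this is essentially Lemma 5 of \cite{Kastoryano_locality}, with the extra $d^{|\supp(\cX)|}$ coming from the trace). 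Next, using $|Z_{\b}(\L\setminus\supp(\cX))|\le d^{n-|\supp(\cX)|}e^{gh|\b|n}$ from the H\"older inequality and the fact (\lemref{bound_connected}) that the number of connected sets of size $|\cX|$ containing $x_0$ is at most $g^{|\cX|}$, the full series is dominated by $d^n e^{gh|\b|n}\sum_{s\ge1}\big(g(e^{|\b|h}-1)\big)^s$, a convergent geometric series precisely when $g(e^{|\b|h}-1)<1$; invoking $e^x\le 1+(e-1)x$ for $x\in[0,1]$ reduces this to $|\b|\le\frac{1}{gh(e-1)}$. Absolute convergence then permits the unconditional reindexing used above and gives the quantitative bound \eqref{eq:t13}, completing the proof. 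The subtlety to watch is purely bookkeeping: ensuring the interleaving count is exactly $\binom{p+q}{p}$ and that no monomial is double-counted when the same support $X$ appears both inside and outside $\cX$ — which cannot happen, since membership in the $x_0$-connected cluster is determined intrinsically by the tuple.
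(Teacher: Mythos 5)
Your proposal is correct and follows essentially the same route as the paper: Taylor-expanding $e^{-\b H}$, isolating the monomials that avoid $x_0$ as $d\cdot Z_{\b}(\L\setminus X_0)$, regrouping the rest by the $x_0$-connected cluster with the interleaving factor $\binom{p+q}{p}\frac{(-\b)^{p+q}}{(p+q)!}=\frac{(-\b)^p}{p!}\frac{(-\b)^q}{q!}$, and then justifying the reindexing via the same absolute-convergence bounds (the $W_{\b}(\cX)$ estimate from Lemma 5 of \cite{Kastoryano_locality}, the H\"older bound on $Z_{\b}(\L\setminus\supp(\cX))$, and \lemref{bound_connected}). No substantive differences.
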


Having this lemma, we can now proceed to the second step of our proof of \thmref{t_higH_temp}.

\subsection{A zero-free region at high temperatures}\label{sec:Bounding the correction terms in the expansion}

\begin{proof}[\textbf{Proof of \thmref{t_higH_temp}}]\label{proof:t_higH_temp}
As explained in the beginning of \secref{High temperatures: Fisher zeros}, to show the partition function does not vanish for small enough $|\b|$, and moreover $|\log Z_{\b}(\L)|\leq O(n)$, it is sufficient to prove the bound in \eqref{eq:t0}. More specifically, we prove 
\ba
\left|\log \left|\frac{1}{d}\frac{Z_{\b}(\L)}{Z_{\b}(\L \setminus X_0)}\right|\right|\leq e^2gh|\b|,\quad \forall|\b|\leq \b_0=\frac{1}{5e gh\k} \label{eq:t19}
\ea
The proof of this bound is by induction on the number of lattice sites $n$. 

For the base of the induction, we assume $Z_{\b}(\emptyset)=1$ for all complex $\b$. The induction hypothesis is the bound \eqref{eq:t19}. Thus, our goal is to assume \eqref{eq:t19} for lattices of size $n-1$ and show that the same bound holds for lattices of size $n$. By using the "telescoping products" as in Eq. \eqref{eq:t0} along with the induction hypothesis, we obtain the following bound for all lattices of size at most $n-1$ including $\L \setminus X_0$, 
\ba
\left|\log \left|\frac{1}{d^{|\supp(X\setminus{X_0})|}}\frac{Z_{\b}(\L\setminus X_0)}{Z_{\b}(\L\setminus X)}\right|\right|\leq e^2gh |\b|  |\supp(X\setminus X_0)|,\quad |\b|\leq \b_0, \label{eq:t8}
\ea
where $X\subseteq \L$ is an arbitrary non-empty set. According to the decomposition of $Z_{\b}(\L)$ obtained in \lemref{higH_temp_expansion}, we have
\ba
\frac{1}{d}\frac{Z_{\b}(\L)}{Z_{\b}(\L\setminus X_0)}=1+\sum_{\substack{\cX: x_0 \in \cX \\\cX \text{\ is connected\ } }} W_{\b}(\cX)\left(\frac{1}{d}\frac{Z_{\b}(\L\setminus \supp(\cX))}{Z_{\b}(\L\setminus X_0)}\right).
\ea
Thus, we get
\ba
\left|\log \left|\frac{1}{d}\frac{Z_{\b}(\L)}{Z_{\b}(\L \setminus X_0)}\right|\right|&=\left|\log \Bigg|1+\sum_{\substack{\cX: x_0 \in \cX \\\cX \text{\ is connected\ } }} W_{\b}(\cX)\left(\frac{1}{d}\frac{Z_{\b}(\L\setminus \supp(\cX))}{Z_{\b}(\L\setminus X_0)}\right) \Bigg|\right|\nonumber\\
&\leq -\log \left(1-\sum_{\substack{\cX: x_0 \in \cX \\\cX \text{\ is connected\ } }} |W_{\b}(\cX)|\left|\frac{1}{d}\frac{Z_{\b}(\L\setminus \supp(\cX))}{Z_{\b}(\L\setminus X_0)}\right|\right)\label{eq:t10}\\
&\leq -\log \left(1-\sum_{\substack{\cX: x_0 \in \cX \\\cX \text{\ is connected\ } }} \left(e^{|\b| h}-1\right)^{|\cX|}e^{ghe^2|\b||\supp(\cX)|}\right)\label{eq:t11},
\ea
where we used the following inequality to get to Eq. \eqref{eq:t10}: for all $\z \in \bbC, |\z|\leq 1$, we have $\big|\log |1+\z|\big|\leq -\log (1-|\z|)$. The last line \eqref{eq:t11} is obtained by plugging in the bound in \eqref{eq:t12} and the induction hypothesis \eqref{eq:t8}. 

It remains to show that Eq. \eqref{eq:t11} is bounded from above by $e^2 g h|\b|$. To get the desired upper bound on \eqref{eq:t11}, it is sufficient to prove the following bound which we separately prove in \lemref{inductive bound on expansion}:
\ba
\sum_{\substack{\cX: x_0 \in \cX \\ \cX \text{\ is connected\ }}} \left(e^{|\b| h}-1\right)^{|\cX|}e^{ghe^2|\b||\supp(\cX)|} \leq e(e-1) gh |\b|,\quad |\b|\leq \b_0.\label{eq:t20}
\ea
The reason this implies the claimed upper bound on \eqref{eq:t11} is that we have 
\ba
 -\log \left(1-\sum_{\substack{\cX: x_0 \in \cX \\\cX \text{\ is connected\ } }} \left(e^{|\b| h}-1\right)^{|\cX|}e^{ghe^2|\b||\supp(\cX)|}\right)&\leq -\log \left(1-e(e-1)gh|\b|\right)\nonumber\\
&\leq e^2 gh |\b|.
\ea
To get to the last line we used the inequality $ -\log(1-\frac{e-1}{e} y)\leq y,\ \forall y\in[0,1]$ with $y=e^2gh|\b|$. Notice that $\b_0=\frac{1}{5e g h \k}$, which means $y= e^2 gh |\b| \leq 1$ for $|\b|\leq \b_0$.

This concludes the induction step and also the proof of the theorem.
\end{proof}

\begin{lem}\label{lem:inductive bound on expansion}
Consider the same setup as \thmref{t_higH_temp}. The following bound holds:
\ba
\sum_{\substack{\cX: x_0 \in \cX \\ \cX \text{\ is connected\ }}} \left(e^{|\b| h}-1\right)^{|\cX|}e^{ghe^2|\b||\supp(\cX)|} \leq e(e-1) gh |\b|,\quad |\b|\leq \b_0.\label{eq:t20_}
\ea
\end{lem}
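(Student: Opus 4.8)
The statement is a pure geometric‑series estimate once two combinatorial facts are used to collapse the sum over connected sets into one term per size. First, a connected set $\cX$ of size $k$ is a collection of $k$ sublattices, each of at most $\k$ sites, so $|\supp(\cX)|\le \k|\cX|$ (this is loose when the sublattices overlap, but overlap is forced by connectivity and looseness only helps). Second, by \lemref{bound_connected} there are at most $g^{k}$ connected sets $\cX$ with $|\cX|=k$ containing the fixed site $x_0$. Combining these, the left‑hand side of \eqref{eq:t20_} is bounded by $\sum_{k=1}^{\infty} t^{k}=t/(1-t)$, where $t:=g\,(e^{|\b|h}-1)\,e^{ghe^{2}\k|\b|}$, provided we check $t<1$.

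The remaining work is to verify, for all $|\b|\le \b_0=1/(5egh\k)$, that $t<1$ and $t/(1-t)\le e(e-1)gh|\b|$. I would use the two elementary inequalities already exploited elsewhere in \secref{High temperatures: Fisher zeros}: the bound $e^{x}\le 1+(e-1)x$ valid for $x\in[0,1]$, applied to $x=|\b|h\le \b_0 h=1/(5eg\k)\le 1$, which gives $e^{|\b|h}-1\le (e-1)h|\b|$; and the crude estimate $e^{ghe^{2}\k|\b|}\le e^{ghe^{2}\k\b_0}=e^{e/5}$. Setting $a:=g(e-1)h|\b|$ these yield $t\le a\,e^{e/5}$, and since $a\le g(e-1)h\b_0=(e-1)/(5e\k)\le (e-1)/(5e)$, one has $a\,e^{e/5}$ bounded well away from $1$, so the geometric sum converges. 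The target inequality $t/(1-t)\le ea$ follows (the case $\b=0$ being trivial) by rearranging to $e^{e/5}(1+ea)\le e$, and since $ea\le (e-1)/5$ the left side is strictly below $e$, with numerical slack.

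There is no conceptual obstacle here; the only delicate point is bookkeeping the constants so that the final bound emerges with exactly the prefactor $e(e-1)gh|\b|$ and the exact radius $\b_0=1/(5egh\k)$ that the induction in the proof of \thmref{t_higH_temp} needs to consume. Because the closing numerical inequality $e^{e/5}\bigl(1+(e-1)/5\bigr)\le e$ holds with room to spare, the estimates in the intermediate steps do not need to be tight, and I would simply fix the cleanest choice (bounding $|\supp(\cX)|$ by $\k|\cX|$ and using the two inequalities above) rather than optimizing constants.
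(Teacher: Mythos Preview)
Your proof is correct and takes a genuinely different route from the paper's. The paper proves \eqref{eq:t20_} by induction on the size $|\cX|$: it overcounts connected sets containing $x_0$ by first choosing one member $X\ni x_0$ (contributing the factor $(e^{|\b|h}-1)e^{ghe^2|\b||X|}$) and then, for each site $x\in X$, summing over connected sets rooted at $x$, to which the induction hypothesis is applied. This gives the recursive estimate
\[
\sum_{\cX}\ \le\ g\,(e^{|\b|h}-1)\bigl(e^{ghe^2|\b|}(1+e(e-1)gh|\b|)\bigr)^{\k},
\]
and the constants close. By contrast, you bypass the induction entirely with the crude bound $|\supp(\cX)|\le \k|\cX|$, which decouples the two exponents and lets \lemref{bound_connected} collapse the whole sum to a single geometric series $\sum_{k\ge 1} t^k$ with $t=g(e^{|\b|h}-1)e^{ghe^2\k|\b|}$. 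Your numerical closing step $e^{e/5}\bigl(1+(e-1)/5\bigr)<e$ is valid, so the same constants $\b_0=1/(5egh\k)$ and prefactor $e(e-1)gh|\b|$ fall out. Your argument is shorter and more elementary; the paper's recursive decomposition is more structural and would adapt better if one wanted sharper control of $|\supp(\cX)|$ (since it tracks the support site by site rather than through the worst-case $\k|\cX|$), but for the bound actually needed here your approach is cleaner.
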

\begin{proof}[\textbf{Proof of \lemref{inductive bound on expansion}}]
Since for a connected set $\cX$, both its size $|\cX|$ and the size of its support $|\supp(\cX)|$ show up in the summation, we need to take extra care in finding a proper upper bound. We achieve this again by induction, this time over the size of $|\cX|$.
We begin with restating the sum in \eqref{eq:t20_} in a different form. This includes adding the contribution of all connected sets $\cX$ that contain a site $x_0$ in the following order.

First, we consider the contribution of a fixed set $X\subset \L$ with size and diameter at most $\k$ and $R$ that contains $x_0$. We then sum over all the connected sets that include a site $x\in X$. It is not hard to see that by selecting all possible choices of $X$ and performing the addition in this way, we \emph{overcount} the number of connected sets $\cX$ that contain $x_0$, and therefore get an upper bound on the original sum in \eqref{eq:t20_}. More formally, for any $x_0 \in \L$, we have 

\ba
&\sum_{\substack{\cX: x_0 \in \cX \\\cX \text{\ is connected\ } }} \left(e^{|\b| h}-1\right)^{|\cX|}e^{ghe^2|\b||\supp(\cX)|} \nonumber \\
&\leq \sum_{\substack{X: x_0\in X \\|X|\leq \k\\\diam(X)\leq R}} \left( \left(e^{|\b| h}-1\right) e^{ghe^2|\b||X|} \prod_{x \in X}\left(\sum_{\substack{\cX: x \in \cX \\\cX \text{\ is connected\ } }} \left(e^{|\b| h}-1\right)^{|\cX|}e^{ghe^2|\b||\supp(\cX)|}\right)\right)\nonumber\\
&\leq \sum_{\substack{X: x_0\in X \\|X|\leq \k\\\diam(X)\leq R}} \left( \left(e^{|\b| h}-1\right) e^{ghe^2|\b||X|} \prod_{x \in X}\left(1+ e(e-1)gh|\b|\right)\right)\\
&\leq g \left(e^{|\b| h}-1\right) \left(e^{ghe^2|\b|}\left(1+ e(e-1)gh|\b|\ \right)\right)^{\k}\label{eq:t21} \\
&\leq (e-1)g|\b|he^{e(2e-1)gh|\b|\k}\label{eq:t22}\\
&\leq e(e-1)g|\b|h,\label{eq:t23}
\ea
where we used the induction hypothesis to get from the second to the third line. Eq. \eqref{eq:t21} follows from the definition of the growth constant $g$ which gives an upper bound on the number of sets $X$ containing $x_0$ with size at most $\k$. To get to Eq. \eqref{eq:t22} and \eqref{eq:t23}, we use the fact that $1+y\leq e^y$, $e^y-1\leq (e+1)y$ for $y\in[0,1]$ and $|\b|\leq \frac{1}{5egh\k}$.\qedhere
\end{proof}

\section{Analyticity implies exponential decay of correlations}\label{sec:Zero-free region implies the exponential decay of correlations}
In this section, we show that the exponential decay of correlations is a \emph{necessary} condition for the free energy to be analytic and bounded close to the real axis. Our bounds are stronger for commuting Hamiltonians on arbitrary lattices and non-commuting Hamiltonians on a $1$D chain and slightly weaker for generic geometrically-local cases.

Similar to the rest of this paper, our general strategy heavily uses extrapolation between different regimes of the inverse temperature parameter. We know that at $\b=0$, the Gibbs state is just the maximally mixed state, so the decay of correlations property trivially holds. Additionally, we show that at $\b=0$, the low-order derivatives of a function that encode the amount of correlation between two regions are zero. This combined with the absence of singularities coming from the analyticity condition puts an exponentially small bound on how fast this function (i.e. the correlations) can grow with $\b$.

The proof is reminiscent of the one for classical systems first shown by \cite{Dobrushin1}. As explained earlier, the essence of the proof is the following simple lemma from complex analysis. 

\begin{lem}[cf. \cite{Dobrushin1}]\label{lem:functions with many zero derivatives are small}
Let $f(z_1,\dots,z_m)$ be a complex function that on a bounded connected open region $\Omega \subset \bbC^{m}$ is analytic and $|f(z_1,\dots,z_m)|\leq M$. Let $k_1,\dots,k_m$ be non-negative integers summing to $K$. Suppose that $f(z_1,\dots,z_m)$ and its following derivatives are zero at some $\z_0 \in \Omega$:
\ba\label{eq:v-1}
\frac{d^{K}}{d^{k_1}z_i\dots d^{k_m}z_m} f(z_1,\dots,z_m)\Bigr|_{\z_0}=0 \quad \text{if }\  |\{i\in [m]: k_i\geq 1\}|\leq L-1,
\ea
that is, unless we take the derivative with respect to at least $L$ distinct variables $z_i$, this derivative is zero at $\z_0$. Then, for any $(z_1,\dots,z_m) \in \Omega$, there exist constants $c_1,c_2$ depending on $\z_0$ and $(z_1,\dots,z_m)$ such that $|f(z_1,\dots,z_m)|\leq c_1 M e^{-c_2L}$.
\end{lem}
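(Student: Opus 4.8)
The plan is to reduce the multivariate statement to a single-variable estimate by restricting $f$ to a complex line (or low-dimensional polydisk) joining $\zeta_0$ to the target point, and then to exploit the vanishing of all derivatives that involve fewer than $L$ distinct variables. First I would set up the geometry: since $\Omega$ is open and connected, fix a polygonal path inside $\Omega$ from $\zeta_0$ to $(z_1,\dots,z_m)$, and by a standard compactness/chaining argument it suffices to prove the bound along one segment, absorbing the number of segments (a constant depending on the two endpoints, as allowed) into $c_1$. Parametrize that segment as $\zeta_0 + t\,v$ for a fixed direction vector $v\in\bbC^m$ and $t$ ranging over a complex neighborhood of $[0,1]$; choose a fixed $r>1$ so that the disk $\{|t|\le r\}$ maps into $\Omega$ (possible after shrinking $v$ and re-chaining). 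Define $\phi(t) = f(\zeta_0 + t v)$, which is analytic on $\{|t|\le r\}$ and bounded there by $M$.

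The key step is to show $\phi$ vanishes to high order at $t=0$. By the chain rule, $\phi^{(j)}(0)$ is a sum of terms each of which is a mixed partial derivative $\partial^{k_1}_{z_1}\cdots\partial^{k_m}_{z_m} f(\zeta_0)$ with $\sum k_i = j$, multiplied by a monomial in the coordinates of $v$. Crucially, for $j \le L-1$ every such term has at most $L-1$ indices $i$ with $k_i \ge 1$ (since the number of distinct variables appearing is at most the total order $j$), so by hypothesis \eqref{eq:v-1} it is zero. Hence $\phi^{(j)}(0) = 0$ for all $j \le L-1$, i.e. $\phi(t) = t^{L}\,\psi(t)$ with $\psi$ analytic on $\{|t|\le r\}$. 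Then $\psi$ has an explicit integral representation $\psi(t) = \frac{1}{2\pi i}\oint_{|w|=r}\frac{\phi(w)}{w^{L}(w-t)}\,dw$, which for $|t|\le 1$ gives $|\psi(t)| \le M r /\big(r^{L}(r-1)\big)$, and therefore $|\phi(t)| \le |t|^{L}\,|\psi(t)| \le \frac{M r}{r-1}\,r^{-L}$ for $|t|\le 1$. Setting $t=1$ recovers the claimed $|f| \le c_1 M e^{-c_2 L}$ along the segment with $c_2 = \log r$; composing over the finitely many segments of the path only changes the constant $c_1$.

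The main obstacle I anticipate is not the single-variable estimate (which is just \propref{taylor_bounded}(1) in disguise) but the bookkeeping needed to make the reduction to one variable legitimate: one must verify that the vanishing-derivative hypothesis survives restriction to an \emph{arbitrary} direction $v$ — this is exactly where the ``fewer than $L$ distinct variables'' formulation (rather than ``total order $<L$'') is used, since along a generic line a derivative of order $j$ still only probes $j$ coordinates — and that the chaining argument can be carried out with disks of a uniform radius $r>1$ contained in $\Omega$, which uses openness of $\Omega$ and compactness of the path. A minor subtlety is that $\psi$ need only be controlled on $|t|\le 1$, so one should choose $v$ small enough that the unit $t$-disk already covers the segment; rescaling $v$ is harmless since the constants are allowed to depend on the endpoints. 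No deeper difficulty is expected, as the argument is a direct multivariate lift of the classical Schwarz-lemma-type bound used by Dobrushin and Shlosman.
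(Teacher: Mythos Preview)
Your approach is essentially identical to the paper's: restrict $f$ to a one-complex-parameter slice through $\zeta_0$, observe via the chain rule that the restriction vanishes to order $L$ at the origin (because a $j$th-order derivative along the line involves at most $j$ distinct coordinate indices), and then apply a Schwarz-lemma/Cauchy-integral bound of the form $|\phi(t)|\le M|t|^L$ (the paper uses the maximum principle on $\psi(t)=\phi(t)/t^L$, you use the Cauchy integral, which is the same thing).

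There is, however, a gap in your chaining step. You write that it suffices to prove the bound along a single segment and absorb the number of segments into $c_1$, but this does not work: the vanishing-derivative hypothesis is available only at $\zeta_0$, not at the endpoint $p_1$ of the first segment, so on the second segment you have nothing to exploit. Knowing merely $|f(p_1)|\le cMr^{-L}$ together with the global bound $|f|\le M$ on a disk around the second segment does not force $|f|$ to remain small there (take $f(z)=z-p_1$). The paper sidesteps this by a different reduction: after passing to one variable it does not chain over short disks but instead uses a conformal map (as in the proof of \thmref{extrapolation algorithm}) sending the unit disk onto the entire one-dimensional neighborhood of the path inside $\Omega$, so a single application of the Schwarz bound handles the whole path. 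In the paper's actual application the region $\Omega_{\d,\b}$ is convex and the straight segment from $0$ to $\b$ already sits inside it, so the issue does not bite there; but for the lemma as stated you should replace the polygonal chaining by the conformal-map argument.
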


\begin{proof}[\textbf{Proof of \lemref{functions with many zero derivatives are small}}]
Without loss of generality, we can restrict ourselves to the single variable case, $m=1$, by defining a path parameterized by $z\in [0,1]$ that connects $\z$ to any point $(z_1,\dots,z_m)$ of interest. We denote the function on this path by $f(z)$. Region $\Omega$ in this case is just a region in the complex plane around $[0,1]$ that has a small imaginary part such that $f(z)$ remains analytic and bounded. 

Using conformal mapping similar to what we did in \thmref{extrapolation algorithm}, we can map the unit disk onto $\Omega$, which is the set of $z\in \bbC$ such that $|z|\leq 1$. Hence, without loss of generality, we assume $f(z)$ is analytic on the unit disk. It is also not hard to see that Eq. \eqref{eq:v-1} implies the first $L$ derivatives of $f(z)$ vanish at the origin. Thus, the Taylor expansion of $f(z)$ converges and we have 
\ba
\forall z \in \Omega,\quad f(z)=\sum_{k>L} a_k z^k=z^{L}\sum_{k>L} a_k z^{k-L},
\ea
but $\sum_{k>L} a_k z^{k-L}$ is itself an analytic function, so it is either a constant or attains its maximum absolute value on the boundary. It follows from $|f(z)|\leq M$ that in either case $|\sum_{k>L} a_k z^{k-L}|\leq M$. This implies $\forall|z|\leq 1,\ |f(z)|\leq M |z|^L$, which in turn proves the theorem. 
\end{proof}

The connection between \lemref{functions with many zero derivatives are small}, the decay of correlations, and the analyticity condition becomes clear once we substitute our choice of function $f(z_1,\dots,z_m)$ and region $\Omega$. We begin by defining $\Omega$. Fixing our choice of function $f(z_1,\dots,z_m)$ is postponed until after we discuss the precise statement of the analyticity condition and the decay of correlations. 

Region $\Omega$ corresponds to the region near the real $\b$ axis where the partition function does not vanish. Given a local Hamiltonian $H=\sum_{i=1}^m H_i$, we define complex variables $z_1,\dots,z_m$ such that each $z_i$ roughly equals $\b$ plus some small \emph{complex deviation}. Hence, instead of working with functions of $\b H$ such as $\exp(-\b H)$, we consider functions of $\sum_{i=1}^m z_i H_i$ as in $\exp(-\sum_{i=1}^m z_i H_i)$. For a fixed inverse temperature $\b$ and maximum deviation $\d$, we denote the set of such tuples $(z_1,\dots,z_m)$ by $\G_{\d,\b}$. By varying $\b$ from zero to some constant $\b$ and taking the union of corresponding $\G_{\d,\b}$, the set $\Omega_{\d,\b}$ is obtained. 

As discussed earlier, the critical temperature $\b_c$ corresponds to the thermal phase transition point, where complex zeros of the partition function approach the real axis. Note that even with deviations, we do not want any of the variables $z_i$ to exceed $\b_c$. More precisely, we have the following definition.

\begin{definition}[The vicinity of the real $\b$ axis]\label{def:The vicinity of real}
Let $\G_{\d,\b}$ be the set $\{(z_1,\dots , z_m): \forall i \in [m], z_i\in \bbC, |z_i-\b|\leq \d\}$. We define $\Omega_{\d,\b}$ to be $\Omega_{\d,\b}=\bigcup_{\substack{\b' \in \bbR^+\\ \b' < \b/(1+\d)}} \G_{\d,\b'}$. 
\end{definition}

We also define the perturbed Gibbs state as follows.
\begin{definition}[Complex perturbed Gibbs state]\label{def:Complex perturbed Gibbs state}
The $\d$-perturbed Gibbs state of a local Hamiltonian $H=\sum_{i}^m H_i$ at inverse temperature $\b$ is defined as
\ba
\r_{\vec{z}}(H)=\frac{e^{-\sum_{i=1}^mz_i H_i}}{\Tr[e^{-\sum_{i=1}^m{z_i H_i}}]},\quad \vec{z}=(z_1,\dots,z_m)\in \G_{\d,\b}
\ea
where $\G_{\d,\b}$ is defined in \defref{The vicinity of real}.
\end{definition}

The analyticity condition we consider here is stronger than the ones derived in \secref{High temperatures: Fisher zeros} in the high temperature regime or used in the approximation algorithm in \secref{Algorithm for estimating the partition function}. Previously we only included systems with open boundary conditions in our analysis, but here we also need to allow for other boundary conditions. This is not restricted to the quantum case, and Dobrushin and Shlosman use similar conditions in their proof for classical systems \cite{Dobrushin1}. The precise statement of our condition is the following:

\begin{customcond}{1}[Analyticity after measurement] \label{cond:analyticity after measurement}The free energy of a geometrically-local Hamiltonian $H$ is $\d$-analytic at $\b$ if for any local operator $N\geq 0$ with $\norm{N}=1$, there exists a constant $c$ such that
\ba\label{eq:v1}
\left| \log\left( \Tr \left[ e^{ -\sum_{i=1}^m z_i H_i} N \right] \right)\right|\leq c n,\quad \forall(z_1,\dots, z_m)\in \G_{\d ,\b}.
\ea
\end{customcond}

To see the motivation for this condition, first note that for classical spin systems, the operator $N$ sets the boundary conditions. In that case, we can fix the value of certain spins in the system before computing the partition function, or more generally, finding the Gibbs distribution. A natural question then is how varying these boundary conditions affects the distribution. In particular, the \emph{uniqueness of the Gibbs distribution} refers to the case that in the limit of a large number of particles, changing distant spins has a negligible effect on the distribution of spins on a finite region. Hence, a \emph{unique} Gibbs distribution can be defined for such systems. This condition is not satisfied at all temperatures, and below the critical temperature, multiple Gibbs distributions exist.Thus, it seems natural to include the boundary conditions in the partition function when studying its complex zeros and the critical behavior of the system in general.

For quantum systems, one can think of fixing the boundary spin values by projecting them onto a specific state or more generally by post-selecting after a local measurement has been performed. Hence, $\Tr \left[\exp \left( -\sum_{i=1}^m z_i H_i\right) N \right]$ is the partition function of the normalized Gibbs state after conditioning on the measurement outcome associated with $N$. Notice that, in principle, the state of the spins after post-selection can be entangled. As we will see, this causes technical difficulties in extending the classical results to the quantum regime. 

Our goal is to show that \condref{analyticity after measurement} on the analyticity of the free energy implies the exponential decay of correlations. This condition is stated as follows.

\begin{customcond}{2}[Exponential decay of correlations]\label{cond:Exponential decay of correlations}
The correlations in the Gibbs state $\rho_{\b}(H)$ of a geometrically-local Hamiltonian decay exponentially if for any local Hermitian operators $O_1$ and $O_2$, there exist constants $\xi$ and $c$ such that
\ba\label{eq:v2}
\big|\Tr\left[ \rho_{\b}(H) O_1 O_2 \right]-\Tr\left[ \rho_{\b}(H) O_1 \right]\Tr\left[ \rho_{\b}(H) O_2 \right] \big| \leq c\norm{O_1}\norm{O_2} e^{-\dist(O_1,O_2)/\xi}.
\ea
\end{customcond}

We first prove a slightly weaker version of \condref{Exponential decay of correlations} assuming \condref{analyticity after measurement}. We then improve our bound for commuting and $1$D Hamiltonians. 

\begin{thm}[Analyticity implies exponential decay of correlations] \label{thm:The absence of zeros implies the decay of correlations in the non-commuting case} Suppose the free energy of a geometrically-local Hamiltonian is $\d$-analytic for all $\b\in[0,\b_c)$ as in \condref{analyticity after measurement}. Then the correlations between any two operators $O_1,O_2$ with $\dist(O_1,O_2)=\Omega(\log n)$ decay exponentially for all $\b\in[0,\b_c)$ as in \condref{Exponential decay of correlations}.
\end{thm}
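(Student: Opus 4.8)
The plan is to instantiate \lemref{functions with many zero derivatives are small} with a carefully chosen correlation function and region. First I would set the region to be $\Omega = \Omega_{\d,\b}$ from \defref{The vicinity of real}, on which \condref{analyticity after measurement} supplies the needed boundedness: for any local $N\geq 0$ with $\norm{N}=1$, $\log \Tr[e^{-\sum_i z_i H_i}N]$ is bounded by $cn$. The candidate function is a ``correlation'' built from the perturbed Gibbs state $\rho_{\vec z}(H)$ of \defref{Complex perturbed Gibbs state}, morally
\[
  f(z_1,\dots,z_m) = \frac{\Tr[e^{-\sum_i z_i H_i}\, \tilde O_1 \tilde O_2]\,\Tr[e^{-\sum_i z_i H_i}]}{\Tr[e^{-\sum_i z_i H_i}\,\tilde O_1]\,\Tr[e^{-\sum_i z_i H_i}\,\tilde O_2]} - 1
\]
for suitably shifted positive versions $\tilde O_j = O_j + \norm{O_j}\iden$ of the observables (so the denominators are partition functions with admissible $N$'s and stay nonzero and bounded on $\Omega$); this is the place where the ``slightly different form'' alluded to in the techniques sketch enters. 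One recovers a bound on the genuine covariance \eqref{eq:v2} from a bound on $f$ by linearity and by absorbing the normalizations, at the cost of the $\norm{O_1}\norm{O_2}$ prefactor.

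Second, I would verify the derivative hypothesis of \lemref{functions with many zero derivatives are small} at the point $\vec z = 0$. At $\b=0$ the state is maximally mixed and $\tilde O_1, \tilde O_2$ act on disjoint regions, so $f(0)=0$; more importantly, expanding $e^{-\sum_i z_i H_i}$ and using that $H = \sum_i H_i$ is geometrically local, any mixed partial derivative of the numerator and denominators at $0$ that involves fewer than $\dist(O_1,O_2)/(2R)$ \emph{distinct} variables $z_i$ cannot connect a term touching $\supp(O_1)$ to a term touching $\supp(O_2)$, so the ``connected part'' — which is exactly what $f$ isolates — vanishes. Concretely I would write the log of each of the four traces, note the first $L := \Omega(\dist(O_1,O_2))$ orders of its multivariate Taylor series at $0$ are ``product-separable'' between the $O_1$-side and the $O_2$-side, and conclude that the corresponding derivatives of $f$ vanish whenever fewer than $L$ distinct variables are differentiated, which is precisely condition \eqref{eq:v-1}.

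Third, I would feed $M = e^{O(n)}$ (from \condref{analyticity after measurement}, since each of the four traces is $e^{O(n)}$ and bounded away from $0$, and the shifted observables cost only additional $O(1)$ factors) and $L = c'\dist(O_1,O_2)$ into \lemref{functions with many zero derivatives are small}, yielding $|f| \leq c_1 e^{O(n)} e^{-c_2 c' \dist(O_1,O_2)}$. This is exponentially small precisely when $\dist(O_1,O_2) = \Omega(\log n)$ with a large enough constant, which is the restriction in the theorem statement; translating back to the covariance form and choosing $\xi$ appropriately finishes the proof. The main obstacle I anticipate is the bookkeeping in step two: making precise that $f$, and not merely the individual traces, has its low-order derivatives vanish — i.e.\ that the ``disconnected'' contributions to the numerator exactly cancel against the product of the two single-observable traces order by order — and controlling the denominators' nonvanishing on all of $\Omega_{\d,\b}$ uniformly (not just on the real axis), which is where the conformal-map reduction to the unit disk used inside \lemref{functions with many zero derivatives are small} must be invoked with care. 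A secondary nuisance is that post-selection operators like $\tilde O_j$ are only guaranteed to be \emph{local}, so one must check they fall under the scope of \condref{analyticity after measurement} after normalizing $\norm{\tilde O_j}=1$.
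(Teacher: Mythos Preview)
Your plan has the right skeleton---use \condref{analyticity after measurement} for analyticity/boundedness, show the low-order mixed partials at $\vec z=0$ vanish via a connected-cluster argument, then invoke \lemref{functions with many zero derivatives are small}---but there is a genuine gap in the choice of $f$ and the resulting $M$. With your $f=(\text{ratio})-1$, \condref{analyticity after measurement} only gives you $|f|\le e^{O(n)}$ on $\Omega_{\d,\b}$ (each of the four traces lies between $e^{-cn}$ and $e^{cn}$ in modulus, so their ratio can be as large as $e^{4cn}$). Feeding $M=e^{O(n)}$ into \lemref{functions with many zero derivatives are small} yields $|f|\le c_1\,e^{O(n)}e^{-c_2 L}$, which is small only when $L=\Omega(n)$, i.e.\ $\dist(O_1,O_2)=\Omega(n)$, not $\Omega(\log n)$ as you wrote. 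So as stated the argument proves a vacuous statement (the distance cannot exceed $O(n^{1/D})$ on a $D$-dimensional lattice).

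The paper fixes this by taking the \emph{logarithm} of the ratio:
\[
f(\vec z)=\log\!\left(\frac{\Tr[e^{-\sum_i z_i H_i}N_2N_1]\;\Tr[e^{-\sum_i z_i H_i}]}{\Tr[e^{-\sum_i z_i H_i}N_2]\;\Tr[e^{-\sum_i z_i H_i}N_1]}\right),
\]
where $N_1,N_2$ are the PSD pieces of $O_1,O_2$ (they decompose $O_j=O_j^+-O_j^-$ rather than shifting). Now \condref{analyticity after measurement} gives $|f|\le 4cn$, i.e.\ $M=O(n)$, and \lemref{functions with many zero derivatives are small} yields $|f|\le c\,n\,e^{-\dist(O_1,O_2)/\xi}$, which is what produces the $\Omega(\log n)$ threshold. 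The additive structure of the log is also what makes the derivative cancellations clean: the paper checks four cases for the region $G=\bigcup_{i:k_i\ge 1}\supp(H_i)$ and shows that unless $G$ is connected and touches both $A_1$ and $A_2$, the four $\log\Tr[\cdots]$ derivatives either individually vanish or cancel in pairs. Your ``connected part isolation'' intuition is correct, but it becomes an actual computation only after passing to the log.
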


\begin{proof}[\textbf{Proof of \thmref{The absence of zeros implies the decay of correlations in the non-commuting case}}]

We can without loss of generality assume $\norm{O_1},\norm{O_2}\leq 1$. Let $A_1=\supp(O_1)$ and $A_2=\supp(O_2)$. Each of the observables $O_1$ and $O_2$ can be decomposed into two positive semi-definite (PSD) matrices: $O_1=O^+_1-O^-_1$ and $O_2=O^+_2-O^-_2$, where $O^+_1,O^+_2$ include the positive eigenvalues of $O_1,O_2$ and $-O^-_1,-O^-_2$ include the negative ones. We can write the covariance in Eq. \eqref{eq:v2} as
\ba
&\huge| \Tr\left[ \rho_{\b}(H) O_1 O_2 \right]-\Tr\left[ \rho_{\b}(H) O_1 \right]\Tr\left[ \rho_{\b}(H) O_2 \right] \huge| \nonumber\\
&=\left|\sum_{\a,\g\in\{\pm\}}\a \g \left(\Tr\left[\rho_{\b}(H)O^{\a}_1 O^{\g}_2\right]-\Tr\left[\rho_{\b}(H) O^{\a}_1 \right] \Tr\left[ \rho_{\b}(H) O^{\g}_2 \right]\right)\right| \nonumber\\
&\leq 4\cdot \max_{\substack{N_2,N_1\geq 0:\\ \norm{N_2},\norm{N_1}\leq 1}} \left|\left(\Tr\left[\rho_{\b}(H) N_2 N_1 \right]-\Tr\left[\rho_{\b}(H) N_2\right] \Tr\left[ \rho_{\b}(H) N_1\right]\right)\right|,\label{eq:v5}
\ea
where $\supp(N_2)=A_1$ and $\supp(N_1)=A_2$. Recall that the post-selected state $\rho_{\b}(H|N)$ is defined by 
\ba 
\rho_{\b}(H|N)=\frac{\sqrt{N} \exp(-\b H)\sqrt{N}}{\Tr[\exp(-\b H)N]}.
\ea
The bound \eqref{eq:v5} can be rewritten as 
\ba
\left|\left(\Tr\left[\rho_{\b}(H) N_2 N_1 \right]-\Tr\left[\rho_{\b}(H) N_2\right] \Tr\left[ \rho_{\b}(H) N_1\right]\right)\right|
&=\left|\Tr\left[\rho_{\b}(H) N_2 \right]\right| \left|\Tr[\rho_{\b}(H|N_2) N_1]-\Tr\left[\rho_{\b}(H|\iden)N_1\right]\right|\nonumber\\
&\leq \left|\Tr[\rho_{\b}(H|N_2) N_1]-\Tr\left[\rho_{\b}(H|\iden)N_1\right]\right|.
\ea

Hence, our goal is to show 
\ba\label{eq:v6}
\left|\Tr[\rho_{\b}(H|N_2) N_1]-\Tr\left[\rho_{\b}(H|\iden)N_1\right]\right|\leq c e^{- \dist(O_1,O_2)/\xi}.
\ea
We instead show
\ba\label{eq:v7}
\left|\log\left(\frac{\Tr[\rho_{\b}(H|N_2) N_1]}{\Tr\left[\rho_{\b}(H|\iden)N_1\right]}\right)\right| \leq c e^{- \dist(O_1,O_2)/\xi}.
\ea
To see why this implies \eqref{eq:v6}, we can further upper bound the right-hand side using the inequality $x\leq -\log(1-x)$ for $x<1$ and choosing $x=c \exp(- \dist(O_1,O_2)/\xi)$. Then the fact that $|\Tr(\r_{\b}(H) N_1)|\leq 1$ implies the desired bound. We can prove a similar bound even when instead of $\iden$ there is any other PSD operator in the denominator. One way to interpret these bounds is that a local measurement on region $A_2$ is undetected from the perspective of local operators on region $A_1$.

\begin{figure}[t!]
\centering
	\includegraphics[width=.33\textwidth]{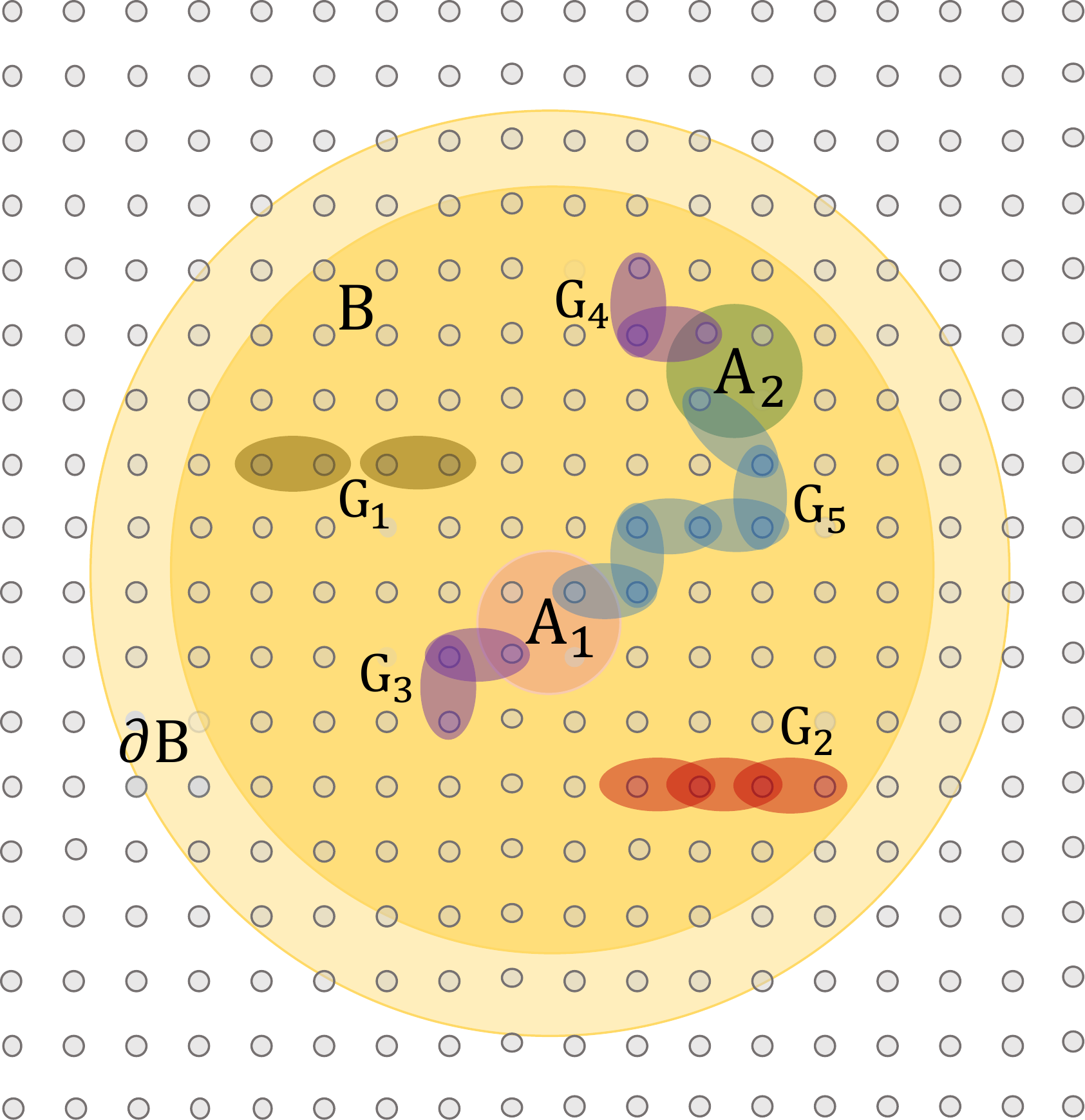}
	\caption{To study the correlations between regions $A_1,A_2$, we can restrict the Gibbs state to region $B$ while adding an operator on the boundary $\partial B$ to include the effect of the rest of the lattice. Regions $G_1,\dots, G_5$ show up when studying the derivatives of the correlation function. See the proofs of \thmref{The absence of zeros implies the decay of correlations in the non-commuting case} and \thmref{The absence of zeros implies the decay of correlations in the commuting case}.}\label{fig:regions_ABC}
\end{figure}

The proof follows from \lemref{functions with many zero derivatives are small}. We first consider a perturbed version of \eqref{eq:v7} using \defref{Complex perturbed Gibbs state}. We define $f(z_1,\dots,z_m)$ as
\ba\label{eq:v42}
f(z_1,\dots,z_m)&= \log\left(\frac{\Tr[\r_{\vec{z}}(H|N_2)N_1]}{\Tr[\r_{\vec{z}}(H|\iden)N_1]}\right) \nonumber \\
&=\log\left(\frac{\Tr[e^{- \sum_{i=1}^m z_i H_i}  N_2 N_1]}{\Tr[e^{- \sum_{i=1}^m z_i H_i} N_2]}\frac{\Tr[e^{-\sum_{i=1}^m z_i H_i}]}{\Tr[e^{- \sum_{i=1}^m z_i H_i}N_1]}\right).
\ea
This function is our choice for $f(z_1,\dots,z_m)$ in \lemref{functions with many zero derivatives are small}. In particular, we prove that assuming \condref{analyticity after measurement} is satisfied, $f(z_1,\dots,z_m)$ is analytic in $\Omega_{\d,\b_c}$, has a bounded absolute value, and has vanishing derivatives at $z_1=\dots=z_m=0$. Let us begin with the analyticity and boundedness. 

\paragraph{Analyticity and boundedness:}  From \eqref{eq:v1} we see that for any positive operator $N$, the post-selected free energy is analytic and there exists some constant $c$ such that
\ba
\left| \log\left( \Tr \left[ e^{ - \sum_{i=1}^m z_i H_i} N \right] \right)\right|\leq c n,
\ea
By using a proper choice for $N$, we see that $f(z_1,\dots,z_m)$ is a sum of analytic functions and therefore is itself analytic. We also get an upper bound on $|f(z_1,\dots,z_m)|$, that is,
\ba
\forall (z_1,\dots,z_m)\in \Omega_{\d,\b_c},\quad  \left|f(z_1,\dots,z_m)\right|&\leq \left|\log\left(\Tr[e^{- \sum_{i=1}^m z_i H_i}  N_2 N_1]\right)\right|+\left|\log\left(\Tr[e^{- \sum_{i=1}^m z_i H_i}  N_2 ]\right)\right|\nonumber\\
&+\left|\log\left(\Tr[e^{- \sum_{i=1}^m z_i H_i}N_1]\right)\right|+\left|\log\left(\Tr[e^{- \sum_{i=1}^m z_i H_i}]\right)\right|\nonumber\\
&\leq 4c n.
\ea

\paragraph{Vanishing derivatives:} It remains to show that certain derivatives of $f(z_1,\dots,z_m)$ are zero at the point $\b=0$, which is inside $\Omega_{\d,\b_c}$. The derivatives of $f(z_1,\dots z_m)$ are combinations of terms like 
\ba\label{eq:v9}
\frac{d^K}{d^{k_1}z_i\dots d^{k_m}z_m} \log\left(\Tr \left[ e^{ - \sum_{i=1}^m z_i H_i} N_2 N_1 \right] \right)\Bigr|_{z=0},
\ea
where $k_i\geq 0$ and $K=\sum_{i=1}^m k_i$. Notice that we are including the $z_i$ that are not in the derivative by letting $k_i=0$. We claim in certain instances that these terms are either zero or cancel each other. Consider all the local terms $H_i$ that we are taking a derivative with respect to their $z_i$, i.e. $k_i\geq 1$. We denote the union of the support of these terms by $G$. Recall that $A_1,A_2$ are the support of $O_1,O_2$, respectively. Region $G$ fits into one of the following cases. 

\textbf{Case 1:} $G$ is not connected and does not intersect with $ A_1 \cup A_2$ (see $G_1$ in 
\fig{regions_ABC} for an example). In this case, the terms in the derivatives are 
\ba\label{eq:v11}
\frac{d^K}{d^{k_1}z_i\dots d^{k_m}z_m} \log\left(\Tr \left[ e^{ - \sum_{i=1}^m z_iH_i} N_2 N_1 \right] \right)\Bigr|_{z=0}&=\frac{d^K}{d^{k_1}z_i\dots d^{k_m}z_m} \log \left(\Tr[N_2]\Tr[N_1] \prod_{i: k_i\geq 1} \Tr \left[ e^{-z_i H_i}\right] \right)\Bigr|_{z=0}\nonumber\\
&=\frac{d^K}{d^{k_1}z_i\dots d^{k_m}z_m} \log \left(\Tr[N_2]\Tr[N_1]\right) \nonumber\\ &+ \sum_{i: k_i\geq 1} \frac{d^K}{d^{k_1}z_i\dots d^{k_m}z_m} \log \left(\Tr \left[ e^{-z_i H_i}\right] \right)\Bigr|_{z=0}\nonumber\\
&=0.
\ea
In the first line, we used the fact that sublattices $A_1$, $A_2$, and $\supp(H_i)$ with $k_i\geq 1$ do not intersect. The last line follows because $\Tr[N_2]\Tr[N_1]$ is a constant, and $\Tr \left[ e^{-z_i H_i}\right]$ only depends on $z_i$ and its derivative with respect to other $z_i$ is zero. 

\textbf{Case 2:} $G$ is connected but does not intersect with $ A_1 \cup A_2$ (see $G_2$ in 
\fig{regions_ABC} for an example). Similar to \eqref{eq:v11}, we can still separate $\Tr[N_2]\Tr[N_1]$ from the remaining terms and their derivative is zero. Hence, we obtain
\ba\label{eq:v12_}
\frac{d^K}{d^{k_1}z_i\dots d^{k_m}z_m} \log\left(\Tr \left[ e^{ - \sum_{i=1}^m z_i H_i} N_2 N_1\right] \right)\Bigr|_{z=0}= \frac{d^K}{d^{k_1}z_i\dots d^{k_m}z_m} \log \left(\Tr \left[ e^{-\sum_{i: k_i\geq 1} z_i H_i}\right] \right)\Bigr|_{z=0}.
\ea
Although this term does not necessarily equal zero, the derivatives of $f(z_1,\dots,z_m)$ are combinations of terms like Eq. \eqref{eq:v12_}. These terms are all equal as we can separate traces involving $N_2$ and $N_1$ using the same argument as above, but they appear with opposite signs and thus cancel each other. More precisely, we have
\ba
&\frac{d^K}{d^{k_1}z_i\dots d^{k_m}z_m} f(z_1,\dots,z_m)\Bigr|_{z=0}\nonumber\\
&=\frac{d^K}{d^{k_1}z_i\dots d^{k_m}z_m} \Big(\log \left(\Tr[N_2N_1]\Tr \left[ e^{-\sum_{i: k_i\geq 1} z_i H_i}\right] \right)
- \log \left(\Tr[N_2]\Tr \left[ e^{-\sum_{i: k_i\geq 1} z_i H_i}\right] \right)\nonumber\\
&-\log \left(\Tr[N_1]\Tr \left[ e^{-\sum_{i: k_i\geq 1} z_i H_i}\right] \right)
+\log \left(\Tr \left[ e^{-\sum_{i: k_i\geq 1} z_i H_i}\right] \right)\Big)\Bigr|_{z=0}\nonumber\\
&=0.
\ea

\textbf{Case 3:} $G$ is connected and intersects with only one of $A_1$ or $A_2$ (see $G_3$ or $G_4$ in 
\fig{regions_ABC} for an example). Similar to Case 2, the derivatives of $f(z_1,\dots,z_m)$ consist of equal terms with opposite signs and therefore vanish. Here, we show the case where $G$ only intersects $A_2$. The other ones similarly follow. 
\ba
&\frac{d^K}{d^{k_1}z_i\dots d^{k_m}z_m} f(z_1,\dots,z_m)\Bigr|_{z=0}\nonumber\\
&=\frac{d^K}{d^{k_1}z_i\dots d^{k_m}z_m} \Big(\log \left(\Tr[N_1]\Tr \left[ e^{-\sum_{i: k_i\geq 1} z_i H_i}N_2\right] \right)
- \log \left(\Tr \left[ e^{-\sum_{i: k_i\geq 1} z_i H_i}N_2\right] \right)\nonumber\\
&-\log \left(\Tr[N_1]\Tr \left[ e^{-\sum_{i: k_i\geq 1} z_i H_i}\right] \right)
+\log \left(\Tr \left[ e^{-\sum_{i: k_i\geq 1} z_i H_i}\right] \right)\Big)\Bigr|_{z=0}\nonumber\\
&=0,
\ea
in which the first two and last two terms cancel each other. 

\textbf{Case 4:} $G$ is connected and intersects with both $A_1$ and $A_2$ (see $G_5$ in 
\fig{regions_ABC} for an example). Here, the cancellation that appeared in the other cases does not happen. Thus, this is the only situation in which the derivatives are non-zero.

The important observation is that for Case 4 to happen, $G$ needs to be \emph{long} enough to touch both $A_1$ and $A_2$. Hence, if the number of $z_i$ with $k_i\geq 1$ is less than roughly $\dist(O_1,O_2)$, their corresponding derivative vanishes. 
Having all the criteria needed for applying \lemref{functions with many zero derivatives are small}, i.e analyticity, boundedness, and zero derivatives, we can get the following bound on $|f(z_1,\dots,z_m)|$ for some constant $c$ and $\xi$:
\ba\label{eq:v40}
|f(\b,\dots,\b)|\leq c n e^{-\dist(O_1,O_2)/\xi},
\ea
which as explained before implies
\ba\label{eq:v41}\big|\Tr\left[ \rho_{\b}(H) O_1 O_2 \right]-\Tr\left[ \rho_{\b}(H) O_1 \right]\Tr\left[ \rho_{\b}(H) O_2 \right] \big| \leq c n e^{-\dist(O_1,O_2)/\xi}.
\ea
Due to the extra factor of $n$ in front of this bound, it implies the exponential decay of correlations only when $\dist(O_1,O_2)=\Omega(\log n)$.\qedhere
\end{proof}

\subsection{Tighter bounds for commuting Hamiltonians}
Here we show how using the commutativity of $H$ enables us to remove the extra factor of $n$ in the bound \eqref{eq:v40} that we derived for general Hamiltonians. We state this in the following theorem.

\begin{thm} \label{thm:The absence of zeros implies the decay of correlations in the commuting case} Suppose $H$ is a  geometrically-local Hamiltonian with mutually commuting terms that satisfies \condref{analyticity after measurement} for $\b\in[0,\b_c)$. Then the correlations between any two operators $O_1,O_2$ decay exponentially for all $\b\in[0,\b_c)$ as in \condref{Exponential decay of correlations}.
\end{thm}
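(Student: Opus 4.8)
The plan is to run the \emph{same} argument as in the proof of \thmref{The absence of zeros implies the decay of correlations in the non-commuting case}, but to first use the mutual commutativity of the local terms to replace the whole lattice by a region of size only $\poly(\dist(O_1,O_2))$, so that the spurious factor of $n$ appearing in \eqref{eq:v40} is replaced by $\poly(\dist(O_1,O_2))$ and is then absorbed into the exponential rate.

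First I would, exactly as in \eqref{eq:v5}, reduce to the covariance of fixed PSD operators $N_2,N_1$ with $\norm{N_2},\norm{N_1}\le1$ supported on $A_1=\supp(O_1)$ and $A_2=\supp(O_2)$. Pick $\ell=\Theta(\dist(O_1,O_2))$ and let $C$ be the $\ell$-neighborhood of $A_1$, chosen so that $A_2\subseteq\bar C$; then $|C|=O(\ell^D)$ and $A_1$ lies at distance $\Theta(\dist(O_1,O_2))$ from $\partial C$. Decompose $H=H^{\mathrm{int}}_{C}+H_{\partial C}+H^{\mathrm{int}}_{\bar C}$, where $H_{\partial C}$ collects the terms acting on both $C$ and $\bar C$ as in \propref{Quantum belief propagation}. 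Since all terms commute, $e^{-\b H}=e^{-\b H^{\mathrm{int}}_{C}}\,e^{-\b H_{\partial C}}\,e^{-\b H^{\mathrm{int}}_{\bar C}}$ (this is the exact, untruncated specialization of the belief propagation of \propref{Quantum belief propagation}, with $\eta=e^{-\b H_{\partial C}/2}$), and $O_1$ is supported well inside $C$ while $O_2$ is supported well inside $\bar C$, so each commutes with $e^{-\b H_{\partial C}}$, $O_1$ commutes with $e^{-\b H^{\mathrm{int}}_{\bar C}}$, and $O_2$ with $e^{-\b H^{\mathrm{int}}_{C}}$. Tracing out $\bar C$ then gives
\ba
\Tr\!\left[e^{-\b H}O_1O_2\right]=\Tr_{C}\!\left[\left(e^{-\b H^{\mathrm{int}}_{C}}O_1\right)S_C^{(O_2)}\right],\qquad S_C^{(O_2)}:=\Tr_{\bar C}\!\left[e^{-\b H_{\partial C}}e^{-\b H^{\mathrm{int}}_{\bar C}}O_2\right],
\ea
and similarly $\Tr[e^{-\b H}O_1]=\Tr_{C}[(e^{-\b H^{\mathrm{int}}_{C}}O_1)S_C]$, $\Tr[e^{-\b H}O_2]=\Tr_{C}[e^{-\b H^{\mathrm{int}}_{C}}S_C^{(O_2)}]$, $Z_{\b}(H)=\Tr_{C}[e^{-\b H^{\mathrm{int}}_{C}}S_C]$, where $S_C:=S_C^{(\iden)}\ge0$ and both $S_C,S_C^{(O_2)}$ are supported only on the thin shell of sites of $C$ that $H_{\partial C}$ touches (of size $O(\ell^{D-1})$) and \emph{commute} with $H^{\mathrm{int}}_{C}$ — this last point is exactly what commutativity buys us. Thus the covariance of $O_1,O_2$ in $\rho_{\b}(H)$ is \emph{identical} to a covariance in the bounded-size system living on $C$, between the operator $O_1$ on $A_1$ and the effective (entangled, but local) boundary operators $S_C^{(O_2)}$ versus $S_C$ on a shell at intrinsic distance $\Theta(\dist(O_1,O_2))$ from $A_1$.

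Next I would re-run the complex-analytic part of the proof of \thmref{The absence of zeros implies the decay of correlations in the non-commuting case} on this $C$-system: introduce one deviation variable $z_i$ per term of $H$ supported in $C$, keep $S_C,S_C^{(O_2)}$ (and the PSD pieces $N_2$ of $O_1$) fixed, and define the analogue of \eqref{eq:v42},
\[
f_C(\{z_i\})=\log\!\left(\frac{\Tr_C\!\big[e^{-\sum z_iH_i}\,N_2\,S_C^{(O_2)}\big]\;\Tr_C\!\big[e^{-\sum z_iH_i}\,S_C\big]}{\Tr_C\!\big[e^{-\sum z_iH_i}\,N_2\,S_C\big]\;\Tr_C\!\big[e^{-\sum z_iH_i}\,S_C^{(O_2)}\big]}\right).
\]
The case analysis of \eqref{eq:v11}--Case~4 applies verbatim with $A_2$ replaced by the boundary shell: a mixed derivative of $f_C$ at the origin vanishes unless the perturbed terms form a connected cluster meeting both $A_1$ and the shell, and since all those terms lie in $C$ such a cluster must contain at least $\Omega(\dist(O_1,O_2)/R)$ members; moreover $f_C(0,\dots,0)=0$ because $S_C,S_C^{(O_2)}$ are supported away from $A_1$. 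Feeding analyticity, the bound $|f_C|\le M$, and this many vanishing derivatives into \lemref{functions with many zero derivatives are small} yields $|f_C(\b,\dots,\b)|\le c_1M\,e^{-c_2\dist(O_1,O_2)/R}$, and (as in the reduction \eqref{eq:v6}--\eqref{eq:v7}) an upper bound of the same shape on the covariance. The point is that one now only needs $M=O(|C|)=O(\dist(O_1,O_2)^D)$ rather than $O(n)$, so the prefactor is polynomial in $\dist(O_1,O_2)$ and can be swallowed by replacing $\xi$ with a slightly larger constant, giving \condref{Exponential decay of correlations} with no factor of $n$.

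The main obstacle, and the step that will need real care, is establishing that $f_C$ is analytic on the relevant region and obeys $|f_C|=O(|C|)$: \condref{analyticity after measurement} is assumed only for the original $H$, and transferring it to the subsystem on $C$ with its \emph{induced} boundary operators $S_C,S_C^{(O_2)}$ is delicate because those operators are in general entangled across the shell and could a priori have condition number growing with $n$. Commutativity is again what should tame this: $S_C$ commutes with $H^{\mathrm{int}}_{C}$, so it is simultaneously block-diagonalizable with the bulk, and each trace $\Tr_C[e^{-\sum z_iH_i}\,P\,S_C]$ over a spectral projector $P$ of $S_C$ can be matched, at uniform temperature, with a post-measured partition function of $H$ itself (inheriting non-vanishing there), while the ratio structure of $f_C$ cancels the exterior contribution and leaves only the $O(|C|)$ dependence coming from $\norm{e^{-\sum z_iH_i}}\le e^{O(|C|)}$. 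Making this cancellation rigorous — in particular handling the genuinely quantum commuting case, where $S_C$ is not diagonal in a product basis and one must argue via the common eigenbasis of the local terms rather than spin configurations, and keeping track of the (non-)Hermiticity of $S_C^{(O_2)}$ for complex deviations — is the technical heart of the proof.
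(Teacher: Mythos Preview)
Your overall strategy --- use commutativity to localize to a region of size $\poly(\dist(O_1,O_2))$ and then rerun \thmref{The absence of zeros implies the decay of correlations in the non-commuting case} on that region so that the prefactor $n$ in \eqref{eq:v40} becomes $\poly(\dist(O_1,O_2))$ --- is exactly the paper's. The difference is geometric: you enclose only $A_1$ in $C$ and absorb $O_2$ into the boundary operator $S_C^{(O_2)}$, whereas the paper encloses \emph{both} $A_1$ and $A_2$ in a ball $B$ and obtains a single boundary \emph{state} $\s$ as in \eqref{eq:v12}, replacing $N_2$ by $\s N_2$ and leaving $N_1$ as an ordinary observable inside $B$.

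That choice is what dissolves the obstacle you flag as ``the technical heart of the proof''. With both observables inside $B$, each of the four traces in $f$ is $\Tr_B[e^{-H_B(\vec z)}N]$ with $N\in\{\s,\,\s N_1,\,\s N_2,\,\s N_1N_2\}$, all PSD of bounded norm, so \condref{analyticity after measurement} (applied to $H_B$) gives analyticity and $|f|\le O(|B|)$ on all of $\Omega_{\d,\b_c}$ with no further work. Your proposed fix --- matching the $C$-traces to post-measured partition functions of the full $H$ ``at uniform temperature'' and then relying on ratio cancellation --- does not deliver this, because the bound is needed along the whole extrapolation path, where the $z_i$ for terms inside $C$ range over $[0,\b]$ while the exterior is frozen at $\b$; such mixed-temperature tuples lie outside every $\G_{\d,\b'}$ for the full $H$. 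What is actually needed (and what the paper also tacitly uses for $H_B$) is \condref{analyticity after measurement} for the sub-Hamiltonian on the localized region. Granted that, your route can still be completed --- e.g.\ by replacing $S_C^{(N_1)}$ with the symmetrized $\Tr_{\bar C}\!\big[\sqrt{N_1}\,e^{-\b(H_{\partial C}+H^{\mathrm{int}}_{\bar C})}\sqrt{N_1}\big]$, which is PSD, yields the same traces against anything on $C$, and whose (possibly large) norm cancels in the ratio $f_C$ --- but enclosing both observables, as the paper does, bypasses this detour entirely.
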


\begin{proof}[\textbf{Proof of \thmref{The absence of zeros implies the decay of correlations in the commuting case}}]
The proof of this theorem follows similar steps to that of \thmref{The absence of zeros implies the decay of correlations in the non-commuting case}. A crucial difference, which is the only part where we use the commutativity of local terms $H_i$, is the following. The Hamiltonian $H$ in states $\r_{\b}(H|N_2)$ and $\r_{\b}(H|\iden)$ involves terms acting on all $n$ sites in lattice $\L$. In our analysis, we can essentially neglect the contribution of sites that are far from both region $A_1$ and $A_2$. In other words, as shown in \fig{regions_ABC}, let $B\subset \L$ be a ball of diameter slightly larger than $\dist(O_1,O_2)$ centered at $A_1$ that encloses region $A_2$. We restrict the Hamiltonian and states $\r_{\b}(H|N_2),\r_{\b}(H|\iden)$ to this region and include the effect of other sites by an operator acting on $\partial B$, the boundary of this enclosing region. We prove \eqref{eq:v7} for this smaller region. Without this step, we end up getting an upper bound like $c n \exp(-\dist(O_1,O_2)/\xi)$, which has an extra factor of $n$, the number of sites in $\L$, whereas with the restriction to the enclosing region, this factor is the number of sites in $B$ which is negligible compared to the exponential decay factor. 
More formally, since $H$ is a commuting Hamiltonian, we have $e^{-\b H}=e^{-\b H_B}e^{-\b (H-H_B)}$. Hence, we get
\ba
\Tr[\r_{\b}(H|N_2) N_1]&=\frac{\Tr[e^{-\b H }N_2 N_1]}{\Tr[e^{-\b H} N_2]}\nonumber\\
&=\frac{\Tr_B[e^{-\b H_B} \Tr_{\bar{B}}[e^{-\b (H-H_B)}]N_2 N_1]}{\Tr_B[e^{-\b H_B} \Tr_{\bar{B}}[e^{-\b (H-H_B)}]N_2]}\nonumber\\
&=\frac{\Tr_B[e^{-\b H_B} \s N_2 N_1]}{\Tr_B[e^{-\b H_B}\s N_2]}\nonumber\\
&=\Tr[\r_{\b}(H_B|\s N_2)N_1],\label{eq:v8} 
\ea
where
\ba\label{eq:v12}
\s=\frac{\Tr_{\bar{B}}[e^{-\b (H-H_B)}]}{ \Tr_{\bar{B}\cup \partial \bar{B}}[e^{-\b (H-H_B)}]}
\ea
is a state acting on the boundary $\partial \bar{B}$ \footnote{Based on our definition of the boundary of a region, the boundary $\partial \bar{B}$ is inside $B$.}. Thus, we can replace the operator $N_2$ by $\s \otimes N_2$ acting on a larger region $\partial \bar{B}\cup A_2$, which is still only a constant, and restrict our attention to region $B$. We can now repeat the argument of \thmref{The absence of zeros implies the decay of correlations in the non-commuting case}. Let the perturbed Hamiltonian restricted to region $B$ be $H_B(\vec{z})=\sum_{H_i: \supp(H_i)\subset B} z_i H_i$, where for simplicity, the number of local terms in $H_B$ is denoted again by $m$. By plugging \eqref{eq:v8} into \eqref{eq:v42}, we see that the function $f(z_1,\dots,z_m)$ is
\ba
f(z_1,\dots,z_m)&=\log\left(\frac{\Tr[\r_{\vec{z}}(H_B|\s N_2)N_1]}{\Tr[\r_{\vec{z}}(H_B|\s)N_1]}\right)\nonumber\\
&=\log\left(\frac{\Tr[e^{-\b H_B(\vec{z})} \s N_2 N_1]}{\Tr[e^{- H_B(\vec{z})} \s N_2]}\frac{\Tr[e^{-\b H_B(\vec{z})} \s]}{\Tr[e^{-\b H_B(\vec{z})} \s N_1]}\right)
\ea

The rest of the proof of \thmref{The absence of zeros implies the decay of correlations in the non-commuting case} applies to this function. In particular, assuming \condref{analyticity after measurement} holds, this function is bounded and analytic in $\Omega_{\d,\b_c}$, i.e. $|f(z_1,\dots,z_m)|\leq c|B|$. Similarly, one can see that the low-order derivatives of $f(z_1,\dots,z_m)$ are zero. Since the distance between $\partial{B}$ and $A_1$ is still $O(\dist(O_1,O_2))$, \lemref{functions with many zero derivatives are small} implies $|f(\b,\dots,\b)|\leq c |B| \exp(-\dist(O_1,O_2)/\xi)$. Hence, we have
\ba\label{eq:v44}\big|\Tr\left[ \rho_{\b}(H) O_1 O_2 \right]-\Tr\left[ \rho_{\b}(H) O_1 \right]\Tr\left[ \rho_{\b}(H) O_2 \right] \big| \leq c\ \dist(O_1,O_2)^{D} e^{-\dist(O_1,O_2)/\xi}.
\ea
\end{proof}

\subsection{Tighter bounds for 1D Hamiltonians}

\begin{thm} \label{thm:zeros to decay for 1D} 
 Let $H$ be a geometrically-local Hamiltonian on a $1D$ chain that satisfies \condref{analyticity after measurement}. Then, the exponential decay of correlations given in \condref{Exponential decay of correlations} also holds for this Hamiltonian. 
\end{thm}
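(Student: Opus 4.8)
The plan is to reproduce the argument of \thmref{The absence of zeros implies the decay of correlations in the commuting case}, replacing the only step that used commutativity --- the exact factorization $e^{-\b H}=e^{-\b H_B}e^{-\b(H-H_B)}$ --- by the quantum belief propagation of \propref{Quantum belief propagation}. The point that makes this go through in $1$D (and, as far as this method sees, only in $1$D) is that the boundary of an interval has $O(1)$ sites, so the belief-propagation operator and its inverse have \emph{bounded} norm, $\norm{\eta^{\pm 1}}\leq e^{O(\b h R)}=O(1)$, and the truncation $\eta_\ell$ of $\eta$ to a collar of width $\ell$ around the cut obeys $\norm{\eta-\eta_\ell}\leq e^{O(1)-\a_2\ell}$.

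First, as in the proof of \thmref{The absence of zeros implies the decay of correlations in the non-commuting case}, it suffices to bound $\big|\log\big(\Tr[\r_{\b}(H|N_2)N_1]/\Tr[\r_{\b}(H|\iden)N_1]\big)\big|$ with $\supp(N_2)=A_1$, $\supp(N_1)=A_2$, $\norm{N_1},\norm{N_2}\leq 1$. I would take $B$ to be the interval containing $A_1\cup A_2$ padded by $w=\Theta(\dist(O_1,O_2))$ sites on each side and apply \propref{Quantum belief propagation} with $C=B$; since $H_B$ and $H_{\bar B}$ act on disjoint regions, $e^{-\b H}=\eta\,(e^{-\b H_B}\otimes e^{-\b H_{\bar B}})\,\eta^{\dagger}$. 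Picking $\ell=\Theta(\dist(O_1,O_2))$ with $\ell+R<w$, the collar supporting $\eta_\ell$ is disjoint from $A_1\cup A_2$, so $\eta_\ell$ commutes with $N_1$ and $N_2$; substituting $\eta_\ell$ for $\eta$ and tracing out $\bar B$ yields, exactly as in \eqref{eq:v8}, the identity $\Tr[\r_{\b}(H|N_2)N_1]=\Tr[\r_{\b}(H_B|\sigma\otimes N_2)N_1]+O(e^{-\Omega(\dist(O_1,O_2))})$ and similarly for the denominator (with $\sigma\otimes N_2$ replaced by $\sigma$), where $\sigma$ is a positive operator on the $B$-side of the collar --- a region of size $O(w)$ disjoint from $A_1\cup A_2$. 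The error is controlled by $\norm{\eta-\eta_\ell}\leq e^{-\Omega(\ell)}$ together with $Z_{\b}(H_B)Z_{\b}(H_{\bar B})/Z_{\b}(H)\leq\norm{\eta^{-1}}^{2}=O(1)$; after rescaling $\sigma$ to a density matrix (the ratio is unaffected), $\sigma\otimes N_2$ is a norm-$\leq 1$ positive operator on a region of size $O(\dist(O_1,O_2))$.

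Next I would rerun the Dobrushin-type estimate on the window $B$. Introduce complex perturbations on the terms of $H_B$, put $H_B(\vec z)=\sum_{\supp(H_i)\subset B}z_i H_i$, and set, in analogy with \eqref{eq:v42},
\ba
f(\vec z)=\log\!\left(\frac{\Tr_B[e^{-H_B(\vec z)}\,\sigma N_2 N_1]\;\Tr_B[e^{-H_B(\vec z)}\,\sigma]}{\Tr_B[e^{-H_B(\vec z)}\,\sigma N_2]\;\Tr_B[e^{-H_B(\vec z)}\,\sigma N_1]}\right).
\ea
The Case 1--4 bookkeeping from \thmref{The absence of zeros implies the decay of correlations in the non-commuting case} carries over with $\sigma$ acting as an inert spectator: if the set of sites touched by the differentiated terms has a connected component disjoint from $A_1$ or disjoint from $A_2$, that part of $e^{-H_B(\vec z)}$ factors out of all four traces equally, $f$ is independent of the corresponding variables, and the mixed derivative at $\vec z=0$ vanishes. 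Thus a nonzero derivative requires a connected chain of differentiated terms linking $A_1$ to $A_2$, hence at least $\dist(O_1,O_2)/R$ distinct variables. Since, by \condref{analyticity after measurement} applied to the window, $f$ is analytic and $|f|\leq c|B|=O(\dist(O_1,O_2))$ on $\Omega_{\d,\b_c}$, \lemref{functions with many zero derivatives are small} gives $|f(\b,\dots,\b)|\leq c|B|\,e^{-\dist(O_1,O_2)/\xi}$; unwinding Step 1 and the reductions above then produces $\big|\Tr[\r_{\b}(H)O_1O_2]-\Tr[\r_{\b}(H)O_1]\Tr[\r_{\b}(H)O_2]\big|=O\!\left(\dist(O_1,O_2)\,e^{-\dist(O_1,O_2)/\xi}\right)$, which is \condref{Exponential decay of correlations} after absorbing the polynomial prefactor into a slightly larger $\xi$.

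\textbf{The main obstacle} is the boundedness input in the last step. Unlike the commuting case, where the boundary operator is confined to a width-$R$ shell and is genuinely ``local'', here $\sigma$ sits on a collar of width $\Theta(\dist(O_1,O_2))$ and can be highly entangled, so \condref{analyticity after measurement} --- phrased for \emph{local} measurement operators --- does not literally apply to $\sigma\otimes N_2$. One must either recast \condref{analyticity after measurement} so that it covers arbitrary positive boundary operators on constant-dimensional but growing regions, or argue that the window partition function $\Tr_B[e^{-H_B(\vec z)}\sigma\,(\cdot)]$ inherits non-vanishing and the $O(|B|)$ bound from \condref{analyticity after measurement} for $H$ through the belief-propagation identity --- which would in turn require extending \propref{Quantum belief propagation} to the non-Hermitian operators $\sum_i z_i H_i$. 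I expect this localization/entangled-boundary step to be where essentially all the work lies; Step 1 is a routine belief-propagation locality estimate, and the derivative counting in Step 2 is identical to the already-proven non-commuting case.
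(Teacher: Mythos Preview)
Your approach is correct and essentially the same as the paper's: both restrict to a window $B$ via quantum belief propagation, absorb the resulting boundary operator into one of the observables, and then rerun the argument of \thmref{The absence of zeros implies the decay of correlations in the non-commuting case} on the window so that the prefactor becomes $|B|=O(\dist(O_1,O_2))$ rather than $n$. The paper works directly with the covariance (rather than the log-ratio), shows $\Tr[\rho_\beta(H)O_1O_2]\approx\Tr[\rho_\beta(H_B)\tilde\sigma_{\partial B}O_1O_2]$ up to $e^{-\Omega(\ell)}$, and then simply invokes the bound \eqref{eq:v41} on $H_B$ for the pair $(O_1,\tilde\sigma_{\partial B}O_2)$ as a black box, instead of re-defining $f(\vec z)$ on the window. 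That is a packaging difference, not a conceptual one.

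Your flagged ``obstacle'' is overstated. The paper does \emph{not} extend belief propagation to the non-Hermitian $\sum_i z_iH_i$, nor does it prove any separate localization lemma; it just declares $\tilde\sigma_{\partial B}O_2$ to be the new observable and applies \condref{analyticity after measurement} to the window Hamiltonian with that operator --- exactly as the commuting-case proof already does with $\sigma\otimes N_2$ (whose support, in dimension $D>1$, also scales like $\dist(O_1,O_2)^{D-1}$). Whether ``local'' in \condref{analyticity after measurement} should be read as ``supported on a region of size independent of $n$'' is indeed left implicit in the paper, but that reading is already needed for \thmref{The absence of zeros implies the decay of correlations in the commuting case}, and the paper assumes it there without comment. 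In short, the step you expected to carry essentially all the work is disposed of in one line; no additional machinery is introduced.
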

\begin{proof}[\textbf{Proof of \thmref{zeros to decay for 1D}}]
The proof is similar to that of \thmref{The absence of zeros implies the decay of correlations in the non-commuting case} and \thmref{The absence of zeros implies the decay of correlations in the commuting case}. Recall that an important step is to introduce boundary states $\s$ that include the effect of terms in the Hamiltonian $H$ that are acting on the boundary or outside of some region $B$. Region $B$ encloses the support of operators whose correlations we want to bound. There, we use the commutativity of $H$ to find the boundary states $\s$ which does not hold in general. Here, we show how, by using the quantum belief propagation operator $\eta$ we introduced in \propref{Quantum belief propagation}, we can achieve the same boundary state in $1$D.  

We do not go through all steps of the proof of \thmref{The absence of zeros implies the decay of correlations in the non-commuting case} again. Instead, we directly show that by restricting the Hamiltonian to region $B$ and adding the boundary terms, the covariance in \eqref{eq:v2} changes negligibly. Then we apply bound \eqref{eq:v41} to this restricted covariance. Since the number of particles inside $B$ is constant, instead of the extra factor of $n$, we get a constant prefactor as desired. 

Recall that using the belief propagation equation \eqref{eq:p0} and the bound \eqref{eq:p1}, we can remove the boundary terms $H_{\partial B}$ acting between $B,\bar{B}$ from the Gibbs state and get 
\ba
\Tr[\r_{\b}(H)O_1 O_2]&=\Tr\left[\frac{Z_{\b}(H-H_{\partial B})}{Z_{\b}(H)} \eta \r_{\b}(H-H_{\partial B})\eta^{\dag} O_1 O_2\right]\nonumber\\
&=\Tr\left[\frac{Z_{\b}(H-H_{\partial B})}{Z_{\b}(H)} \eta_{\ell}\r_{\b}(H-H_{\partial B})\eta^{\dag}_{\ell}O_1 O_2\right]\nonumber\\
&+\Tr\left[\frac{Z_{\b}(H-H_{\partial B})}{Z_{\b}(H)} \eta_{\ell} \r_{\b}(H-H_{\partial B})(\eta^{\dag}-\eta_{\ell}^{\dag}) O_1 O_2\right]\nonumber\\
&+\Tr\left[\frac{Z_{\b}(H-H_{\partial B})}{Z_{\b}(H)} (\eta-\eta_{\ell}) \r_{\b}(H-H_{\partial B})\eta^{\dag} O_1 O_2\right],
\ea
where in the second line, we replaced $\eta$ with the truncated operator $\eta_{\ell}$. To simplify this equation, we absorb the coefficient $Z_{\b}(H-H_{\partial B})/Z_{\b}(H)$ into the operators $\eta,\eta_{\ell}$, and define 
\ba
\tilde{\eta}=\left(\frac{Z_{\b}(H-H_{\partial B})}{Z_{\b}(H)}\right)^{1/2}\eta,\quad  \tilde{\eta}_{\ell}=\left(\frac{Z_{\b}(H-H_{\partial B})}{Z_{\b}(H)}\right)^{1/2}\eta_{\ell}.
\ea
Hence, we have
\ba
\left|\Tr\left[\r_{\b}(H)O_1 O_2\right]-\Tr\left[\tilde{\eta}_{\ell}\r_{\b}(H-H_{\partial B})\tilde{\eta}^{\dag}_{\ell}O_1 O_2\right]\right|&\leq\left|\Tr\left[\tilde{\eta} \r_{\b}(H-H_{\partial B})(\tilde{\eta}^{\dag}-\tilde{\eta}_{\ell}^{\dag}) O_1 O_2\right]\right|\nonumber\\
&+\left|\Tr\left[(\tilde{\eta}-\tilde{\eta}_{\ell}) \r_{\b}(H-H_{\partial B})\tilde{\eta}^{\dag} O_1 O_2\right]\right|.
\ea
According to \eqref{eq:p1}, we have $\norm{\eta-\eta_{\ell}} \leq  e^{\a_1 |\partial B|-\a_2 \ell}$ and $\norm{\eta} \leq  e^{\b/2 \norm{H_{\partial B}} }$. Also, \lemref{site_removal} implies $Z_{\b}(H-H_{\partial B})/Z_{\b}(H)\leq e^{\alpha_3 |\partial B|}$ for some constant $\a_3$ that depends on the details of $H$. Using these bounds as well as the Cauchy-Schwarz and H\"older inequalities, we get the following bound for some constants $c'$ and $\a_4$:
\ba
\left|\Tr\left[\r_{\b}(H)O_1 O_2\right]-\Tr\left[\tilde{\eta}_{\ell}\r_{\b}(H-H_{\partial B})\tilde{\eta}^{\dag}_{\ell}O_1 O_2\right]\right|&\leq 2\norm{O_1} \norm{O_2} \norm{\eta-\eta_{\ell}}  \norm{\eta}  \nonumber\\
&\leq c'e^{-\a_4 \ell}.
\ea

To arrive at the last line, we used the fact that $|\partial B|$ in $1$D is just a constant that depends on the range of $H$, and we assumed the truncation length $\ell$ is sufficiently larger than $|\partial B|$.

Note that since we removed the boundary terms $H_{\partial B}$, the Gibbs state decomposes into $\r_{\b}(H-H_{\partial B})=\r_{\b}(H_{\bar{B}})\r_{\b}(H_{B})$, which allows us to write 
\ba
\Tr\left[\tilde{\eta}_{\ell}\r_{\b}(H-H_{\partial B})\tilde{\eta}^{\dag}_{\ell}O_1 O_2\right]=\Tr\left[\r_{\b}(H_B)\tilde{\s}_{\partial B}O_1 O_2\right],
\ea
in which we assume region $B$ is chosen to be wide enough so that both $O_1,O_2$ are sufficiently far from the boundary $\partial B$ compared to length $\ell$. This means $\eta_{\ell}$ does not overlap with $O_1,O_2$. We also define the \emph{unnormalized} boundary state $\tilde{\s}_{\partial B}$ by
\ba
\tilde{\s}_{\partial B}=\tilde{\eta}^{\dag}_{\ell}\tilde{\eta}_{\ell} \Tr_{\bar{B}\setminus{\supp(\eta_{\ell})}}[\r_{\b}(H_{\bar{B}})].
\ea
Notice that $\tilde{\s}_{\partial B}$ is a PSD matrix. To see why, we use the fact that $\Tr_{\bar{B}\setminus{\supp(\eta_{\ell})}}[\r_{\b}(H_{\bar{B}})]$ is a PSD matrix and hence can be written as $W W^{\dag}$ for some operator $W$ supported on $\supp(\eta_{\ell})\cap \bar{B}$. Then it is not hard to see that for any state $\ket{\phi}$, we have
\ba
\bra{\phi}\tilde{\s}_{\partial B}\ket{\phi}=\sum_{i=1}^{\dim(\supp(W))} \bra{i}W^{\dag}\bra{\phi}\tilde{\eta}_{\ell}^{\dag} \tilde{\eta}_{\ell}W\ket{i}\ket{\phi}\geq 0.
\ea
Overall, we have
\ba
\left|\Tr\left[\r_{\b}(H)O_1 O_2\right]-\Tr\left[\r_{\b}(H_B)\tilde{\s}_{\partial B}O_1 O_2\right]\right|\leq c'e^{-\a_4 \ell}.
\ea
Similarly, we can replace $\Tr[\r_{\b}(H)O_i]$ with $\Tr[\r_{\b}(H_B)\tilde{\s}_{\partial B}O_i ]$ up to an exponentially small error in $\ell$,
\ba
\left|\Tr\left[\r_{\b}(H)O_i \right]-\Tr\left[\r_{\b}(H_B)\tilde{\s}_{\partial B}O_i \right]\right|\leq c'e^{-\a_4 \ell},\quad i\in\{1,2\}.
\ea
We can now plug these expressions into the covariance \eqref{eq:v2}. Since $\norm{\Tr[\r_{\b}(H_B) \tilde{\s}_{\partial B} O_i]}  $ is just a constant, we see that there exist constants $c''$ and $\a_5$ such that
\ba
&\big|\Tr\left[ \rho_{\b}(H) O_1 O_2 \right]-\Tr\left[ \rho_{\b}(H) O_1 \right]\Tr\left[ \rho_{\b}(H) O_2 \right] \big| \nonumber\\
&=\big|\left(\Tr\left[\r_{\b}(H_B)\tilde{\s}_{\partial B}O_1 O_2\right]+c'e^{-\a_4 \ell}\right)-\left(\Tr\left[\r_{\b}(H_B)\tilde{\s}_{\partial B}O_1\right]-c'e^{-\a_4 \ell}\right)\left(\Tr\left[ \rho_{\b}(H) \tilde{\s}_{\partial B} O_2 \right]-c'e^{-\a_4 \ell}\right) \big|\nonumber\\
&\leq \big|\Tr\left[\r_{\b}(H_B)\tilde{\s}_{\partial B}O_1 O_2\right]-\Tr\left[\r_{\b}(H_B)\tilde{\s}_{\partial B}O_1\right]\Tr\left[ \rho_{\b}(H) \tilde{\s}_{\partial B} O_2 \right]\big|+c''e^{-\a_5 \ell}.\label{eq:v15}
\ea
We can consider $ \tilde{\s}_{\partial B}O_2$ to be the new operator whose correlation with $O_1$ we want to measure. The operator $\tilde{\s}_{\partial B}O_2$ is still $\dist(O_1,O_2)$ far from $O_1$. Thus, using the bound \eqref{eq:v41} proved in \thmref{The absence of zeros implies the decay of correlations in the non-commuting case}, we get
\ba
\big|\Tr\left[\r_{\b}(H_B)\tilde{\s}_{\partial B}O_1 O_2\right]-\Tr\left[\r_{\b}(H_B)\tilde{\s}_{\partial B}O_1\right]\Tr\left[ \rho_{\b}(H) \tilde{\s}_{\partial B} O_2 \right]\big|\leq c |B| e^{-\dist(O_1,O_2)/\xi}.
\ea
Combined with \eqref{eq:v15}, we have
\ba
\big|\Tr\left[ \rho_{\b}(H) O_1 O_2 \right]-\Tr\left[ \rho_{\b}(H) O_1 \right]\Tr\left[ \rho_{\b}(H) O_2 \right] \big|\leq c |B| e^{-\dist(O_1,O_2)/\xi}+c''e^{-\a_5 \ell}.
\ea
Since all the coefficients in the bound on the right-hand side are constants, it suffices to choose $\ell$ large enough compared to $\dist(O_1,O_2)$ so that it is negligible compared to the $e^{-\dist(O_1,O_2)/\xi}$ term. This is possible because we assumed $\partial{C}$ is sufficiently (but still only constantly) far from $O_1,O_2$. This allows us to get a bound that does not depend on $n$ as before, hence finishing the proof.  
\end{proof} 

\begin{rem}\label{rem:belief propagation in higher dim}
Recall that from \eqref{eq:p1} we know that the error of truncating the belief propagation operator $\eta$ is
\ba
\Norm{\eta - \eta_{\ell}} \leq e^{\a_1 |\partial B|-\a_2 \ell}.
\ea
In our setting, the dependence of the error bound  on $e^{\a_1 |\partial B|}$ makes this result only be applicable when $\L$ is a $1$D lattice. Otherwise, since $|\partial B|$ is proportional to $\diam(B)^{D-1}$, we cannot choose length $\ell$ small enough compared to $\diam(B)$. Hence, we do not get a local operator as required. 
\end{rem}

\section{Exponential decay of correlations implies analyticity}\label{sec:Exponential decay of correlations gives the absence of complex zeros}

In this section, we focus on the converse of \thmref{The absence of zeros implies the decay of correlations in the non-commuting case}. In \secref{Zero-free region implies the exponential decay of correlations}, we showed that the exponential decay of correlations is a \emph{necessary} condition for the analyticity of the free energy. In this section, we ask if this condition is also \emph{sufficient} for the analyticity. This was first established for classical systems by Dobrushin and Shlosman \cite{Dobrushin1}. It appears that the quantum generalization of that proof requires the development of new tools. The goal in this section is to identify these tools. Our contribution is to extend the result of \cite{Dobrushin1} to classical systems that are not translationally invariant and express the proof in a language that is suitable for the quantum case.

Here, for clarity, we consider a simpler version of \condref{analyticity after measurement} that is stated below:
\begin{customcond}{1'}[Analyticity of the free energy]\label{simple Absence of complex zeros} \label{cond:simple absence of zeros}
The free energy of a geometrically-local Hamiltonian $H$ is $\d$-analytic at inverse temperature $\b\in \bbR^+$ if for all $\b'\in\bbC$ such that $|\b'-\b|\leq \d$, the free energy is analytic and there exists a constants $c$ such that
\ba\label{eq:q1}
\left| \log\left( \Tr \left[ e^{ -\b' H} \right] \right)\right|\leq c n.
\ea
\end{customcond}

Recall that in \condref{analyticity after measurement}, we assumed that the free energy of a \emph{post-selected} state is analytic and bounded. In comparison, \condref{simple absence of zeros} only includes partition functions with an \emph{open boundary condition}. For algorithmic purposes, like the one in \secref{Algorithm for estimating the partition function}, this version is sufficient. However, with small modifications, the same proof can be adapted to show \condref{analyticity after measurement} with arbitrary boundary conditions. 

Our goal is to derive \condref{simple absence of zeros} assuming that the correlations in the system decay exponentially. We restate this condition for convenience.
\\

\noindent\textbf{Restatement of \condref{Exponential decay of correlations}.}\label{cond:Restate exponential decay of correlations} \emph{The correlations in the Gibbs state $\rho_{\b}(H)$ of a geometrically-local Hamiltonian decay exponentially if for any local Hermitian operators $O_1$ and $O_2$, there exist constants $\xi$ and $c$ such that}
\ba\label{eq:v2_}
\big|\Tr\left[ \rho_{\b}(H) O_1 O_2 \right]-\Tr\left[ \rho_{\b}(H) O_1 \right]\Tr\left[ \rho_{\b}(H) O_2 \right] \big| \leq c\norm{O_1}\norm{O_2} e^{-\dist(O_1,O_2)/\xi}.
\ea

Although we consider classical systems, we find it more convenient to continue using quantum notation. This also makes it easier to point out where the proof breaks for quantum systems. The reader, however, should note that the terms in the Hamiltonian are all diagonal in a product basis and the projector operators we use basically \emph{fix the value} of classical spins. 

More formally, we prove the following theorem in this section.

\begin{thm}[The decay of correlations implies analyticity for classical systems] \label{thm:The decay of correlations implies the absence of zeros} Let $H=\sum_{i=1}^m H_i$ be a geometrically-local Hamiltonian of a classical spin system, i.e. the local terms $H_i$ are all diagonal in the same product basis. For such a system, the exponential decay of correlations given in \condref{Exponential decay of correlations} implies analyticity of the free energy in \condref{simple absence of zeros}.
\end{thm}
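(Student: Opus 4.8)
\medskip
\noindent\textbf{Proof proposal.} The plan is to run the same site-by-site induction as in the proof of \thmref{t_higH_temp}, but to replace the cluster expansion---which only converges at high temperature---by the hypothesis of exponential decay of correlations, which holds for every $\b$ below the critical point. Fix a complex inverse temperature $\b'=\b+i\theta$ with $|\theta|\leq\d$. Since $Z_{\b'}(\L)=\Tr[e^{-\b'H}]$ is an entire function of $\b'$, \condref{simple absence of zeros} follows once we show $Z_{\b'}(\L)\neq0$ together with $\bigl|\log|Z_{\b'}(\L)|\bigr|\leq O(n)$. Writing $Z_{\b'}(\L)$ as the telescoping product \eqref{eq:t2} over an increasing chain $\emptyset=\L_0\subset\L_1\subset\cdots\subset\L_n=\L$, each adjoining a single site, it suffices to prove the $\b'$-uniform two-sided estimate
\ba\label{eq:pp-ratio}
c^{-1}\ \leq\ \left|\frac1d\,\frac{Z_{\b'}(\L_{j+1})}{Z_{\b'}(\L_j)}\right|\ \leq\ c ,\qquad j=0,1,\dots,n-1,
\ea
for a constant $c=c(\d)>1$; multiplying over $j$ then gives $e^{-O(n)}\leq|Z_{\b'}(\L)|\leq e^{O(n)}$, which is the theorem. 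For \emph{real} $\b$ the estimate \eqref{eq:pp-ratio} is exactly \lemref{site_removal}, so the whole difficulty is to carry it into the strip $|\theta|\leq\d$.

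The key reduction---the classical analogue of the ``fraction with projectors'' step sketched in \secref{Sketch of our techniques}---is to pass from \eqref{eq:pp-ratio} to a local statement about the \emph{conditional statistics of the added spin}. Because the system is classical all local terms are simultaneously diagonal, so conditioning on the value $s$ of the spin at $x=\L_{j+1}\setminus\L_j$ gives the exact identity
\ba\label{eq:pp-cond}
\frac1d\,\frac{Z_{\b'}(\L_{j+1})}{Z_{\b'}(\L_j)}\ =\ \frac1d\sum_{s}\,\Tr_{\L_j}\!\bigl[\,\r_{\b'}(H_{\L_j})\,e^{-\b'V_x^{(s)}}\bigr],
\ea
where $V_x^{(s)}$ is the diagonal, $O(1)$-norm operator supported on the constant-size neighbourhood of $x$ obtained by fixing the spin at $x$ to $s$ in the terms of $H$ touching $x$, and $\r_{\b'}(H_{\L_j})=e^{-\b'H_{\L_j}}/Z_{\b'}(\L_j)$ is the unit-trace complex Gibbs operator. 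At $\theta=0$ each summand is a genuine conditional expectation, strictly positive and pinned inside $[e^{-O(1)},e^{O(1)}]$ by \lemref{site_removal}; the target is to show that for $\d$ a small enough constant each complex summand stays within an additive $\e_0$ of its $\theta=0$ value, which immediately yields \eqref{eq:pp-ratio}.

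I would establish this in two moves. (a) \emph{Isolation.} Surround $x$ by a ball $B$ of a large \emph{constant} radius $r$, remove from $H$ the imaginary part $i\theta$ on every term supported inside $B$, and express the resulting relative change in \eqref{eq:pp-cond} as the expectation of a constant-norm operator living in $B$; one then argues that, because a perturbation localized near $x$ is screened from the bulk, this correction is controlled by the exponential decay of correlations (\condref{Exponential decay of correlations}), applied to the genuine Gibbs state on the real axis, and is $\le e^{-\Omega(r)}\le\e_0/2$ once $r$ is a large enough constant depending only on $\xi$. (b) \emph{Finite-dimensional core.} After the deformation the remaining quantity is a ratio of partition functions of the region $B$ equipped with projectors on $\partial B$; as $B$ is a fixed finite system (dimension $d^{O(r^D)}$, independent of $n$), this ratio is an analytic function of $\theta$ that is uniformly bounded on $|\theta|\le\d$ and whose $\theta=0$ value is a positive partition function bounded below by a constant uniformly in $n$ and in the boundary configuration, hence it remains within $\e_0/2$ of that value for $\d$ small enough. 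Combining (a) and (b) closes the induction.

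The main obstacle---and what the rest of the proof must spend its effort on---is the interface between the \emph{real}-temperature content of \condref{Exponential decay of correlations} and the \emph{complex}-temperature object $Z_{\b'}$: the complex operator $\r_{\b'}$ is not a state, its trace norm is not controlled a priori, so complex expectations cannot be bounded crudely by $\|\cdot\|$ times $\|\cdot\|_1$, and every use of correlation decay must be routed through genuine real-$\b$ conditional expectations on bounded regions and only then continued analytically across the constant-width strip. Carrying this out while carefully tracking the boundary conditions implemented by the projectors is exactly where the \emph{classical} hypothesis is used: there the boundary projectors merely fix spin values and commute with everything, so restricting to $B$ and summing over boundary configurations decouples cleanly. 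In the quantum setting the corresponding boundary operators---the post-selection projectors appearing in \condref{analyticity after measurement}---can be entangled, which destroys this decoupling and is the reason the theorem is stated here only for classical spin systems.
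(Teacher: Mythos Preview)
Your overall architecture---induction, isolation in a constant-radius ball $B$, reduction to a finite-dimensional core---matches the paper's, and you correctly identify the central difficulty in your last paragraph. But step~(a) has a genuine gap. After you remove $i\theta$ from the terms inside $B$, the terms outside $B$ still carry the imaginary part, so the expectation in \eqref{eq:pp-cond} is still taken against the complex operator $\r_{\b'}(H_{\L_j})$, not a real Gibbs state; \condref{Exponential decay of correlations} says nothing about this object, and your site-by-site induction hypothesis \eqref{eq:pp-ratio} only gives $Z_{\b'}(\L_j)\neq0$, not the finer control needed to bound complex expectations. (Separately, removing $i\theta$ on $O(r^D)$ terms should cost $O(\d\, r^D)$, not $e^{-\Omega(r)}$; the exponential decay must enter through a different mechanism.) The same issue resurfaces in step~(b): reducing to the finite region $B$ with boundary projectors still requires summing over boundary configurations weighted by the complex partial partition functions coming from outside $B$, and those weights are uncontrolled.

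The paper's fix is to change the quantity carried through the induction. Instead of the site-removal ratio, it proves by induction on region size the ``small relative phase'' condition \eqref{eq:q2}: the ratio $\Tr[e^{-i\b_i H'}\r_{\b_r}(H'|P)]/\Tr[e^{-i\b_i H'}\r_{\b_r}(H'|Q)]$ equals $1+\t(\d)$ for any two boundary projectors differing at one site. The key is the classical factorisation $e^{-\b'H'}=e^{-i\b_i H'}e^{-\b_r H'}$, so the complex phase acts on the \emph{real} conditional Gibbs state $\r_{\b_r}(H'|\cdot)$. In the inductive step one peels off the perturbation on the ball $B$ (a crude $O(\d)$ error), uses the classical Markov property to rewrite the rest as a sum over boundary configurations on $\partial\bar B$, applies the induction hypothesis to the complex factor on the smaller region $\bar B$, and applies correlation decay to the genuinely real-$\b$ quantity $\Tr[P_s(\r_{\b_r}(H'|P)-\r_{\b_r}(H'|Q))]$. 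The right object to induct on is thus the sensitivity of the complex partition function to a boundary-condition change, because that is exactly what correlation decay controls at real $\b$; once \eqref{eq:q2} is in hand, \propref{fraction with projectors} and \propref{bound on Z with projector} recover the site-removal bound and hence \condref{simple absence of zeros}.
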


\begin{figure}[h]
  \centering
  
\usetikzlibrary{shapes,arrows,fit,backgrounds,calc}
\tikzstyle{box} = [rectangle, very thick, rounded corners, minimum width=3cm, minimum height=1cm,text centered, draw=black, fill=white, text centered,  inner sep=6pt, inner ysep=5pt]

\tikzstyle{arrow} = [double,double equal sign distance,-implies]

\tikzstyle{block} = [very thick, draw=black, rectangle, text width=1cm, text centered, minimum height=1.2cm, minimum width=.2cm node distance=3cm,fill=white,]

\begin{tikzpicture}[node distance=2.45cm]
\node (box1) [box] [draw, align=center]{
        \condref{simple absence of zeros}\\
        \emph{Analyticity of the free energy}};
\node (box2) [box, below of=box1] [draw, align=center]{
        Complex site removal bound \eqref{eq:q5}};
\node (box3) [box, below of=box2] [draw, align=center]{
        Small relative phase with\\
         different boundary conditions \eqref{eq:q2}};
\node (box4) [box, below of=box3] [draw, align=center]{
        \condref{Exponential decay of correlations}\\
        \emph{Exponential decay of correlations}};

\draw [arrow] (box2) -- node [anchor=west] {\propref{bound on Z with projector}}(box1);
\draw [arrow] (box3) -- node [anchor=east] {\propref{fraction with projectors}}(box2);
\draw [arrow] (box4) -- node [anchor=west] {\propref{small phase from decay}}(box3);

\end{tikzpicture}

\caption{The structure of the proof of \thmref{The decay of correlations implies the absence of zeros}. We follow a series of reductions to show \condref{simple absence of zeros}.}\label{fig:structure of the proof}
\end{figure}

We prove this theorem in multiple steps that are formulated in \proprefthree{bound on Z with projector}{fraction with projectors}{small phase from decay}. An outline of the proof is given in \fig{structure of the proof}. It turns out that \propref{bound on Z with projector}  and \propref{fraction with projectors} continue to hold for commuting Hamiltonians, so we give their statements and proofs for these Hamiltonians. However, for reasons to be highlighted in its proof, \propref{small phase from decay} only holds for classical systems. 

\begin{proof}[\textbf{Proof of \thmref{The decay of correlations implies the absence of zeros}}]
The proof is immediate from the combination of \propref{fraction with projectors}, \propref{bound on Z with projector}, and \propref{small phase from decay}.
\end{proof}
\subsection{Step 1: \condref{simple absence of zeros} from the complex site removal bound}

Our first step, stated in \propref{bound on Z with projector}, is to show how a variant of the complex site removal bound that we discussed in \secref{High temperatures: Fisher zeros} allows us to find an upper bound on the absolute value of the free energy as in \condref{simple absence of zeros}. Compared to the bound \eqref{eq:t0} in \secref{High temperatures: Fisher zeros}, this variant includes setting a non-trivial boundary condition after removing a subset of lattice sites. To avoid subtleties arising from entangled boundary conditions and projectors, we need to give a slightly different proof compared to what we did before \eqref{eq:t2}. 

\begin{prop}[\condref{simple absence of zeros} from the complex site removal bound]\label{prop:bound on Z with projector}
 Let $H=\sum_{k=1}^m H_k$ be a geometrically-local Hamiltonian with mutually commuting terms on lattice $\L$. Let $P$ be a projector acting on $\partial \bar{A}$ where $A\subset \L$ is a region of constant size\footnote{Recall $\partial \bar{A}$ is the boundary of $\bar{A}$ and is inside $A$. For a $(\k,R)$-local $H$, $\partial A=\{v\in \L\setminus A: \exists v'\in A,\ \dist(v-v')\leq R\}$.}. We denote the terms in $H$ acting on $\bar{A}$ or $\partial \bar{A}$ by $H'$ and the real and imaginary parts of $\b \in \bbC$ by $\b_r$ and $\b_i$. Suppose when $|\b_i|\leq \d$ for some sufficiently small $\d$, there exists a constant $c$ such that 
 \ba
\left|\log \left(\frac{\Tr_{\bar{A}\cup A}[e^{-\b H}]}{\Tr_{\bar{A}\cup \partial \bar{A}}\left[e^{-\b H'}P\right]}\right)\right| \leq c. \label{eq:q5}
\ea
Then,
\begin{itemize}
    \item [i.] The observables supported on $A$ like $H_A$ have bounded expectations with respect to the complex perturbed Gibbs state $\r_{\b}(H)$. That is, there exists a constant $c'$ such that $|\Tr\left[H_{A}\r_{\b}(H)\right]|\leq c' \norm{H_A}.\label{eq:q19}$
    \item [ii.] \condref{simple absence of zeros} holds for this system.
\end{itemize}
\end{prop}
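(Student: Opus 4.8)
The plan is to establish item~(i) first and then to deduce \condref{simple absence of zeros} (item~(ii)) from it by an integration along the real axis. For item~(i) I would fix a constant-size region $A\supseteq\supp(H_A)$ and reduce both $Z_\b(\L)$ and $\Tr[H_A e^{-\b H}]$ to the sublattice $\bar{A}\cup\partial\bar{A}$. Splitting $H=H'+H''$, with $H'$ the part living on $\bar{A}\cup\partial\bar{A}$ and $H''=H-H'$ living in the interior $A\setminus\partial\bar{A}$, the mutual commutativity of the terms of $H$ gives the exact factorization $e^{-\b H}=e^{-\b H'}e^{-\b H''}$; since $e^{-\b H'}$ acts trivially on $A\setminus\partial\bar{A}$, tracing that region out yields
\ba
Z_\b(\L)=\Tr_{\bar{A}\cup\partial\bar{A}}\big[e^{-\b H'}N\big],\qquad \Tr\big[H_A\,e^{-\b H}\big]=\Tr_{\bar{A}\cup\partial\bar{A}}\big[e^{-\b H'}M\big],\nonumber
\ea
where $N=\Tr_{A\setminus\partial\bar{A}}[e^{-\b H''}]$ and $M=\Tr_{A\setminus\partial\bar{A}}[H_A e^{-\b H''}]$ are operators on $\partial\bar{A}$. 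In the classical case $M,N$ are diagonal in the product basis $\{\ket{s}\}$ of $\partial\bar{A}$, so $\Tr[H_A e^{-\b H}]=\sum_s M_s Z_s$ with $Z_s:=\Tr_{\bar{A}\cup\partial\bar{A}}[e^{-\b H'}\ket{s}\bra{s}]$ --- precisely the denominator of \eqref{eq:q5} for the rank-one projector $P=\ket{s}\bra{s}$ on $\partial\bar{A}$. Invoking the hypothesis \eqref{eq:q5} for each such $P$ gives $Z_s=e^{-\theta_s}Z_\b(\L)$ with $|\theta_s|\le c$ (in particular all the $Z_s$ and $Z_\b(\L)$ are nonzero), hence
\ba
\Tr[H_A\rho_\b(H)]=\frac{\sum_s M_s Z_s}{Z_\b(\L)}=\sum_s M_s e^{-\theta_s},\qquad \big|\Tr[H_A\rho_\b(H)]\big|\le e^{c}\sum_s|M_s|.\nonumber
\ea
I would close with the elementary estimate $|M_s|=\big|\sum_t H_A(s,t)\,e^{-\b H''(s,t)}\big|\le\norm{H_A}\sum_t e^{-\b_r H''(s,t)}$, which after summing over $s$ gives $\sum_s|M_s|\le\norm{H_A}\Tr_A[e^{-\b_r H''}]=O(\norm{H_A})$ since $|A|$, $\norm{H''}$ and $|\b_r|$ are all $O(1)$. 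This proves item~(i) with a constant $c'=e^{c}\cdot O(1)$ depending only on the geometric parameters of $H$; for a commuting but non-diagonal $H_A$ one first decomposes $H_A$ spectrally into projectors on $A$ and argues term by term.

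For item~(ii), analyticity is immediate: for \eqref{eq:q5} to hold at a given $\b$ with $|\b_i|\le\d$, the ratio inside the logarithm must be finite and nonzero, so $Z_\b(\L)\neq 0$ throughout the $\d$-neighborhood of any real segment $[0,\b]$ lying in this region, and since $Z_\b(\L)$ is entire, $\log Z_\b(\L)$ is analytic there. For the quantitative bound I would apply item~(i) with $A$ ranging over constant-size neighborhoods of each local term $H_i$ (applying \eqref{eq:q5} with the same constant $c$ to each such $A$, so that $c'$ is uniform in $i$) and sum: $\big|\Tr[H\rho_\b(H)]\big|\le\sum_{i=1}^m|\Tr[H_i\rho_\b(H)]|\le c'hm=O(n)$, using $m=O(n)$ for a geometrically-local Hamiltonian. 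Since $\tfrac{d}{d\b}\log Z_\b(H)=-\Tr[H\rho_\b(H)]$ wherever $Z_\b\neq 0$, and $\log Z_0(H)=n\log d$, integrating along a path of length $O(1)$ inside the zero-free neighborhood from $0$ to any $\b'$ in it gives $|\log Z_{\b'}(H)|\le n\log d+O(n)=O(n)$, which is exactly the bound in \condref{simple absence of zeros}.

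The main obstacle is Step~1. Because $\b$ is complex, $e^{-\b H}$ is not positive, so the partial-trace manipulations and the passage to rank-one boundary projectors must be done carefully --- in particular, identifying the ratios in \eqref{eq:q5} with the correct boundary-conditioned partition functions $Z_s$ and controlling the phases $\theta_s$ uniformly in $s$. Crucially, the factorization $e^{-\b H}=e^{-\b H'}e^{-\b H''}$ uses commutativity of the terms of $H$ in an essential way, which is exactly why \propref{bound on Z with projector} is restricted to commuting Hamiltonians; once Step~1 is in place, item~(ii) follows by the routine integration above.
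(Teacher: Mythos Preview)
Your proposal is broadly correct and follows the same architecture as the paper: reduce to bounding the local expectation $|\Tr[H_A\rho_\b(H)]|$ via the hypothesis \eqref{eq:q5} applied with rank-one projectors on $\partial\bar A$, then aggregate over all local terms to get $|\log Z_\b(H)|\le O(n)$. The aggregation step differs: the paper writes $\log(Z_\b/Z_{\b_r})$ as a telescoping product that turns the imaginary part on one local term $H_k$ at a time and bounds each ratio via the mean value theorem, whereas you integrate $\tfrac{d}{d\b}\log Z_\b=-\Tr[H\rho_\b(H)]$ along a path from $0$. Your route is arguably cleaner, since the Gibbs state along your path always has all terms at the same $\b'$, while the paper's intermediate states carry mixed complex perturbations and it absorbs this with the remark ``without loss of generality, all local terms in $H'$ are complex perturbed.''

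There is, however, a genuine gap in your item~(i) for the commuting but non-classical case. Your argument relies on $M=\Tr_{A\setminus\partial\bar A}[H_A e^{-\b H''}]$ being diagonal in the product basis on $\partial\bar A$; for a general commuting Hamiltonian this need not hold, and since $\b$ is complex, $M$ need not even be Hermitian, so your proposed fix ``decompose $H_A$ spectrally into projectors on $A$'' does not reduce the problem to the diagonal situation on $\partial\bar A$. The paper handles this differently: it bounds the operator norm of $\Tr_{\bar A}[e^{-\b H'}]/Z_\b(H)$ on $\partial\bar A$ by taking its top singular vectors $\ket u,\ket v$ and writing the rank-one operator $\ket u\bra v$ as a fixed complex combination of four rank-one \emph{projectors} on $\partial\bar A$ (equation~\eqref{eq:q7}), so that the hypothesis \eqref{eq:q5} can be invoked on each. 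This singular-vector-to-projectors trick is precisely what makes the argument go through for commuting (not merely diagonal) Hamiltonians, and it is the missing ingredient in your sketch.
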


\begin{proof}[\textbf{Proof of \propref{bound on Z with projector}}] By using \lemref{site_removal}, we have $|\log (\Tr[e^{-\b_r H}])|\leq O(n)$. Hence to show \eqref{eq:q1}, it is sufficient to show that
\ba\label{eq:q-1}
\left|\log \left(\frac{\Tr[e^{-\b H}]}{\Tr[e^{-\b_r H}]}\right)\right|\leq c n.
\ea
The difference between the numerator and denominator of \eqref{eq:q-1} is the addition of the complex perturbations $\b_i H=\sum_{k=1}^m \b_i H_k$ to the exponent of the numerator. Instead of adding these terms all together, we can add local terms $\b_i H_k$ step by step. We do this by setting up a telescoping series of products such that in each fraction, a new term $\b_i H_k$ is added. We have
\ba\label{q}
\frac{\Tr[e^{-\b H}]}{\Tr[e^{-\b_r H}]}=\frac{\Tr[e^{-\b_r H-i\b_i \sum_{k=1}^m H_k }]}{\Tr[e^{-\b_r H-i\b_i \sum_{k=1}^{m-1} H_k}]}\frac{\Tr[e^{-\b_r H-i\b_i \sum_{k=1}^{m-1} H_k }]}{\Tr[e^{-\b_r H-i\b_i \sum_{k=1}^{m-2} H_k}]}\dots \frac{\Tr[e^{-\b_r H-i\b_i H_1 }]}{\Tr[e^{-\b_r H}]}.
\ea
Hence,
\ba
\left|\log \left(\frac{\Tr[e^{-\b H}]}{\Tr[e^{-\b_r H}]}\right)\right|\leq \sum_{j=0}^{m-1} \left|\log \left(\frac{\Tr[e^{-\b_r H-i\b_i \sum_{k=0}^{j+1} H_k}]}{\Tr[e^{-\b_r H-i\b_i \sum_{k=0}^{j} H_k}]}\right)\right|,
\ea
in which we set $H_0=0$. Since for interactions considered in this paper $m=O(n)$, we can derive the bound in \eqref{eq:q2} by showing for any $j$,
\ba\label{eq:q3}
\left|\log \left(\frac{\Tr[e^{-\b_r H-i\b_i \sum_{k=0}^{j+1} H_k}]}{\Tr[e^{-\b_r H-i\b_i \sum_{k=0}^{j} H_k}]}\right)\right|\leq O(1).
\ea
To do so, we define $\g_j(t)$ for $t\in[0,1]$ to be
\ba
\g_j(t)=\log \left(\Tr[e^{-\b_r H-i\b_i \sum_{k=0}^{j} H_k-i\b_i t H_{j+1})}]\right).
\ea
Then, the left hand side of \eqref{eq:q3} can be written as
\ba
\left|\log \left(\frac{\Tr[e^{-\b_r H-i\b_i \sum_{k=0}^{j+1} H_k}]}{\Tr[e^{-\b_r H-i\b_i \sum_{k=0}^{j} H_k}]}\right)\right|&=|\g_j(1)-\g_j(0)|\nonumber\\
&\leq \max_{t\in[0,1]} \left|\frac{d \g_j(t)}{d t}\right|\nonumber\\
&= |\b_i| \max_{t\in[0,1]} \left|\frac{\Tr[
H_{j+1}e^{-it\b_i H_{j+1}}e^{-\b_r H-i\b_i \sum_{k=0}^{j} H_k}]}{\Tr[e^{-it\b_i H_{j+1}} e^{-\b_r H-i\b_i \sum_{k=0}^{j} H_k}]}\right|.\label{eq:q4}
\ea
For a region $A\subset \L$, let $H_A$ and $H'$ be parts of the Hamiltonian acting on $A$ and $\bar{A}\cup \partial \bar{A}$, respectively. One can see that for any choice of $j$ and $t$, finding an upper bound like the one in \eqref{eq:q4} is equivalent to bounding a local expectation term like
\ba
\Tr\left[H_{A}e^{-(\b_r+it\b_i) H_A}\frac{e^{-\b H'}}{Z_{\b}(H)}\right]=\Tr[H_A \r_{
\b}(H)]
\ea
for some suitable choice of $A$. We also assume, without loss of generality, that \emph{all} local terms in $H'$ are complex perturbed. Using the H\"older inequality, we get
\ba
\left|\Tr\left[H_{A}e^{-(\b_r+it\b_i) H_A}\frac{e^{-\b H'}}{Z_{\b}(H)}\right]\right|&=\left| \Tr_A\left[H_{A}e^{-(\b_r+it\b_i) H_A}\frac{\Tr_{\bar{A}}[e^{-\b H'}]}{Z_{\b}(H)}\right]\right|\nonumber\\
&\leq \norm{H_A} e^{|\b| \norm{H_A}}d^{|\partial \bar{A}|}\NOrm{\frac{\Tr_{\bar{A}}[e^{-\b H'}]}{Z_{\b}(H)}}.
\ea
Since $|A|=O(1)$, we only need to upper bound the largest singular value of $\Tr_{\bar{A}}[e^{-\b H'}]/Z_{\b}(H)$, whose support is only on $\partial \bar{A}$, by a constant. Let $\ket{u}$ and $\ket{v}$ be the left and right singular vectors associated with the largest singular value. We claim that there exists a rank $1$ projector $P$ supported on $\partial \bar{A}$ such that
\ba
\NOrm{\frac{\Tr_{\bar{A}}[e^{-\b H'}]}{Z_{\b}(H)}}&=\Tr_{\partial \bar{A}} \left[\frac{\Tr_{\bar{A}}[e^{-\b H'}]}{Z_{\b}(H)} \ketbra{u}{v}\right]\nonumber\\
&\leq (2+\sqrt{2}) \left|\Tr_{\bar{A}\cup \partial \bar{A}}\left[\frac{e^{-\b H'}}{Z_{\b}(H)}P\right]\right|.\label{eq:q7}
\ea
This can be derived by noting that $\ketbra{u}{v}$ can be decomposed as sum of rank $1$ projectors as follows
\ba
\ketbra{u}{v}=-\frac{1+i}{2}\left(\ketbra{u}{u}+\ketbra{v}{v}\right)+i \ketbra{w^-}{w^-}+\ketbra{w^+}{w^+},
\ea
where $\ket{w^+}=\frac{1}{\sqrt{2}}(\ket{u}+\ket{v})$ and $\ket{w^-}=\frac{1}{\sqrt{2}}(\ket{u}+i\ket{v})$. 

Finally, using the premise of this proposition given in \eqref{eq:q5}, we get both \eqref{eq:q19} and \condref{simple absence of zeros}, which concludes the proof. 
\end{proof}
\subsection{Step 2: The complex site removal bound from the small relative phase condition}\label{sec:Step 2: the complex site removal bound from the small relative phase condition}
\begin{prop}\label{prop:fraction with projectors}
Consider the same setup as that of \propref{bound on Z with projector}. Let $P$ and $Q$ be projectors acting on $\partial \bar{A}$. Let $\t(\d)$ be a complex function depending on $H$, $P$, and $Q$, but constant in $n$ such that for any positive constant $c$, $c|\t(\d)|\geq \d$ for sufficiently small $\d$. We can, for instance, assume $|\t(\d)|=\sqrt{\d}$. Suppose when $|\b_i|\leq \d$ for some sufficiently small $\d$, we have 
\ba\label{eq:q2}
\frac{\Tr_{\bar{A} \cup \partial \bar{A}}[e^{-i\b_i H'}\r_{\b_r}(H'|P)]}{\Tr_{\bar{A} \cup \partial \bar{A}}[e^{-i\b_i H'}\r_{\b_r}(H'|Q)]}=1+|\partial \bar{A}|\t(\d).
\ea
Then, the complex site removal bound \eqref{eq:q5} given in \propref{bound on Z with projector} holds, i.e. 
\ba
\left|\log \left(\frac{\Tr_{\bar{A}\cup A}\left[e^{-\b H}\right]}{\Tr_{\bar{A}\cup \partial \bar{A}}\left[e^{-\b H'}P\right]}\right)\right| \leq c.\label{eq:5'}
\ea
\end{prop}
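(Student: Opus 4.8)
The plan is to reduce the complex site-removal bound \eqref{eq:5'} to a comparison of partition functions of the region $\bar{A}\cup\partial\bar{A}$ under different (rank-one) boundary conditions on $\partial\bar{A}$, and then to feed the small-relative-phase hypothesis \eqref{eq:q2} into that comparison. Since $H$ is commuting, $e^{-\b H}=e^{-\b H_A}e^{-\b H'}$, where $H_A$ collects the remaining terms of $H$ (all supported inside $A$). Tracing out $\bar{A}$ first yields
\ba\label{eq:plan1}
\Tr_{\bar{A}\cup A}[e^{-\b H}]=\Tr_{\partial\bar{A}}\big[\tilde{N}\,M\big],\qquad M:=\Tr_{\bar{A}}[e^{-\b H'}],\quad \tilde{N}:=\Tr_{A\setminus\partial\bar{A}}[e^{-\b H_A}],
\ea
where $M$, $\tilde{N}$, and the projected quantity $\Tr_{\bar{A}\cup\partial\bar{A}}[e^{-\b H'}P]=\Tr_{\partial\bar{A}}[PM]$ all live on the constant-dimensional Hilbert space of $\partial\bar{A}$. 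Because $A$ has constant size and $\b$ is fixed, $\tilde{N}$ is a fixed operator with $\norm{\tilde{N}}\le e^{O(1)}$, so proving \eqref{eq:5'} amounts to showing $\big|\Tr_{\partial\bar{A}}[\tilde{N}M]\big|$ and $\big|\Tr_{\partial\bar{A}}[PM]\big|$ are comparable.

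Next I would linearize $\tilde{N}$: expanding it in a basis of rank-one operators $\ket{a}\bra{b}$ on $\partial\bar{A}$ and decomposing each $\ket{a}\bra{b}$ into rank-one projectors via the identity already used in the proof of \propref{bound on Z with projector} (the $\ket{w^{\pm}}$ trick), one writes $\Tr_{\partial\bar{A}}[\tilde{N}M]$ as an $O(1)$-term linear combination, with $O(1)$ coefficients, of quantities $\Tr_{\partial\bar{A}}[P_\alpha M]=\Tr_{\bar{A}\cup\partial\bar{A}}[e^{-\b H'}P_\alpha]$ over rank-one projectors $P_\alpha$. It therefore suffices to control, for arbitrary rank-one projectors $P,Q$ on $\partial\bar{A}$, the ratio $\Tr[e^{-\b H'}P]/\Tr[e^{-\b H'}Q]$, and to rule out cancellation when such quantities are summed with complex coefficients.

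This is where \eqref{eq:q2} enters. For a projector diagonal in the product basis (the case relevant to the companion \propref{small phase from decay}) one has $\Tr_{\bar{A}\cup\partial\bar{A}}[e^{-i\b_i H'}\r_{\b_r}(H'|P)]=\Tr[e^{-\b H'}P]/\Tr[e^{-\b_r H'}P]$, so \eqref{eq:q2} reads
\ba\label{eq:plan2}
\frac{\Tr[e^{-\b H'}P]}{\Tr[e^{-\b H'}Q]}=\big(1+|\partial\bar{A}|\,\t(\d)\big)\,\frac{\Tr[e^{-\b_r H'}P]}{\Tr[e^{-\b_r H'}Q]}.
\ea
The real-temperature ratio on the right lies in $[e^{-O(1)},e^{O(1)}]$ and is positive: this follows from \lemref{site_removal} applied to the constant number of sites in $\partial\bar{A}$, together with the elementary fact that $\bra{v}\Tr_{\bar{A}}[e^{-\b_r H'}]\ket{v}$ is a convex combination of the mutually comparable eigenvalues of the PSD, constant-dimensional operator $\Tr_{\bar{A}}[e^{-\b_r H'}]$. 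Since $\t(\d)\to 0$ and $|\partial\bar{A}|=O(1)$, the prefactor $1+|\partial\bar{A}|\t(\d)$ is bounded and of argument $O(\t(\d))$; hence $\big|\log(\Tr[e^{-\b H'}P]/\Tr[e^{-\b H'}Q])\big|=O(1)$ and, crucially, $\arg(\Tr[e^{-\b H'}P]/\Tr[e^{-\b H'}Q])=O(\t(\d))$ is small. Equivalently, $M$ is, up to a single overall phase and a factor $1+O(\t(\d))$, a PSD operator all of whose eigenvalues are of the same order; applying the same structural description to $\tilde{N}$ (invoking \eqref{eq:q2} for the constant-size region $A$, whose own partition-function scale is $\Theta(1)$), one gets $\big|\Tr_{\partial\bar{A}}[\tilde{N}M]\big|=\Theta\big(\big|\Tr_{\partial\bar{A}}[PM]\big|\big)$, which is \eqref{eq:5'}.

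I expect the main difficulty to be twofold. First, the cancellation issue: the linearization of $\tilde{N}$ produces a sum with complex coefficients, and the claimed bound is simply false without the phase information in \eqref{eq:q2}; the argument must exploit that the small parameter $\t(\d)$ there multiplies only the constant $|\partial\bar{A}|$, so that all relative phases stay $o(1)$, and it must also produce the matching lower bound (ruling out $|Z_\b(\L)|$ collapsing relative to the projected quantity), which is where the invertibility of the PSD part of $\tilde{N}$ and the constancy of $|A|$ are used. Second --- and this is precisely why \propref{small phase from decay} is stated only for classical systems --- for a genuinely quantum commuting $H$ the projectors on $\partial\bar{A}$ need not commute with $H'$, so the clean identity $\Tr[e^{-i\b_i H'}\r_{\b_r}(H'|P)]=\Tr[e^{-\b H'}P]/\Tr[e^{-\b_r H'}P]$ holds only up to commutator corrections, and one must instead propagate \eqref{eq:q2} through the operator $\sqrt{P}e^{-i\b_i H'}\sqrt{P}$; controlling these entangled-boundary corrections is the delicate step in pushing the argument past the classical case.
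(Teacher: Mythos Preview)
Your plan has the right skeleton—reduce to ratios $\Tr[e^{-\b H'}P_\alpha]/\Tr[e^{-\b H'}P]$ over rank-one projectors $P_\alpha$ and feed in \eqref{eq:q2}—but the lower bound does not follow from the decomposition you propose. When you expand $\tilde N$ in the basis $\ket{a}\bra{b}$ and break each $\ket{a}\bra{b}$ into projectors via the $\ket{w^\pm}$ identity, the resulting coefficients are genuinely complex ($-\tfrac{1+i}{2}$, $i$, $1$); so even if \eqref{eq:q2} forces every $\Tr[e^{-\b H'}P_\alpha]$ to share a common argument up to $O(\t(\d))$, the linear combination $\sum_\alpha c_\alpha\,\Tr[e^{-\b H'}P_\alpha]$ can still collapse. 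Your proposed fix—that $M$ is, up to a global phase and a $1+O(\t(\d))$ distortion, PSD with comparable eigenvalues—would indeed work, but you do not derive it from \eqref{eq:q2}; and the parallel claim for $\tilde N$ (``invoking \eqref{eq:q2} for the constant-size region $A$'') has no hypothesis to stand on, since \eqref{eq:q2} concerns $H'$, not $H_A$.

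The paper closes this gap by first splitting $e^{-\b H_A}=e^{-\b_r H_A}+(e^{-i\b_i H_A}-\iden)e^{-\b_r H_A}$, giving two pieces $L_1,L_2$. In $L_1$ the operator $e^{-\b_r H_A}$ is PSD; expanding over its eigenbasis $\{\ket{\psi_k}\}$ yields strictly positive weights $e_k=\bra{\psi_k}e^{-\b_r H_A}\ket{\psi_k}$, and each $\Tr_{A\setminus\partial\bar A}\ketbra{\psi_k}{\psi_k}$ is a density matrix with a convex decomposition $\sum_j r_{j,k}Q_{j,k}$, $r_{j,k}\ge 0$. Now \eqref{eq:q2} together with the real-$\b$ ratio lemma writes each $\Tr[e^{-\b H'}Q_{j,k}]/\Tr[e^{-\b H'}P]$ as a positive number times $1+|\partial\bar A|\t(\d)$, so $L_1$ is a \emph{positive} combination of complex numbers all with argument $O(\t(\d))$, and a cosine bound gives $|L_1|\ge\Omega(1)$. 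The remainder $L_2$ carries the factor $\|e^{-i\b_i H_A}-\iden\|=O(\d)$ and is negligible against $L_1$ for small $\d$. Your aside about ``the invertibility of the PSD part of $\tilde N$'' is pointing at exactly this split, but the proposal never implements it; separating out the real part of $\b$ on the $A$-side so that the projector decomposition carries \emph{positive} coefficients is the missing idea.
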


Before getting to the proof of this proposition, we first state and prove a relevant lemma. 

\begin{lem}\label{lem:unperturbed ratios}
Consider the same definitions as in \propref{fraction with projectors}. The ratio of the \emph{unperturbed} partition functions (with real $\b$) with different boundary conditions can be bounded as
\ba
 \left|\log \left(\frac{\Tr_{\bar{A}\cup \partial \bar{A}}\left[e^{-\b_r H'}Q\right]}{\Tr_{\bar{A}\cup \partial \bar{A}}\left[e^{-\b_r H'}P\right]}\right)\right|\leq c'
\ea
for some constant $c'$ depending on $|\partial{\bar{A}}|$.
\begin{proof}[\textbf{Proof of \lemref{unperturbed ratios}}]
Let $H_{\bar{A}}$ be terms in $H'$ that are acting solely on $\bar{A}$. That is, $H_{\bar{A}}$ commutes with both $P$ and $Q$. We have 
\ba
 \frac{\Tr_{\bar{A}\cup \partial \bar{A}}\left[e^{-\b_r H'}Q\right]}{\Tr_{\bar{A}\cup \partial \bar{A}}\left[e^{-\b_r H'}P\right]}= \frac{\Tr_{\bar{A}\cup \partial \bar{A}}\left[e^{-\b_r H'} Q\right]}{\Tr_{\bar{A}\cup \partial \bar{A}}\left[e^{-\b_r H_{\bar{A}}} Q\right]} \frac{\Tr_{\bar{A}\cup \partial \bar{A}}\left[e^{-\b_r H_{\bar{A}}}P\right]}{\Tr_{\bar{A}\cup \partial \bar{A}}\left[e^{-\b_r H'} P\right]}.
\ea
We can bound both of the ratios on the left side of this equality as follows:
\ba
\left|\frac{\Tr_{\bar{A}\cup \partial \bar{A}}\left[e^{-\b_r H'} Q\right]}{\Tr_{\bar{A}\cup \partial \bar{A}}\left[e^{-\b_r H_{\bar{A}}} Q\right]}\right| =\left|\Tr_{\bar{A}\cup \partial \bar{A}}\left[e^{-\b_r (H'-H_{\bar{A}})}\r_{\b}(H_{\bar{A}}|Q)\right]\right|\leq \Norm{e^{-\b_r (H'-H_{\bar{A}})}}.
\ea
Similarly, we can exchange the role of $P$ and $Q$ to get a lower bound. 
\end{proof}
\end{lem}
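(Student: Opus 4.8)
The plan is to carry out the telescoping comparison that has been set up by measuring both $\Tr_{\bar{A}\cup\partial\bar{A}}[e^{-\b_r H'}Q]$ and $\Tr_{\bar{A}\cup\partial\bar{A}}[e^{-\b_r H'}P]$ against a common reference quantity obtained by deleting from $H'$ every term that couples $\partial\bar{A}$ to $\bar{A}$, i.e. by keeping only $H_{\bar{A}}=\sum_{X\subseteq\bar{A}}H_X$. Since $\partial\bar{A}\subseteq A$ is disjoint from $\bar{A}$, the operator $H_{\bar{A}}$ commutes with both projectors $P$ and $Q$, and since $H$ has mutually commuting terms we may also split $e^{-\b_r H'}=e^{-\b_r(H'-H_{\bar{A}})}e^{-\b_r H_{\bar{A}}}$. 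This lets me write
\ba
\frac{\Tr_{\bar{A}\cup\partial\bar{A}}[e^{-\b_r H'}Q]}{\Tr_{\bar{A}\cup\partial\bar{A}}[e^{-\b_r H'}P]}
=\underbrace{\frac{\Tr_{\bar{A}\cup\partial\bar{A}}[e^{-\b_r H'}Q]}{\Tr_{\bar{A}\cup\partial\bar{A}}[e^{-\b_r H_{\bar{A}}}Q]}}_{(\mathrm{i})}\cdot
\underbrace{\frac{\Tr_{\bar{A}\cup\partial\bar{A}}[e^{-\b_r H_{\bar{A}}}Q]}{\Tr_{\bar{A}\cup\partial\bar{A}}[e^{-\b_r H_{\bar{A}}}P]}}_{(\mathrm{ii})}\cdot
\underbrace{\frac{\Tr_{\bar{A}\cup\partial\bar{A}}[e^{-\b_r H_{\bar{A}}}P]}{\Tr_{\bar{A}\cup\partial\bar{A}}[e^{-\b_r H'}P]}}_{(\mathrm{iii})},
\ea
and the goal is to bound $|\log(\cdot)|$ of each of the three factors by a quantity depending only on $|\partial\bar{A}|$.

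For the factors $(\mathrm{i})$ and $(\mathrm{iii})$ I would use that, because $\b_r\in\bbR^+$ and $P,Q$ are projectors (so $\sqrt P=P$, $\sqrt Q=Q$), the post-selected states $\r_{\b_r}(H_{\bar{A}}|Q)=Qe^{-\b_r H_{\bar{A}}}Q/\Tr[e^{-\b_r H_{\bar{A}}}Q]$ and $\r_{\b_r}(H_{\bar{A}}|P)$ are genuine density matrices. Using the commuting splitting one checks $(\mathrm{i})=\Tr_{\bar{A}\cup\partial\bar{A}}\big[e^{-\b_r(H'-H_{\bar{A}})}\,\r_{\b_r}(H_{\bar{A}}|Q)\big]$ and $(\mathrm{iii})^{-1}=\Tr_{\bar{A}\cup\partial\bar{A}}\big[e^{-\b_r(H'-H_{\bar{A}})}\,\r_{\b_r}(H_{\bar{A}}|P)\big]$. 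Since $e^{-\b_r(H'-H_{\bar{A}})}$ is a positive operator sandwiched as $e^{-|\b_r|\Norm{H'-H_{\bar{A}}}}\iden\preceq e^{-\b_r(H'-H_{\bar{A}})}\preceq e^{|\b_r|\Norm{H'-H_{\bar{A}}}}\iden$, pairing it against a density matrix gives $e^{-|\b_r|\Norm{H'-H_{\bar{A}}}}\le(\mathrm{i}),(\mathrm{iii})^{-1}\le e^{|\b_r|\Norm{H'-H_{\bar{A}}}}$. Finally $H'-H_{\bar{A}}$ is exactly the sum of those $H_X$ with $X\subseteq\bar{A}\cup\partial\bar{A}$ and $X\cap\partial\bar{A}\neq\emptyset$, so the triangle inequality over the sites of $\partial\bar{A}$ together with the growth-constant bound (\defref{growth constant}) gives $\Norm{H'-H_{\bar{A}}}\le gh\,|\partial\bar{A}|$; hence $|\log(\mathrm{i})|,|\log(\mathrm{iii})|\le gh|\b_r|\,|\partial\bar{A}|$.

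For the factor $(\mathrm{ii})$, $H_{\bar{A}}$ is supported on $\bar{A}$ while $P,Q$ are supported on the disjoint region $\partial\bar{A}$, so the traces factor as $\Tr_{\bar{A}}[e^{-\b_r H_{\bar{A}}}]\cdot\Tr_{\partial\bar{A}}[Q]$ and $\Tr_{\bar{A}}[e^{-\b_r H_{\bar{A}}}]\cdot\Tr_{\partial\bar{A}}[P]$, whence $(\mathrm{ii})=\mathrm{rank}(Q)/\mathrm{rank}(P)\in[d^{-|\partial\bar{A}|},d^{|\partial\bar{A}|}]$, so $|\log(\mathrm{ii})|\le|\partial\bar{A}|\log d$ (and in the application inside \propref{bound on Z with projector} the projectors are rank one, so this factor is exactly $1$). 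Adding the three contributions proves the lemma with $c'=(2gh|\b_r|+\log d)\,|\partial\bar{A}|$. There is no real obstacle here beyond bookkeeping; the one point to keep in view is that the whole argument lives at the \emph{real} temperature $\b_r$, which is what makes $e^{-\b_r H_{\bar{A}}}$ and $e^{-\b_r(H'-H_{\bar{A}})}$ positive operators — legitimizing both the density-matrix interpretation and the two-sided operator sandwiching — and what ensures the price of reinstating the deleted boundary terms is governed by the constant $|\partial\bar{A}|$ rather than by $n$.
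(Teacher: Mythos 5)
Your proof is correct and follows essentially the same route as the paper's: compare both partition functions to the reference with only $H_{\bar{A}}$ retained, rewrite each ratio as the expectation of $e^{-\b_r(H'-H_{\bar{A}})}$ in a post-selected (genuine, since $\b_r$ is real) density matrix, and bound that by $\exp(\pm\b_r\Norm{H'-H_{\bar{A}}})$ with $\Norm{H'-H_{\bar{A}}}\leq gh|\partial\bar{A}|$. Your only addition is making explicit the middle factor $\mathrm{rank}(Q)/\mathrm{rank}(P)$, which the paper's telescoping identity silently takes to be $1$ (valid in the application, where $P$ and $Q$ are rank one); this is a harmless and slightly more careful bookkeeping step.
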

\begin{proof}[\textbf{Proof of \propref{fraction with projectors}}]\label{proof:fraction with projectors}
We show how assuming equation \eqref{eq:q2}, we can derive a \emph{lower} and an \emph{upper} bound for 
\ba
\left|\frac{\Tr_{\bar{A}\cup A}\left[e^{-\b H}\right]}{\Tr_{\bar{A}\cup \partial \bar{A}}\left[e^{-\b H'}P\right]}\right|. \label{eq:q10}
\ea
We decompose the expression \eqref{eq:q10} into two parts denoted by $L_1$ and $L_2$ as follows
\ba
\frac{\Tr_{\bar{A}\cup A}\left[e^{-\b H}\right]}{\Tr_{\bar{A}\cup \partial \bar{A}}\left[e^{-\b H'}P\right]}&=\Tr_A\left[e^{-i\b_i H_A}e^{-\b_r H_A}\frac{\Tr_{\bar{A}}\left[e^{-\b H'}\right]}{\Tr_{\bar{A}\cup \partial \bar{A}}\left[e^{-\b H'}P\right]}\right]\nonumber\\
&=L_1+L_2,\label{eq:q15}
\ea
where
\ba
L_1&=\Tr_A\left[e^{-\b_r H_A}\frac{\Tr_{\bar{A}}\left[e^{-\b H'}\right]}{\Tr_{\bar{A}\cup \partial \bar{A}}\left[e^{-\b H'}P\right]}\right]\label{eq:q6'}\\
L_2&=\Tr_A\left[(e^{-i\b_i H_A}-\iden)e^{-\b_r H_A}\frac{\Tr_{\bar{A}}\left[e^{-\b H'}\right]}{\Tr_{\bar{A}\cup \partial \bar{A}}\left[e^{-\b H'}P\right]}\right]\label{eq:q6}.
\ea
All the complex perturbations acting on $A$ are moved to the second part $L_2$ which is analyzed later and shown to have only a small contribution.

Let $\{\ket{\psi_k}\}$ be the set of eigenstates of the operator $H_A$ that span the Hilbert space of $A$. The term $L_1$ can be written as 
\ba
L_1&=\sum_k \bra{\psi_k}e^{-\b_r H_A}\ket{\psi_k} \left[\frac{\Tr_{\bar{A} \cup A}\left[e^{-\b H'}\ketbra{\psi_k}{\psi_k}\right]}{\Tr_{\bar{A}\cup \partial \bar{A}}\left[e^{-\b H'}P\right]}\right]\nonumber\\
&=\sum_k e_k \left[\frac{\Tr_{\bar{A}\cup \partial \bar{A}}\left[e^{-\b H'}\Tr_{A\setminus{\partial\bar{A}}}\ketbra{\psi_k}{\psi_k}\right]}{\Tr_{\bar{A}\cup \partial \bar{A}}\left[e^{-\b H'}P\right]}\right]\nonumber\\
&=\sum_{j,k} e_k r_{j,k}\left[\frac{\Tr_{\bar{A}\cup \partial \bar{A}}\left[e^{-\b H'}Q_{j,k}\right]}{\Tr_{\bar{A}\cup \partial \bar{A}}\left[e^{-\b H'}P\right]}\right],\label{eq:q8}
\ea
where the first line follows from $\{\ket{\psi_k}\}$ spanning the Hilbert space of $A$. In the second line, we denoted $\bra{\psi_k}e^{-\b_r H_A}\ket{\psi_k}$ by positive coefficients $e_k$. In the last line, we used the fact that $\Tr_{A\setminus{\partial\bar{A}}}\ketbra{\psi_k}{\psi_k}$ is a density operator on $\partial \bar{A}$ and can be decomposed into a convex combination of projectors $Q_{j,k}$ supported on $\partial \bar{A}$ with positive coefficients $r_{j,k}$. In other words,
\ba
\Tr_{A\setminus{\partial\bar{A}}}\ketbra{\psi_k}{\psi_k}=\sum_j r_{j,k} Q_{j,k}.
\ea
From the assumption of the theorem given in \eqref{eq:q2} we get
\ba
\left[\frac{\Tr_{\bar{A}\cup \partial \bar{A}}\left[e^{-\b H'}Q_{j,k}\right]}{\Tr_{\bar{A}\cup \partial \bar{A}}\left[e^{-\b H'}P\right]}\right]=\a_{j,k}\left(1+|\partial \bar{A}|\t_{j,k}(\d)\right),\label{eq:q22}
\ea
where 
\ba
\a_{j,k}=\frac{\Tr_{\bar{A}\cup \partial \bar{A}}\left[e^{-\b_r H'}Q_{j,k}\right]}{\Tr_{\bar{A}\cup \partial \bar{A}}\left[e^{-\b_r H'}P\right]}
\ea
is the ratio of the \emph{real} partition functions, and according to \lemref{unperturbed ratios},
\ba
|\log \a_{j,k}|\leq O(|\partial \bar{A}|).
\ea
Hence, we get the following expression for $L_1$:
\ba
L_1= \sum_{j,k} \a_{j,k}r_{j,k} e_k \left(1+|\partial \bar{A}|\t_{j,k}(\d)\right).\label{eq:q11}
\ea
This allows us to find a lower bound on this term. Since all coefficients $\a_{j,k}$, $r_{j,k}$, and $e_k$ are positive constants, Eq. \eqref{eq:q11} is sum of complex numbers with various magnitudes that have small complex phases at most proportional to $|\partial \bar{A}|\t_{j,k}(\d)$. The absolute value of the sum of these complex numbers is at least the sum of their real parts. In particular, since $A$ is a region of \emph{constant} size, by choosing a sufficiently small $\d$ such that $\d |\partial \bar{A}|\ll 1$, we can ensure that the real parts are all positive and add up to some non-zero value. More precisely,
\ba
\left|L_1\right|\geq  \left(\sum_{j,k} \a_{j,k}r_{j,k} e_k\right) \cos \left(c''|\partial \bar{A}|\t(\d)\right)\geq \Omega(1) \quad \text{for } \d\ll |\partial \bar{A}|.\label{eq:q12}
\ea
We can also get an \emph{upper} bound on $|L_1|$ using the expression \eqref{eq:q11}. We have
\ba
\left|L_1\right|\leq  \left(1+|\partial \bar{A}|\right)  \left(\sum_{j,k} \a_{j,k}r_{j,k} e_k\right)\leq O(1)\label{eq:q21}
\ea

Now, we look at the second term $L_2$. Similar to the previous bound, we can find a projector $Q$ and a constant $c'$ such that 
\ba
|L_2|=\left|\Tr_A\left[(e^{-i\b_i H_A}-\iden)e^{-\b_r H_A}\frac{\Tr_{\bar{A}}\left[e^{-\b H'}\right]}{\Tr_{\bar{A}\cup \partial \bar{A}}\left[e^{-\b H'}P\right]}\right]\right|&\leq \norm{e^{-i\b_i H_A}-\iden}\norm{e^{-\b_r H_A}}\frac{\NOrm{\Tr_{\bar{A}}\left[e^{-\b H'}\right]}_1}{\left|\Tr_{\bar{A}\cup \partial \bar{A}}\left[e^{-\b H'}P\right]\right|} \nonumber\\
&\leq c'\d\norm{H_A}d^{|\partial \bar{A}|}e^{|\b| \norm{H_A}}. \label{eq:q9}
\ea
We used a bound similar to \eqref{eq:q22} to get to the last line.

All bounds \eqref{eq:q12}, \eqref{eq:q21}, and \eqref{eq:q9} depend on $|A|$ which is a constant. Also, as $\d$ is made smaller, \eqref{eq:q9} becomes negligible compared to \eqref{eq:q12} or \eqref{eq:q21}. Hence, if $\d$ is chosen to be sufficiently small yet still a constant, we get the desired bounds:
\ba
O(1)\geq |L_1|+|L_2|\geq\left|\frac{\Tr_{\bar{A}}\left[e^{-\b H'}\right]}{\Tr_{\bar{A}\cup \partial \bar{A}}\left[e^{-\b H'}P\right]}\right|\geq|L_1|-|L_2|\geq \Omega(1).
\ea
\end{proof}
\subsection{Step 3: The small relative phase condition from \condref{Exponential decay of correlations}}
\begin{prop}\label{prop:small phase from decay}
 Let $H=\sum_{i=1}^m H_i$ be a geometrically-local Hamiltonian of a \emph{classical} spin system. Suppose the correlations in this system decay exponentially as in \condref{Exponential decay of correlations}. Then, the bound given in \eqref{eq:q2} holds for this system.
\end{prop}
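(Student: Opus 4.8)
Since $H$ is classical, every $H_i$ (hence $e^{-\sum_i z_iH_i}$) is diagonal in a fixed product basis and the projectors $P,Q$ on $\partial\bar A$ merely select sets of configurations of $\partial\bar A$. The plan is first to reduce \eqref{eq:q2} to the case $P=\ketbra pp$, $Q=\ketbra{p'}{p'}$ with $p,p'$ differing at a \emph{single} site $x\in\partial\bar A$, and then to prove that single-flip bound by isolating the effect of flipping $x$ via \condref{Exponential decay of correlations}. For the reduction: $\rho_{\b_r}(H'|P)$ is a convex combination of the states $\rho_{\b_r}(H'|\ketbra pp)$ over $p\in\mathrm{supp}(P)$, so $\Tr[e^{-i\b_i H'}\rho_{\b_r}(H'|P)]$ is the corresponding convex combination of $F_p:=\Tr[e^{-i\b_i H'}\rho_{\b_r}(H'|\ketbra pp)]=Z_{\b}(H'|\ketbra pp)/Z_{\b_r}(H'|\ketbra pp)$ (using that $H$ is classical, so $e^{-i\b_i H'}e^{-\b_r H'}=e^{-\b H'}$); these convex combinations inherit a bound $1+O(|\partial\bar A|\,\tau_0(\d))$ once every $F_p/F_{p'}$ does, which in turn follows by telescoping $F_p/F_{p'}$ over the at most $|\partial\bar A|$ sites where $p,p'$ disagree, once the single-flip estimate $F_p/F_{p'}=1+\tau_0(\d)$ with $\tau_0(\d)=O(\sqrt\d)$ is in hand (the trivial $O(\d)$ deviations coming from \lemref{site_removal} being absorbed by the $|\partial\bar A|$ factor).

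\textbf{The single-flip estimate via a change of variables.} Let $\Pi$ be the involution on configurations that interchanges the value at $x$ between $p_x$ and $p'_x$; then $E'(\Pi\sigma)=E'(\sigma)+\Delta(\sigma)$, where $\Delta$ is a classical observable supported within distance $R$ of $x$ with $\norm{\Delta}=O(gh)$. Applying $\sigma\mapsto\Pi\sigma$ to numerator and denominator gives the exact identities $Z_{\b}(H'|\ketbra{p'}{p'})=Z_{\b}(H'|\ketbra pp)\langle e^{-\b\Delta}\rangle_{\b,p}$ and $Z_{\b_r}(H'|\ketbra{p'}{p'})=Z_{\b_r}(H'|\ketbra pp)\langle e^{-\b_r\Delta}\rangle_{\rho_{\b_r}(H'|\ketbra pp)}$, where $\langle\cdot\rangle_{\b,p}$ is taken in the (complex) measure $\propto e^{-\b E'}$ conditioned on $\sigma_{\partial\bar A}=p$. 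Hence $F_p/F_{p'}=\langle e^{-\b_r\Delta}\rangle_{\rho_{\b_r}(H'|\ketbra pp)}\big/\langle e^{-\b\Delta}\rangle_{\b,p}$, so the task becomes: inserting the \emph{local} weight $e^{-\b\Delta}$ into the ``phase average'' behaves, up to a relative $1+O(\sqrt\d)$, like inserting $e^{-\b_r\Delta}$ into the ordinary average.

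\textbf{Removing the imaginary part near $x$.} I would choose $\ell\sim\xi\log(1/\d)$, split the terms of $H'$ according to whether they touch $\mathcal N_\ell(x)$, and condition on the configuration of a buffer shell $W$ of thickness $>R$ around $\mathcal N_\ell(x)$. Because $H$ is classical, conditioning on $\sigma_W$ makes the interior $\mathcal N_\ell(x)$ and the exterior independent, and the exterior weight $\propto e^{-\b E'_{\mathrm{ext}}}$ is \emph{the same} in the $p$- and $p'$-problems; this writes $Z_{\b}(H'|\ketbra pp)=\sum_w Z^{\mathrm{int}}_{\b,p}(w)\,Z^{\mathrm{ext}}_{\b}(w)$ with $Z^{\mathrm{int}}_{\b,p}(w)\neq 0$ once $\d\ell^{D}$ is small (by \lemref{site_removal} applied to the $O(\ell^{D})$-site interior, whose complex partition function is then within a factor $1+O(\d\ell^{D})$ of its real value), and moreover the interior ratio $Z^{\mathrm{int}}_{\b_r,p'}(w)/Z^{\mathrm{int}}_{\b_r,p}(w)$ lies in $[\kappa,\kappa^{-1}]$ and, by \condref{Exponential decay of correlations} applied inside the interior, differs from a $w$-independent constant by $O(e^{-\ell/\xi})$. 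Together with the replacements $e^{-i\b_i E'_{\mathrm{int}}}\to\iden$, $e^{-i\b_i\Delta}\to\iden$ (valid up to $O(\d\ell^{D})$), this reduces everything to $\b_r$-averages of bounded, nearly $w$-independent quantities, and the choice of $\ell$ balances the $O(\d\ell^{D})$ and $O(e^{-\ell/\xi})$ errors to yield $1+O(\sqrt\d)$.

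\textbf{The main obstacle.} The delicate point is that every step above must be carried out \emph{relatively}: the exterior weight $Z^{\mathrm{ext}}_{\b}(w)$, and hence $F_p$ and $F_{p'}$ themselves, can be exponentially small in $n$ (a phase average of an extensive classical energy), so an error that is only additively small is fatal once one forms a ratio. What makes the argument work is exactly the \emph{exact} conditional factorization of classical Gibbs measures: the exponentially small exterior contribution is common to the $p$- and $p'$-problems and to numerator and denominator, so it cancels \emph{before} any division, leaving only ratios of constant-size interior quantities that are bounded away from zero. Making this rigorous — showing the residual errors after factorization are small relative to (not merely compared with the trivial bound on) the exponentially small quantities, which I expect requires carrying the factorization along with the non-vanishing of the interior and exterior partition functions rather than estimating the two averages separately — is the technical core of \propref{small phase from decay}. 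It is also precisely the step that breaks for quantum systems, where a ``buffer'' measurement on $W$ is an entangled projector that does not factor the Gibbs state (cf. the remarks following the statement of \thmref{The decay of correlations implies the absence of zeros}).
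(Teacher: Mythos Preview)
Your proposal correctly identifies the key ingredients---the reduction to a single flipped boundary site, the Markov property of the classical Gibbs measure, the local removal of the imaginary part near $x$, and the use of \condref{Exponential decay of correlations}---and these are exactly the ingredients the paper uses. You also correctly diagnose the ``main obstacle'': any additive error incurred while manipulating the complex phase average $F_p$ is meaningless once $F_p$ is exponentially small in $n$. But you do not resolve this obstacle, and the resolution is not cosmetic; it is the argument.

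The gap is structural. In your buffer-shell decomposition you write $\langle e^{-\b\Delta}\rangle_{\b,p}=\sum_w\langle e^{-\b\Delta}\mid W=w\rangle_{\b,p}\,\Pr_\b[W=w\mid p]$ and argue that the conditional expectation is nearly $w$-independent. But the remaining error is bounded by $\max_w|\epsilon(w)|\cdot\sum_w|\Pr_\b[W=w\mid p]|$, and the ``probabilities'' $\Pr_\b[W=w\mid p]$ are complex: $\sum_w|\Pr_\b[W=w\mid p]|=\big(\sum_w|Z^{\mathrm{int}}_{\b,p}(w)Z^{\mathrm{ext}}_\b(w)|\big)/|Z_\b(H'|p)|$ can be exponentially large in $n$ precisely because of the cancellations you worry about. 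The exterior weight is $w$-dependent, so it does \emph{not} ``cancel before any division'' as you hope; it enters every term of the sum.

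The paper closes this gap by \emph{induction on the region}. One assumes \eqref{eq:q2} already holds for the smaller region $\bar B$ (this is the induction hypothesis, Eq.~\eqref{eq:q17} in the proof), divides both numerator and denominator of \eqref{eq:q13} by the reference quantity $\Tr[e^{-i\b_iH''}\rho_{\b_r}(H''|T)]$, and then applies the Markov property \emph{to the real $\b_r$-Gibbs state} (\lemref{conditional probability}). The resulting expansion $M_1=\sum_s(\text{complex ratio on }\bar B - 1)\cdot(\Tr[P_s\rho_{\b_r}(H'|P)]-\Tr[P_s\rho_{\b_r}(H'|Q)])$ has two crucial features: the first factor is controlled by the induction hypothesis, and the weights in the sum are \emph{real} marginal probabilities of a genuine Gibbs state on a constant-size boundary, so the decay-of-correlations bound on their difference suffices. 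The complex phase on the exterior is never exposed directly; it is absorbed into the inductively controlled ratio. Without this inductive device your approach, as written, does not go through.
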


\begin{proof}[\textbf{Proof of \propref{small phase from decay}}]
The proof is by induction. The lattice $\L$ is already divided into regions $A$ and $\bar{A}$ according to \propreftwo{bound on Z with projector}{fraction with projectors}. We further split the region $\bar{A}\cup \partial \bar{A}$ into a constant region $B$ and its complement $\bar{B}$. For reasons that will become clear shortly, it suffices to fix an arbitrary site $x$ on $\partial \bar{A}$ and choose region $B$ such that $\dist(\partial{\bar{B}},x)\gg \xi$, where $\xi$ is the correlation length in \condref{Exponential decay of correlations}. We assume inductively that \eqref{eq:q2} holds for $\bar{B}$. Then, using the decay of correlations, we show that even after adding the contribution of region $B$, Equation \eqref{eq:q2} still holds for the region $\bar{A}\cup \partial \bar{A}=B \cup \bar{B}$.

Since we are considering classical systems, the projectors $P$ and $Q$ set the value of the boundary spins, each of which attains $d$ distinct states, to some fixed values denoted by strings $s_p$ and $s_q$, where $s_{p\ \text{or}\ q}\in [d]^{|\partial \bar{A}|}$. Hence, $P=\ketbra{s_p}{s_p}$ and $Q=\ketbra{s_q}{s_q}$. Assume $s_p$ and $s_q$ differ on $t$ sites. Consider a series of strings $s_1,\dots, s_t$ such that $s_1=s_p$, $s_t=s_q$, and $s_i$ and $s_{i+1}$ differ only on one site. Denote the corresponding projectors by $P_1, P_2,\dots, P_t$. We can set up a telescoping product for \eqref{eq:q2} as follows:
\ba
&\frac{\Tr_{\bar{A} \cup \partial \bar{A}}[e^{-i\b_i H'}\r_{\b_r}(H'|P)]}{\Tr_{\bar{A} \cup \partial \bar{A}}[e^{-i\b_i H'}\r_{\b_r}(H'|Q)]}\nonumber\\
&=\frac{\Tr_{\bar{A} \cup \partial \bar{A}}[e^{-i\b_i H'}\r_{\b_r}(H'|P_1)]}{\Tr_{\bar{A} \cup \partial \bar{A}}[e^{-i\b_i H'}\r_{\b_r}(H'|P_2)]} \frac{\Tr_{\bar{A} \cup \partial \bar{A}}[e^{-i\b_i H'}\r_{\b_r}(H'|P_2)]}{\Tr_{\bar{A} \cup \partial \bar{A}}[e^{-i\b_i H'}\r_{\b_r}(H'|P_3)]}\dots \frac{\Tr_{\bar{A} \cup \partial \bar{A}}[e^{-i\b_i H'}\r_{\b_r}(H'|P_{t-1})]}{\Tr_{\bar{A} \cup \partial \bar{A}}[e^{-i\b_i H'}\r_{\b_r}(H'|P_t)]}.\label{eq:q41}
\ea
One can see that to get the desired bound in \eqref{eq:q2}, it is enough to show the following bound on these ratios:
\ba
\frac{\Tr_{\bar{A} \cup \partial \bar{A}}[e^{-i\b_i H'}\r_{\b_r}(H'|P_{i})]}{\Tr_{\bar{A} \cup \partial \bar{A}}[e^{-i\b_i H'}\r_{\b_r}(H'|P_{i+1})]}=1+\t(\d) \label{eq:q14}
\ea
for $\t(\d)$ satisfying the conditions given in \propref{bound on Z with projector}. This is why we define region $B$ around a \emph{single} site on $\partial\bar{A}$.

 To simplify the notation, we keep using $P,Q$ instead of $P_i,P_{i+1}$ for the rest of the proof bearing in mind that they differ on one site. In order to show \eqref{eq:q14}, we change the left-hand side to a slightly different expression that makes it easier to see the connection to the decay of correlations. We have 
\ba
\frac{\Tr_{\bar{A} \cup \partial \bar{A}}[e^{-i\b_i H'}\r_{\b_r}(H'|P)]}{\Tr_{\bar{A} \cup \partial \bar{A}}[e^{-i\b_i H'}\r_{\b_r}(H'|Q)]}
&=1+\frac{\Tr_{\bar{A} \cup \partial \bar{A}}\left[e^{-i\b_i H'}\left(\r_{\b_r}(H'|P)-\r_{\b_r}(H'|Q)\right)\right]}{\Tr_{\bar{A} \cup \partial \bar{A}}\left[e^{-i\b_i H'}\r_{\b_r}(H'|Q)\right]}.
\ea
To derive \eqref{eq:q2}, we can alternatively show
\ba
\frac{\Tr_{\bar{A} \cup \partial \bar{A}}\left[e^{-i\b_i H'}\left(\r_{\b_r}(H'|P)-\r_{\b_r}(H'|Q)\right)\right]}{\Tr_{\bar{A} \cup \partial \bar{A}}\left[e^{-i\b_i H'}\r_{\b_r}(H'|Q)\right]}=  \t(\d).\label{eq:q13}
\ea
 The steps that we take to prove this equation are very similar to the ones in the proof of \propref{bound on Z with projector}. Recall that $H'$ consists of the terms in $H$ that act on $\bar{A}\cup \partial \bar{A}$. Similarly, let $H''$ be part of $H'$ that acts on $\bar{B} \cup \partial \bar{B}$. We also define $T$ to be a projector (which again assigns a value from $[d]$ to the boundary spins) supported on $\partial\bar{B}$.
 
 We divide both the numerator and the denominator of $\eqref{eq:q13}$ by $\Tr_{\bar{B} \cup \partial \bar{B}}[e^{-i\b_i H''}\r_{\b_r}(H''|T)]$. This does not change the fraction but allows us to use the induction hypothesis. Similar to what we did in \eqref{eq:q15}, we split the numerator into two parts, denoted by $M_1$ and $M_2$, such that the complex perturbations acting on $B$ are all moved to $M_2$. We get
\ba
\frac{\Tr_{\bar{A} \cup \partial \bar{A}}\left[e^{-i\b_i H'}\left(\r_{\b_r}(H'|P)-\r_{\b_r}(H'|Q)\right)\right]}{\Tr_{\bar{B} \cup \partial \bar{B}}\left[e^{-i\b_i H''}\r_{\b_r}(H''|T)\right]}=M_1+M_2,\label{eq:q40}
\ea
where
\ba
M_1&= \frac{\Tr_{\bar{A} \cup \partial \bar{A}}\left[e^{-i\b_i H''} \left(\r_{\b_r}(H'|P)-\r_{\b_r}(H'|Q)\right)\right]}{\Tr_{\bar{B} \cup \partial \bar{B}}\left[e^{-i\b_i H''}\r_{\b_r}(H''|T)\right]}\nonumber\\
M_2&= \frac{\Tr_{\bar{A} \cup \partial \bar{A}}\left[e^{-i\b_i H''}(e^{-i\b_i (H'-H'')}-\iden)\left(\r_{\b_r}(H'|P)-\r_{\b_r}(H'|Q)\right)\right]}{\Tr_{\bar{B} \cup \partial \bar{B}}\left[e^{-i\b_i H''}\r_{\b_r}(H''|T)\right]}.
\ea
Now we use the crucial fact that $\r_{\b_r}(H'|P\ \text{or}\ Q)$ is a classical probability distribution that has the \emph{Markov property}. In other words, 
\begin{lem}\label{lem:conditional probability}
For any \emph{diagonal} operator $O$ supported on $\bar{B}\cup \partial \bar{B}$, we have 
\ba
\Tr_{\bar{A} \cup \partial \bar{A}}[O\ \r_{\b_r}(H'|P)]=\sum_{s\in [d]^{|\partial \bar{B}|}}\Tr_{\bar{B}\cup \partial \bar{B}}\left[O\ \r_{\b_r}(H''|P_s)\right] \Tr_{\bar{A} \cup \partial \bar{A}}[P_s\ \r_{\b_r}(H'|P)],\label{eq:q16}
\ea
where $s$ denotes the state of the spins on $\partial \bar{B}$ and $P_s$ is the corresponding projector. 
\end{lem}
This equality follows from the law of total probability. The term $\Tr_{\bar{A} \cup \partial \bar{A}}[P_s\ \r_{\b_r}(H'|P)]$ is the probability of the boundary spins being in state $s$ while $\Tr_{\bar{B}\cup \partial \bar{B}}\left[O\ \r_{\b_r}(H''|P_s)\right]$ is the expectation value of $O$ \emph{conditioned} on the state of the boundary spins. The fact that we only need to condition on the value of the \emph{boundary} spins follows from the Markov property of the Gibbs distribution. We postpone a more detailed proof of Eq. \eqref{eq:q16} until after the end of this proof. 

Using \eqref{eq:q16}, the term $M_1$ can be written as
\ba
M_1&=\sum_{s\in [d]^{|\partial \bar{B}|}}\left(\frac{\Tr_{\bar{B}\cup \partial \bar{B}}\left[e^{-i\b_i H''}\ \r_{\b_r}(H''|P_s)\right]}{\Tr_{\bar{B} \cup \partial \bar{B}}\left[e^{-i\b_i H''}\r_{\b_r}(H''|T)\right]} -1\right)\Tr_{\bar{A} \cup \partial \bar{A}}[P_s\ \r_{\b_r}(H'|P)]\nonumber\\
&-\sum_{s\in [d]^{|\partial \bar{B}|}}\left(\frac{\Tr_{\bar{B}\cup \partial \bar{B}}\left[e^{-i\b_i H''}\ \r_{\b_r}(H''|P_s)\right]}{\Tr_{\bar{B} \cup \partial \bar{B}}\left[e^{-i\b_i H''}\r_{\b_r}(H''|T)\right]} -1\right)\Tr_{\bar{A} \cup \partial \bar{A}}[P_s\ \r_{\b_r}(H'|Q)]\nonumber\\
&=\sum_{s\in [d]^{|\partial \bar{B}|}}\left(\frac{\Tr_{\bar{B}\cup \partial \bar{B}}\left[e^{-i\b_i H''}\ \r_{\b_r}(H''|P_s)\right]}{\Tr_{\bar{B} \cup \partial \bar{B}}\left[e^{-i\b_i H''}\r_{\b_r}(H''|T)\right]} -1\right)\left(\Tr_{\bar{A} \cup \partial \bar{A}}[P_s\ \r_{\b_r}(H'|P)]-\Tr_{\bar{A} \cup \partial \bar{A}}[P_s\ \r_{\b_r}(H'|Q)]\right).
\ea
For later convenience, we added and subtracted $1$ in the first equality. We can now use the induction hypothesis \emph{and} the exponential decay of correlations property to bound $M_1$. From the induction assumption \eqref{eq:q2} we see that for sufficiently small $\d$
\ba
\frac{\Tr_{\bar{B}\cup \partial \bar{B}}\left[e^{-i\b_i H''}\ \r_{\b_r}(H''|P_s)\right]}{\Tr_{\bar{B} \cup \partial \bar{B}}\left[e^{-i\b_i H''}\r_{\b_r}(H''|T)\right]}=1+|\partial \bar{B}|\t(\d)\label{eq:q17}.
\ea
Moreover, we let $x\in B$ be the site on which $P$ and $Q$ differ. Then, the assumption of the exponential decay of correlations \eqref{eq:v2_} implies
\ba
\left|\Tr_{\L \setminus\partial \bar{B}}[ \r_{\b_r}(H'|P)]-\Tr_{\L \setminus \partial \bar{B}}[ \r_{\b_r}(H'|Q)]\right|\leq c e^{-\dist(x,\partial \bar{B})/\xi}\label{eq:q18_}.
\ea
Overall, \eqref{eq:q17} and \eqref{eq:q18_} show that $|M_1|$ can be bounded as follows:
\ba
|M_1|\leq c|\t(\d)| |\partial \bar{B}|e^{-\dist(x,\partial \bar{B})/\xi}.
\ea
Similarly, one can show that 
\ba
M_2&=\nonumber\\
&\sum_{s\in [d]^{|B|}}\left(\frac{\Tr_{\bar{B}\cup \partial \bar{B}}\left[e^{-i\b_i H''}\ \r_{\b_r}(H''|P_s)\right]}{\Tr_{\bar{B} \cup \partial \bar{B}}\left[e^{-i\b_i H''}\r_{\b_r}(H''|T)\right]}\right)\Tr_{\bar{A} \cup \partial \bar{A}}[(e^{-i\b_i (H'-H'')}-\iden) P_s\ (\r_{\b_r}(H'|P)-\r_{\b_r}(H'|Q))],
\ea
which again by using \eqref{eq:q2} and $\sum_{s\in [d]^{|B|}}\Tr_{\bar{A} \cup \partial \bar{A}}[P_s\ \r_{\b_r}(H'|P)]=1$ can be bounded as
\ba
|M_2|\leq c''\d\norm{H_B}(1+|\partial \bar{B}||\t(\d)|).
\ea
We next analyze the denominator of \eqref{eq:q13} that similar to the numerator is first divided by $\Tr_{\bar{B} \cup \partial \bar{B}}\left[e^{-i\b_i H''}\r_{\b_r}(H''|T)\right]$. We can follow similar arguments to \secref{Step 2: the complex site removal bound from the small relative phase condition} to show that for sufficiently small $\d$, we can lower bound this term by a constant:
\ba
\frac{\Tr_{\bar{A} \cup \partial \bar{A}}\left[e^{-i\b_i H'}\r_{\b_r}(H'|Q)\right]}{\Tr_{\bar{B} \cup \partial \bar{B}}\left[e^{-i\b_i H''}\r_{\b_r}(H''|T)\right]}\geq \Omega(1).\label{eq:q18}
\ea

Finally, we can insert these bounds in \eqref{eq:q13} to get the following upper bound for some constants $c_1$ and $c_2$:
\ba
\left|\frac{\Tr_{\bar{A} \cup \partial \bar{A}}\left[e^{-i\b_i H'}\left(\r_{\b_r}(H'|P)-\r_{\b_r}(H'|Q)\right)\right]}{\Tr_{\bar{A} \cup \partial \bar{A}}\left[e^{-i\b_i H'}\r_{\b_r}(H'|Q)\right]}\right|\leq 
c_1|\t(\d)| |\partial \bar{B}|e^{-\dist(x,\partial \bar{B})/\xi}+c_2\d\norm{H_B}(1+|\partial \bar{B}||\t(\d)|).
\ea
Since $\t(\d)$ can be chosen as $\sqrt{\d}$ for instance, for a fixed $\dist(x,\partial \bar{B})$,  we can always choose $\d$ small enough such that 
\ba
 \frac{c_2\d\norm{H_B}(1+|\partial \bar{B}||\t(\d)|)}{c_1|\t(\d)| |\partial \bar{B}|e^{-\dist(x,\partial \bar{B})/\xi}}\leq c_3
\ea
for some constant $c_3\leq 1$. We can also choose $\dist(x,\partial \bar{B})$ sufficiently large enough so that
\ba
c_1|\t(\d)| |\partial \bar{B}|e^{-\dist(x,\partial \bar{B})/\xi}\leq |\t(\d)|.
\ea
Without the term $e^{-\dist(x,\partial \bar{B})/\xi}$ that originates from the decay of correlations property, we could not ensure that the bound $|\t(\d)|$ is recovered after the induction step. 
\end{proof}
Here, we prove \lemref{conditional probability} that we mentioned during the proof of \propref{small phase from decay}. We restate the lemma for convenience.\\ 

\noindent\textbf{Restatement of \lemref{conditional probability}.}
\emph{Consider the same setup as in \propref{small phase from decay} in which we restrict ourselves to classical Hamiltonians. For any diagonal operator $O$ supported on $\bar{B}\cup \partial \bar{B}$, we have}
\ba
\Tr_{\bar{A} \cup \partial \bar{A}}[O\ \r_{\b_r}(H'|P)]=\sum_{s\in [d]^{|\partial \bar{B}|}}\Tr_{\bar{B}\cup \partial \bar{B}}\left[O\ \r_{\b_r}(H''|P_s)\right] \Tr_{\bar{A} \cup \partial \bar{A}}[P_s\ \r_{\b_r}(H'|P)],\label{eq:q16_}
\ea
\emph{where $s$ denotes the state of the spins on $\partial \bar{B}$ and $P_s$ is the corresponding projector. }

\begin{proof}[\textbf{Proof of \lemref{conditional probability}}]
We have
\ba
\Tr_{\bar{A} \cup \partial \bar{A}}\left[O\ \r_{\b_r}(H'|P)\right]&=\Tr_{\bar{A} \cup \partial \bar{A}}\left[O_{\bar{B}\cup \partial \bar{B}}\ \frac{e^{-\b_r H'}P}{\Tr_{\bar{A} \cup \partial \bar{A}}\left[e^{-\b_r H'}P\right]}\right]\nonumber\\
&=\sum_{s\in [d]^{|\partial \bar{B}|}}\Tr_{\bar{A} \cup \partial \bar{A}}\left[O_{\bar{B}\cup \partial \bar{B}}\ \frac{e^{-\b_r H''}P_s}{\Tr_{\bar{B}\cup \partial \bar{B}}[e^{-\b H''}P_s]} \frac{P_s  e^{-\b (H'-H'')}P\ \Tr_{\bar{B} \cup \partial \bar{B}}[e^{-\b H''}P_s]}{\Tr_{\bar{A} \cup \partial \bar{A}}\left[e^{-\b_r H'}P\right]}  \right]\nonumber\\
&=\sum_{s\in [d]^{|\partial \bar{B}|}}\Tr_{\bar{B}\cup \partial \bar{B}}\left[O_{\bar{B}\cup \partial \bar{B}}\ \frac{e^{-\b_r H''}P_s}{\Tr_{\bar{B}\cup \partial \bar{B}}[e^{-\b H''}P_s]}\right] \Tr_{\bar{A} \cup \partial \bar{A}}\left[P_s\ \frac{e^{-\b H'}P}{\Tr_{\bar{A} \cup \partial \bar{A}}[e^{-\b H'}P]}\right]\nonumber\\
&=\sum_{s\in [d]^{|\partial \bar{B}|}}\Tr_{\bar{B}\cup \partial \bar{B}}\left[O_{\bar{B}\cup \partial \bar{B}}\ \r_{\b_r}(H''|P_s)\right] \Tr_{\bar{A} \cup \partial \bar{A}}[P_s\ \r_{\b_r}(H'|P)],
\ea
\end{proof}

\begin{rem}
A first step in generalizing the proof of \thmref{The decay of correlations implies the absence of zeros} to the quantum case would be to consider commuting Hamiltonians. While some parts of the proof already apply to these systems, the one in \propref{small phase from decay} does not immediately go through. One issue is that the decomposition \eqref{eq:q41} does not have a quantum counterpart. In particular, when comparing the effect of two entangled boundary projectors, we cannot write a telescoping product that reduces this to comparing local projectors. Perhaps by using the commutativity of the terms in the Hamiltonian, we can find a structure in the projectors that allows us to overcome this problem. We leave this for future work.
\end{rem}

\section{Extrapolating from high external fields and Lee-Yang zeros}\label{sec:Extrapolating from high external fields and Lee-Yang zeros}
In this section, we study spin systems whose interactions are described by two- or one-body terms. For qubits, such systems are generally described by Hamiltonians of the form
\ba\label{eq:u_two_local}
H(\mu)=-\sum_{\substack{(i,j)\in E \\ a,b\in\{x,y,z\}}} J_{ij}^{ab} \s_a \ot \s_b-\sum_{i \in V} (h^x_i X_i +h^y_i Y_i +\mu h^z_i Z_i),
\ea
where $J_{ij}^{ab}, h_i^a, \mu \in \bbR$ and $\s_a\in\{X,Y,Z,\iden\}$ are Pauli matrices. The interaction graph, as usual, is denoted by $G=(V,E)$ with $|V|=n$ and $|E|=m$. Physically, the two-body interactions $J_{ij}^{ab}$ are due to the \emph{coupling} between the spins of the particles on adjacent sites, whereas the one-body terms $h_{i}^a$ characterize the interaction of spins with some \emph{external magnetic field}.

\begin{rem}
For later convenience, we introduce an extra factor $\mu$ before the $Z_i$ terms in \eqref{eq:u_two_local}. One can think of $\mu$ as the maximum strength of the external field in the $z$-direction. As explained below, this parameter plays the same role as $\b$ in the extrapolation algorithm of \secref{Algorithm for estimating the partition function}.
\end{rem}

In \secref{Algorithm for estimating the partition function}, we developed approximation algorithms for the partition function of a quantum many-body systems by extrapolating from high to low temperatures. In this section, we again use the idea of extrapolation, but this time our parameter of interest is $\mu$, the magnitude of the one-body terms in the $z$-direction. The physical motivation for this approach is that when the system is subject to a large enough external field in a specific direction (the $z$-direction in our case), all the spins align themselves in that direction, and estimating the properties of the system becomes trivial. However, as we move to smaller fields, the other interaction terms between the particles gain significance, making the problem non-trivial. 

In order to apply the extrapolation algorithm in \secref{Algorithm for estimating the partition function}, we need to know the locus of the complex zeros of the partition function as a function of the external field $\mu$. As mentioned in \secref{intro}, these are called Lee-Yang zeros. We can exactly determine the locus of these zeros when the Hamiltonian \eqref{eq:u_two_local} describes a \emph{ferromagnetic} system, i.e. when the neighboring spins tend to align along the same direction. This is a result of Suzuki and Fisher \cite{Suzuki_XYZ}. There, by generalizing the result of Lee and Yang \cite{Lee-yang}, they show that all the complex zeros lie on the imaginary axis in the $\mu$-plane. \thmref{generalized_Lee_yang} covers this result. 

The key step is to map the quantum system to a classical spin system using the \emph{quantum-to-classical mapping} (see for example \cite{Suzuki_XYZ,bravyi_classicalmapping} ). Then, by the result of \thmref{Multivariate Hurwitz' theorem}, instead of studying the zeros of the quantum system, we can focus on the zeros of a classical system. 

The classical spin system that we obtain involves $1$-, $2$-, and $4$-body terms in its Hamiltonian. We represent the terms in the Hamiltonian with functions $V_{1,i}$, $V_{2,i}$, and $V_{4,i,j}$ that assign possibly complex numbers to their input spins. The indices of these functions refer to the number of particles that they act on and the coefficients of the original quantum Hamiltonian that they depend on.

\begin{prop}[Quantum-to-classical mapping, cf. \cite{Suzuki_XYZ}]\label{prop:quantum_to_classical}
Consider a $2$-local Hamiltonian $H$ as in Eq. \eqref{eq:u_two_local}. Let $z_i=e^{\b \mu h_i^z/\eta}$ and $\e=\b/\eta$. This Hamiltonian can be mapped to a $4$-local classical spin model involving $n'=n\eta$ spins $s\in\{\pm1\}$ with the interactions of the form $V_{1,i}: \{\pm 1\}\rightarrow \bbC$, $V_{2,i}:\{\pm 1\}^{2}\rightarrow \bbC$, and $V_{4,i,j}:\{\pm 1\}^{4}\rightarrow \bbC$ such that $\exp(V_{1,i}(s_a))=z_i^{s_a}$ and
\ba
&\sum_{s_a,s_b\in \{\pm 1 \}} \exp\left(V_{2,i}(s_a,s_b)\right)\ket{s_a}\bra{s_b}=\begin{psmallmatrix}1 & \e (h_i^{x}+ih^{y}_i) \\ \e (h_i^{x}-ih_i^{y}) & 1\end{psmallmatrix},\nonumber\\
&\sum_{s_a,s_b,s_{a'},s_{b'}\in \{\pm 1 \}} \exp\left(V_{4,i,j}(s_a,s_b,s_{a'},s_{b'})\right)\ket{s_{a},s_{a'}}\bra{s_b,s_{b'}}=\nonumber\\
&\begin{psmallmatrix}1+\e J_{ij}^{zz} & \e (-i J_{ij}^{zy}+J_{ij}^{zx}) & \e(-i J_{ij}^{yz}+J_{ij}^{xz}) & \e(J_{ij}^{xx}-J_{ij}^{yy}-i J_{ij}^{xy}-i J_{ij}^{yx}) \\ \e (i J_{ij}^{zy}+J_{ij}^{zx}) & 1-\e J_{ij}^{zz} & \e(J_{ij}^{xx}+J_{ij}^{yy}+i J_{ij}^{xy}-i J_{ij}^{yx}) & \e (i J_{ij}^{yz}-J_{ij}^{xz}) \\ \e (i J_{ij}^{yz}+J_{ij}^{xz}) & \e(J_{ij}^{xx}+J_{ij}^{yy}-i J_{ij}^{xy}+i J_{ij}^{yx}) & 1-\e J_{ij}^{zz} & \e (i J_{ij}^{zy}-J_{ij}^{zx}) \\ \e (J_{ij}^{xx}-J^{yy}_{ij}+i J_{ij}^{xy}+i J_{ij}^{yx}) & \e (-i J_{ij}^{yz}-J_{ij}^{xz}) & \e (-i J_{ij}^{zy}-J_{ij}^{zx})&1+\e J_{ij}^{zz}\end{psmallmatrix}.\label{eq:w2}
\ea
The partition function of this classical system is of the form
\ba
Z_{c\ell}(\mu)=\sum_{s_1,\dots,s_{\eta}\in\{\pm1\}} \exp\left( \sum_{\substack{i\in V\\a\in E_{1,i}}} V_{1,i}(s_a)+\sum_{\substack{i\in V 
\\ (a,b)\in E_{2,i}}} V_{2,i}(s_a,s_b)+\sum_{\substack{(i,j)\in E \\(a,b,a',b')\in E_{4,i,j}}}V_{4,i,j}(s_a,s_b,s_{a'},s_{b'})\right),
\ea
where $E_{1,i}$, $E_{2,i}$, and $E_{4,i,j}$ are certain unordered subsets of vertices that depend on the choice of $i,j$ (see the remark below), and we included the effective temperature of the classical system in the coefficients $V_{1,i}$, $V_{2,i}$, and $V_{4,i,j}$. Moreover, in the limit $\eta \rightarrow \infty$, the partition function of the classical model uniformly converges to that of the quantum system.
\end{prop}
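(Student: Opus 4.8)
\textbf{Proof proposal for \propref{quantum_to_classical}.}

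The plan is to build the quantum-to-classical mapping by repeated Suzuki–Trotter decomposition of $e^{-\beta H(\mu)}$ and to interpret each resulting matrix element as a Boltzmann weight of a classical spin configuration. First I would write $Z_\beta(H(\mu))=\Tr[e^{-\beta H(\mu)}]$ and split the Hamiltonian as $H(\mu)=-\sum_i \mu h_i^z Z_i - H_{\mathrm{rest}}$, then use the Trotter product formula $e^{-\beta H}=\lim_{\eta\to\infty}\big(e^{\beta\mu\sum_i h_i^z Z_i/\eta}\,e^{\beta H_{\mathrm{rest}}/\eta}\big)^\eta$. Inserting complete sets of product-basis states $\ket{s}$, $s\in\{\pm1\}^n$, between the $\eta$ slices turns the trace into a sum over $n'=n\eta$ classical spin variables $s_{a}$ (one copy of each qubit per time slice). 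The diagonal factor $e^{\beta\mu h_i^z Z_i/\eta}$ contributes $z_i^{s_a}$ with $z_i=e^{\beta\mu h_i^z/\eta}$, which fixes $V_{1,i}$ via $\exp(V_{1,i}(s_a))=z_i^{s_a}$. The off-diagonal single-site factors coming from $e^{\epsilon(h_i^xX_i+h_i^yY_i)}$ (to first order in $\epsilon=\beta/\eta$) produce the $2\times2$ transfer matrix displayed in the statement, whose entries define $\exp(V_{2,i}(s_a,s_b))$ as a function of the incoming and outgoing spin of site $i$ across one slice. The two-body couplings $e^{\epsilon J_{ij}^{ab}\sigma_a\otimes\sigma_b}$ give, again to first order in $\epsilon$, a $4\times4$ matrix on the pair $(i,j)$ across one slice; its entries define $\exp(V_{4,i,j})$, and one reads off the matrix \eqref{eq:w2} directly by expanding $e^{\epsilon\sum_{ab}J_{ij}^{ab}\sigma_a\otimes\sigma_b}=\iden+\epsilon\sum_{ab}J_{ij}^{ab}\sigma_a\otimes\sigma_b+O(\epsilon^2)$ in the product basis $\{\ket{s_a,s_{a'}}\}$ and matching real and imaginary parts. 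Collecting all slices and all sites, the trace becomes exactly the stated sum $Z_{c\ell}(\mu)=\sum_{s}\exp(\sum V_{1,i}+\sum V_{2,i}+\sum V_{4,i,j})$, where the index sets $E_{1,i},E_{2,i},E_{4,i,j}$ simply record which pairs/quadruples of the $n'$ classical spins (adjacent-in-time copies of interacting qubits) a given term couples.

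The remaining and most delicate point is the last sentence: that $Z_{c\ell}(\mu)\to Z_\beta(H(\mu))$ uniformly on compact subsets of the complex $\mu$-plane as $\eta\to\infty$. The plan here is a standard Trotter-error estimate, but carried out with complex $\mu$ (and, implicitly, complex deviations on the other couplings), so ordinary operator-norm bounds on $e^{-\beta H}$ are not automatically real-positive; one controls $\|e^{-\beta H(\mu)}-(\text{Trotter product})\|\le C(\mu)\,\|H\|^2/\eta$ with $C(\mu)$ continuous in $\mu$, hence uniformly bounded on compacta, using $\|e^{A}-e^{B}e^{C}\|\le \tfrac12\|[B,C]\|\,e^{\|A\|+\|B\|+\|C\|}$ iterated over the $\eta$ slices. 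Since the trace is $\|\cdot\|_1$-continuous and the dimension is fixed, convergence in operator norm gives convergence of the partition function, and uniformity on compact $\mu$-sets follows from the continuity of all the constants in $\mu$. I also need to check that each $\eta$-truncated weight is genuinely a finite complex number (no spurious poles): this holds because the first-order-in-$\epsilon$ transfer matrices have entries that are polynomials in $\epsilon$ and in the Hamiltonian coefficients, and $z_i$ is entire in $\mu$. The main obstacle is purely bookkeeping rather than conceptual: organizing the index sets $E_{1,i},E_{2,i},E_{4,i,j}$ and the per-slice vs.\ per-site structure so that the Trotter product collapses cleanly into the advertised single sum, while making sure the first-order truncation of the exponentials of the $X,Y$ and two-body terms is exactly absorbed into the definitions of $V_{2,i}$ and $V_{4,i,j}$ and that the $O(\epsilon^2)$ remainder is precisely what the Trotter-error bound controls. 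Once the mapping and the uniform convergence are in place, \thmref{Multivariate Hurwitz' theorem} lets us transfer zero-freeness from the classical models to the quantum partition function, which is the use we make of this proposition in the sequel.
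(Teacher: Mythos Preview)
Your approach is correct and is the standard Suzuki--Trotter derivation of the quantum-to-classical mapping. Note that the paper does not actually supply a proof of this proposition: it is stated with the tag ``cf.\ \cite{Suzuki_XYZ}'' and simply defers to the original Suzuki--Fisher reference, so there is no in-paper argument to compare against. Your sketch matches what one finds in that literature: Trotterize into $\eta$ slices, linearize each non-diagonal factor to first order in $\epsilon=\beta/\eta$ (the displayed $2\times2$ and $4\times4$ matrices are precisely $I+\epsilon(\cdot)$), insert resolutions of the identity in the $Z$-basis between slices to produce $n\eta$ classical spins, and control the $O(\epsilon^2)$ remainder via a commutator bound summed over the slices. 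The uniform-on-compacta convergence in $\mu$ you outline is exactly what is needed to invoke \thmref{Multivariate Hurwitz' theorem} downstream, and your argument for it (continuity of the Trotter constants in $\mu$, fixed finite dimension) is adequate.
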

\begin{rem}\label{rem:classical interaction graph}
The details of the interaction (hyper)graph of the classical system in \propref{quantum_to_classical} is not important for our purposes. We can think of this graph as $\eta$ copies of the original interaction graph $G=(V,E)$ stacked on top of each other. These copies are coupled together by the application of $V_{1,i},V_{2,i}$, and $V_{4,i,j}$. While the interaction terms like $V_{1,i}$ apply to all vertices, the terms $V_{2,i}$ act on a vertex in one of the copies of $G$ and its clones in the neighboring graphs. The set $E_{2,i}$ denotes the set of all such two vertices that $V_{2,i}$ acts on. Similarly, the set $E_{4,i,j}$ corresponds to all four vertices that interact through $V_{4,i,j}$.
\end{rem}

In \propref{quantum_to_classical}, the dependency on $\mu$ only appears in the $1$-body terms $V_{1,i}$ and parameters $z_i$. Also, since we do not rely on sampling algorithms, we do not restrict ourselves to \emph{stoquastic} Hamiltonians as in \cite{Bravyi_stoquastic} or \cite{Bravyi_ferro}, but we later put constraints on the coefficients $J_{ij}^{ab}$ to make the Hamiltonian ferromagnetic.
\subsection{Complex zeros of ferromagnetic systems}\label{sec:Complex zeros of the ferromagnetic systems}
We now state a \emph{generalized} Lee-Yang theorem that characterizes the locus of the complex zeros of certain classical spin systems.

\begin{thm}[Generalized Lee-Yang theorem, cf. \cite{Suzuki_XYZ}]\label{thm:generalized_Lee_yang}
Consider the classical spin system described in \propref{quantum_to_classical} or more generally one that satisfies the following conditions: 
\begin{itemize}
    \item[(i)] 
    \ba\label{eq:w5}
  V_{2,i}(-s_a,-s_b)&=V_{2,i}^*(s_a,s_b)\nonumber\\
  V_{4,i,j}(-s_{a},-s_{b},-s_{a'},-s_{b'})&=V_{4,i,j}^*(s_a,s_b,s_{a'},s_{b'})
    \ea
    \item[(ii)]
    \ba \label{eq:w5_}
    |\exp\left(V_{2,i}(+1,+1)\right)|&\geq \frac{1}{4} \sum_{s_a,s_b\in\{\pm 1\}} |\exp\left(V_{2,i}(s_i,s_j)\right)|\nonumber\\
    |\exp\left(V_{4,i,j}(+1,+1,+1,+1)\right)|&\geq \frac{1}{4} \sum_{s_a,s_b,s_{a'},s_{b'}\in\{\pm 1\}} |\exp\left(V_{4,i,j}(s_a,s_b,s_{a'},s_{b'})\right)|.
    \ea
\end{itemize}
 Let $Z_{c\ell}(\mu)$ be the partition function of this system as a function of $\mu$ for a fixed $\b$. Then, the zeros of this partition function, i.e. the solutions of $Z_{c\ell}(\mu)=0$, are all on the imaginary axis in the complex $\mu$-plane, that is, $\re(\mu)=0$. 
\end{thm}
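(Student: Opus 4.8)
The plan is to prove the theorem by the classical Asano--Ruelle contraction method (of which the original Lee--Yang circle theorem is a special case), so that conditions (i) and (ii) become exactly the hypotheses needed to run that machinery. First I would \emph{polarize}: replace the $\mu$-dependent single-spin weight $\exp(V_{1,i}(s_a))=z_i^{s_a}$ attached to each vertex $a$ of the classical lattice by its own independent variable. Since $z_i^{s_a}\in\{z_i,z_i^{-1}\}$, multiplying $Z_{c\ell}$ by the (never-vanishing) monomial $\prod_a z_{i(a)}$ turns it into a \emph{multiaffine} polynomial $\Xi$ — degree one in each $w_a:=z_{i(a)}^2$ — so that proving the theorem reduces to showing $\Xi$ does not vanish on the open polydisk $\{(w_a):|w_a|<1\ \forall a\}$. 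Because $|z_i^2|=e^{2\beta\,\re(\mu)\,h_i^z/\eta}$, and in the ferromagnetic normalization the $h_i^z$ share a sign (e.g.\ $h_i^z\equiv1$ as in the $\mathrm{XXZ}$ model of \eqref{eq:a4}), the half-plane $\re(\mu)>0$ sends every $w_a$ outside the unit disk, $\re(\mu)<0$ sends them all inside, and $\re(\mu)=0$ sends them onto the torus $|w_a|=1$. So it suffices to show $\Xi$ is non-vanishing on the open polydisk and — by the reciprocal symmetry $w_a\mapsto w_a^{-1}$, which preserves $\Xi$ up to a monomial precisely because condition (i), \eqref{eq:w5}, forces $|c_\emptyset|=|c_{\mathrm{top}}|$ — also on its exterior.

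Second, I would write $\Xi$ as an \emph{Asano product}. Expanding the sum over spin configurations, $\Xi$ is the product of the local factors $g_{2,i}$ (the multiaffine polynomial whose coefficients are the entries of the $2\times2$ matrix in \eqref{eq:w2}) and $g_{4,i,j}$ (likewise from the $4\times4$ matrix), each written in its \emph{own private} copies of the variables of the spins it touches, followed by a sequence of Asano contractions that glue, for each physical spin, all its private copies back into a single variable; the pattern of contractions is dictated by the hypergraph described after \propref{quantum_to_classical}. Conditions (i) and (ii) enter here: condition (i), \eqref{eq:w5}, is exactly invariance under the global flip $s\mapsto-s$, which forces each local factor's coefficients to satisfy $c_{\bar S}=\bar c_S$; condition (ii), \eqref{eq:w5_}, is precisely the dominance of the all-$(+1)$ coefficient that makes each local factor a \emph{Lee--Yang polynomial}, i.e.\ non-vanishing on the open unit polydisk. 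For $g_{2,i}$ this is elementary: writing $g_{2,i}=\bar u+vz_a+\bar vz_b+u z_az_b$ with $|u|\ge|v|$ (which is what (ii) says after using (i)), the unique root in $z_a$ for fixed $|z_b|\le1$ satisfies that $|z_a^{\mathrm{root}}|^2-1$ has the sign of $(|u|^2-|v|^2)(1-|z_b|^2)\ge0$. For $g_{4,i,j}$ the analogous fact — a multiaffine polynomial in four variables with the conjugation symmetry and a dominant top coefficient is Lee--Yang — is the generalized Lee--Yang lemma of Suzuki and Fisher, which I would invoke.

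Third, I would apply the Asano--Ruelle contraction lemma: a product of Lee--Yang polynomials in disjoint variable sets is Lee--Yang, and the Asano contraction of a Lee--Yang polynomial in a pair of its variables (replacing $\alpha+\beta z_1+\gamma z_2+\delta z_1z_2+\cdots$ by $\alpha+\delta\zeta+\cdots$) is again Lee--Yang. Iterating this over every shared spin collapses the Asano product of the $g_{2,i}$ and $g_{4,i,j}$ down to $\Xi$, proving $\Xi\neq0$ on the open unit polydisk; the reciprocal symmetry gives the same on the exterior. Translating back through $w_a=z_{i(a)}^2=e^{2\beta\mu h_i^z/\eta}$ as in the first paragraph, $Z_{c\ell}(\mu)$ can vanish only when every $|w_a|=1$, i.e.\ only when $\re(\mu)=0$. (Note the theorem is stated for the fixed classical system at any discretization $\eta$, so no limiting argument is needed here; \thmref{Multivariate Hurwitz' theorem} is only used afterwards, when transferring zero-freeness to the quantum model for the algorithm.)

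The hard part will be the four-body factor $g_{4,i,j}$: verifying it is a Lee--Yang polynomial from conditions (i) and (ii) alone is genuinely the step that "generalizes Lee--Yang" beyond the circle theorem, and setting up a consistent Asano contraction for a four-spin interaction — in particular getting the private-copy/contraction bookkeeping to match the stacked-lattice hypergraph — is the place most prone to error. A secondary subtlety is the sign convention on the $h_i^z$: with mixed signs one cannot push all $w_a$ to the same side of the unit circle, so the clean conclusion $\re(\mu)=0$ genuinely relies on a uniform-sign normalization, which is exactly the regime of \eqref{eq:a4}.
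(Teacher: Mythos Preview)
Your proposal is correct and, at the level of actual mathematical content, closely aligned with what the paper (and Suzuki--Fisher) do; but you phrase it in the Asano--Ruelle contraction language whereas the paper's sketch stays with the older ``direct'' Suzuki--Fisher formulation. The paper does not give a full proof: it isolates a single local two-spin factor $Z_{ij}(z_i,z_j)$, shows from conditions (i)--(ii) that $|z_j|=1$ forces $|z_i|=1$ (exactly your computation $(|u|^2-|v|^2)(1-|z_b|^2)\ge0$ for $g_{2,i}$), and then says the full result follows by ``a recursive use of this conclusion\dots when different interactions are summed over.'' That unspecified recursion \emph{is} the step you make explicit via Asano contraction: integrating out a spin shared by several local factors is precisely the replacement $\alpha+\beta z_1+\gamma z_2+\delta z_1z_2\mapsto \alpha+\delta\zeta$, and Suzuki--Fisher's induction on spins is the iterated contraction you describe.

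What you gain by casting it as Asano--Ruelle is a cleaner modular structure: once each local factor $g_{2,i},g_{4,i,j}$ is shown to be a Lee--Yang polynomial (your Step~2), the global statement is a one-line invocation of closure under products and contractions, with no need to track a growing partial partition function. What the paper's sketch buys is a slightly more elementary feel for the two-body case, since no contraction lemma is stated. You correctly flag that the genuinely nontrivial ingredient either way is the four-variable case $g_{4,i,j}$---that is the ``generalized'' part of the generalized Lee--Yang theorem, and both you and the paper defer it to Suzuki--Fisher. Your remark on the uniform-sign hypothesis for $h_i^z$ is also apt; the paper imposes $h_i^z\ge0$ in \thmref{t_higH_field} for exactly the reason you identify.
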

\begin{proof}[\textbf{Proof of \thmref{generalized_Lee_yang}}]
Refer to \cite{Suzuki_XYZ} for the detailed proof of this proposition. Here we only sketch one of the main ideas in their proof. 

For simplicity and in order to roughly see why conditions (i) and (ii) are sufficient for the zeros of the partition function to lie on the imaginary axis, we neglect the $V_{4,i,j}$ terms and focus on the $V_{1,i}$ and $V_{2,i}$ interactions. Recall that $z_i=e^{\b \mu h_i^z/\eta}$ and $\exp(V_{1,i}(s_a))=z_i^{s_a}$. Then, $Z_{ij}(z_i,z_j)=\sum_{s_a,s_b\in \{\pm 1\}} z_i^{s_a} z_j^{s_b} \exp(V_{2,i}(s_i,s_j))$ is proportional to the partition function of the system when all spins except $s_a$ and $s_b$ are fixed to some certain values $\{s_k\}_{k\neq a,b}$. 

Consider the solutions of $Z_{ij}(z_i,z_j)=0$. It is shown in \cite{Suzuki_XYZ} that if such a solution satisfies $|z_j|> 1$ and $|z_i|> 1$, then we can find another solution such that $|z_j|=1$ and $|z_i|> 1$ (a similar result holds for $|z_i|, |z_j|<1$).

Here, we show that when $|z_j|=1$, we also necessarily have  $|z_i|=1$. Since $z_i$ and $z_j$ depend on $\mu$ through $z_i=e^{\b \mu h_i^z/\eta}$, we see that the partition function can only vanish when $\re(\mu)=0$. Although we do not show it here, it turns out that this condition is actually sufficient to show that the whole partition function, without any fixed spins, also has complex zeros only on the imaginary axis. 

We have
\ba\label{eq:5}
Z_{ij}(z_i,z_j)=\left(\sum_{s_b\in\{\pm 1\}} z_j^{s_b} \exp\left(V_{2,i}(+1,s_b)\right)\right)z_i+\left(\sum_{s_b\in\{\pm 1\}} z_j^{s_b} \exp\left(V_{2,i}(-1,s_b)\right)\right)z_i^{-1}.
\ea
Using the condition (i) in (ii) we see that 
\ba
\left|\exp(V_{2,i}(+1,+1))\right|\geq \left|\exp(V_{2,i}(+1,-1))\right|.
\ea
If we consider $|z_j|= 1$, this implies $\sum_{s_b\in\{\pm 1\}} z_j^{s_b} \exp(V_{2,i}(+1,s_b))\neq 0$. We use this in Eq. \eqref{eq:5} to find the solutions of $Z_{ij}(z_i,z_j)=0$ for some $|z_j|=1$. We get 
\ba
|z_i|^2=\frac{|\sum_{s_b\in\{\pm 1\}} z_j^{s_b} \exp\left(V_{2,i}(-1,s_b)\right)|}{|\sum_{s_b\in\{\pm 1\}} z_j^{s_b} \exp\left(V_{2,i}(+1,s_b)\right)|},
\ea
but another application of condition (i) implies $|z_i|=1$ as desired. The rest of the proof for the whole partition function involves a recursive use of this conclusion and shows that the location of the zeros remains on the imaginary axis when different interactions are summed over in the partition function. 
\end{proof}

\begin{rem}
Instead of $\mu$, it is common to consider the partition function as a function of $e^{\mu}$. In this case, the complex zeros are located on the unit circle in the $e^{\mu}$-plane. Hence, the Lee-Yang theorem is often called the \emph{circle theorem}.
\end{rem}

The connection between \thmref{generalized_Lee_yang} and quantum ferromagnetic systems is established through the following theorem. 

\begin{thm}[Zeros of ferromagnetic systems, cf. \cite{Suzuki_XYZ}]\label{thm:t_higH_field}
Let $H(\mu)$ be a 2-local Hamiltonian as in Eq. \eqref{eq:u_two_local} with $J_{ij}^{xz},J_{ij}^{zx},J_{ij}^{yz},J_{ij}^{zy}=0$ defined over an arbitrary interaction graph that is not necessarily geometrically local. Suppose $h^{z}_i\geq 0$, and additionally, the following constraint is satisfied by the coefficients:
\ba\label{eq:stoq_constraints}
J_{ij}^{zz}\geq \frac{1}{2}\left[\left(J_{ij}^{xx}-J_{ij}^{yy}\right)^2 + \left(J_{ij}^{xy}+ J_{ij}^{yx}\right)^2\right]^{\frac{1}{2}}
+\frac{1}{2} \left[\left(J_{ij}^{xx} + J_{ij}^{yy}\right)^2 + \left(J_{ij}^{xy}- J_{ij}^{yx}\right)^2\right]^{\frac{1}{2}}.
\ea
Then, the partition function of this system only vanishes when $\re(\mu)=0$. 

When $J_{ij}^{xy}=J_{ij}^{yx}=0$, this condition simplifies to 
\ba\label{eq:w4}
J_{ij}^{zz}\geq |J_{ij}^{yy}|,\quad J_{ij}^{zz}\geq |J_{ij}^{xx}|.
\ea
This characterizes the ferromagnetic Heisenberg model given by
\ba\label{eq:Heisenbrg}
H=-\sum_{(i,j)\in E} \left(J_{ij}^{xx}X_i X_j+J_{ij}^{yy}Y_i Y_j+J_{ij}^{zz}Z_i Z_j\right)-\sum_{i\in V}\left(h_i^x X_i+h_i^y Y_i+\mu h_i^z Z_i\right).
\ea
\end{thm}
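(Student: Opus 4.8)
The plan is to reduce the statement about the quantum Hamiltonian $H(\mu)$ to a statement about the classical partition function $Z_{c\ell}(\mu)$ produced by the quantum-to-classical mapping of \propref{quantum_to_classical}, and then invoke the generalized Lee-Yang theorem (\thmref{generalized_Lee_yang}) together with multivariate Hurwitz (\thmref{Multivariate Hurwitz' theorem}). First I would apply \propref{quantum_to_classical} to $H(\mu)$ with the given constraints $J_{ij}^{xz}=J_{ij}^{zx}=J_{ij}^{yz}=J_{ij}^{zy}=0$. Setting those cross couplings to zero kills the off-diagonal entries in the first and last rows/columns of the $4\times4$ matrix in \eqref{eq:w2} that would otherwise obstruct the symmetry conditions; in particular it makes the matrix block-structured in a way compatible with \eqref{eq:w5}. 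For each finite level count $\eta$ we obtain a classical model with interactions $V_{1,i},V_{2,i},V_{4,i,j}$ and partition function $Z_{c\ell}^{(\eta)}(\mu)$, and by the last sentence of \propref{quantum_to_classical} these converge uniformly (on compact subsets of the $\mu$-plane) to $Z_{\b}(H(\mu))$ as $\eta\to\infty$.

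The heart of the argument is verifying that the classical model satisfies the two hypotheses of \thmref{generalized_Lee_yang}, namely the conjugation-under-global-spin-flip relations \eqref{eq:w5} and the dominance inequalities \eqref{eq:w5_}. For \eqref{eq:w5}: the $V_{2,i}$ term encodes the Hermitian single-site matrix $\bigl(\begin{smallmatrix}1&\e(h^x_i+ih^y_i)\\\e(h^x_i-ih^y_i)&1\end{smallmatrix}\bigr)$, and flipping both spins swaps the $(+,+)$ and $(-,-)$ entries while swapping $(+,-)$ and $(-,+)$; since these off-diagonal entries are complex conjugates of each other, $V_{2,i}(-s_a,-s_b)=V_{2,i}^*(s_a,s_b)$ follows directly. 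The same check for $V_{4,i,j}$ uses that the $4\times4$ matrix in \eqref{eq:w2} is Hermitian (its $(k,l)$ and $(l,k)$ entries are conjugate), and that the global spin flip reverses the ordering of both row and column bit-strings, so it maps entry $(k,l)$ to $(5-k,5-l)$, which equals $(l,k)$ under the antidiagonal pairing — giving the conjugate. For the dominance inequality \eqref{eq:w5_}, I would show that the constraint \eqref{eq:stoq_constraints} is exactly what makes the $(+,+,+,+)$ entry $1+\e J_{ij}^{zz}$ dominate (up to the factor $1/4$, i.e. up to the arithmetic-mean of the sixteen entries) the sum of absolute values of all sixteen entries: the off-diagonal magnitudes come in two groups, one controlled by $\frac12[(J^{xx}-J^{yy})^2+(J^{xy}+J^{yx})^2]^{1/2}$ and one by $\frac12[(J^{xx}+J^{yy})^2+(J^{xy}-J^{yx})^2]^{1/2}$, and $\e$ is small (it equals $\b/\eta$), so for $\eta$ large the diagonal $1\pm\e J^{zz}$ terms dominate provided \eqref{eq:stoq_constraints} holds; the $1$-body $h^z_i\ge0$ hypothesis enters only to guarantee $|z_i|\ge1$ consistently in the Lee-Yang recursion, matching the sign convention in \thmref{generalized_Lee_yang}. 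The simplification to \eqref{eq:w4} when $J^{xy}=J^{yx}=0$ is then immediate, as is the identification with the ferromagnetic Heisenberg Hamiltonian \eqref{eq:Heisenbrg}.

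Having verified the hypotheses, \thmref{generalized_Lee_yang} gives that each $Z_{c\ell}^{(\eta)}(\mu)$ vanishes only on the line $\re(\mu)=0$; equivalently each is nonvanishing on each of the two open half-planes $\re(\mu)>0$ and $\re(\mu)<0$. Applying \thmref{Multivariate Hurwitz' theorem} (in the single variable $\mu$) on each open half-plane, the uniform-on-compacts limit $Z_{\b}(H(\mu))$ is either identically zero or nonvanishing there; since $Z_{\b}(H(\mu))=\Tr[e^{-\b H(\mu)}]>0$ for real $\mu$ and large real $\mu$ (indeed for all real $\mu$, as $e^{-\b H}$ is positive definite), it is not identically zero, hence nonvanishing on both open half-planes. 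Therefore the zeros of $Z_{\b}(H(\mu))$ lie on $\re(\mu)=0$, as claimed.

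The main obstacle I anticipate is the bookkeeping in the dominance inequality \eqref{eq:w5_}: one must carefully extract the sixteen entries of the $V_{4,i,j}$ matrix from \eqref{eq:w2}, take absolute values, group them, and check that \eqref{eq:stoq_constraints} is precisely the condition making $|1+\e J^{zz}_{ij}|$ beat one-quarter of their sum for all small $\e$. A secondary subtlety is making sure that the vanishing of the cross-couplings $J^{xz}$ etc.\ is genuinely needed — it is what forces the $V_{2,i}$ and the row/column structure of $V_{4,i,j}$ to respect the spin-flip conjugation symmetry \eqref{eq:w5}; without it the $(+,+,+,+)$ entry would not be real and the Lee-Yang circle argument would break.
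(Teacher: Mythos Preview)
Your proposal is correct and follows essentially the same approach as the paper: apply the quantum-to-classical mapping of \propref{quantum_to_classical}, verify that the constraint \eqref{eq:stoq_constraints} forces the classical interactions to satisfy the hypotheses \eqref{eq:w5}--\eqref{eq:w5_} of \thmref{generalized_Lee_yang}, and then pass to the limit $\eta\to\infty$ via Hurwitz's theorem. In fact you supply more detail than the paper does (the paper simply asserts ``one can see that'' the classical conditions hold), and your explicit remark that $Z_\b(H(\mu))>0$ for real $\mu$ rules out the identically-zero alternative in Hurwitz is a point the paper leaves implicit.
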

\begin{proof}[\textbf{Proof of \thmref{t_higH_field}}]
The proof follows by applying \propref{quantum_to_classical} to map the quantum system \eqref{eq:u_two_local} to the classical system in \eqref{eq:w2}. One can see that if the quantum system satisfies \eqref{eq:stoq_constraints}, then the corresponding classical system satisfies the conditions \eqref{eq:w5} and \eqref{eq:w5_}. Hence, the generalized Lee-Yang theorem in \thmref{generalized_Lee_yang} shows that the zeros of the classical system are located on the imaginary axis. As the error $\e$ in the mapping goes to zero, we get a family of classical partition functions that approach the quantum partition function. \thmref{Multivariate Hurwitz' theorem} implies that the complex zeros of the quantum and classical systems coincide in the limit of $\e\rightarrow 0$. Thus, the complex zeros of the quantum system are also located on the imaginary axis. 
\end{proof}

\begin{rem}
One can extend the result of \thmref{generalized_Lee_yang} to include interactions between spins greater than spin $1/2$. It is shown in \cite{suzuki_zeros_higher_spin} that the partition function of the Heisenberg model with spin $s$ particles can be mapped to that of a spin $1/2$ Heisenberg model as in \eqref{eq:Heisenbrg}. Therefore, the Lee-Yang theorem holds for these systems too. 
\end{rem} 
\subsection{An algorithm for the anisotropic XXZ model} \label{sec:An algorithm for the anisotropic xxz model}
In \secref{Complex zeros of the ferromagnetic systems}, we studied the location of the complex zeros of a $2$-local Hamiltonian when the external magnetic field $\mu$ is varied. Here, we focus on a specific subclass of those Hamiltonians for which we can find an approximation algorithm. Particularly, we consider the \emph{anisotropic} $\mathrm{XXZ}$ model which has the following Hamiltonian:
\ba\label{eq:XXZ}
H(\mu)=-\sum_{(i,j)\in E} \left(J_{ij}(X_i X_j+Y_i Y_j)+J_{ij}^{zz}Z_i Z_j\right)-\frac{\mu}{2} \sum_{i\in V}(Z_i+\iden).
\ea
Compared to the Heisenberg model, the $\mathrm{XXZ}$ model assigns equal coefficients to the $X_i X_j$ and $Y_i Y_j$ terms and does not include $X_i$ and $Y_i$ terms. An important property of this model that we use in our algorithm is that 
\ba\label{eq:w21}
\left[H(\mu),\frac{\mu}{2} \sum_{i\in V}(Z_i+\iden) \right]=0.
\ea
To see this, notice that $[X_iX_j+Y_iY_j,Z_i+Z_j]=0$.

Let $\ket{s_1,s_2,\dots, s_n}$ be an assignment of spins $\pm 1$ to all the vertices. Any such vector is an eigenstate of $1/2 \sum_{i=1}^n (Z_i+\iden)$, that is,
\ba
\frac{1}{2}\sum_{i=1}^n (Z_i+\iden) \ket{s_1,s_2,\dots, s_n} = \frac{1}{2}(\sum_{i=1}^n s_i+n)\ket{s_1,s_2,\dots, s_n}.
\ea
Let $\cH_k$ denote the eigenspace of $1/2\sum_{i=1}^n (Z_i+\iden)$ that corresponds to the $k$th eigenvalue. This subspace is spanned by the binary strings of length $n$ with Hamming weight $k$. We have: 
\ba
\forall \ket{v}\in \cH_k,\quad \frac{1}{2}\sum_{i=1}^n (Z_i+\iden) \ket{v}=k\ket{v}.
\ea
We can partition the Hilbert space of the $n$ vertices $\cH$ into $\cH=\oplus_k \cH_{k}$. The dimension of each of these subspaces $\dim(\cH_k)$ is $\binom{n}{k}$. Since \eqref{eq:w21} holds, the partition function of this model can be written as a polynomial in $z=\exp(\b \m)$. 

\begin{lem}\label{lem:XXZ polynomial}
The partition function of the anisotropic $\mathrm{XXZ}$ model given in \eqref{eq:XXZ} can be written as 
\ba
Z_{\b}(H(\mu))=\sum_{k=0}^n q_k z^k,\label{eq:w22}
\ea
where $z=e^{\b \m}$ and the coefficients $q_k$ are defined by
\ba
q_k=\Tr_{\cH_k}[e^{\b\sum_{(i,j)\in E} \left(J_{ij}(X_i X_j+Y_i Y_j)+J_{ij}^{zz}Z_i Z_j\right)}].
\ea
\end{lem}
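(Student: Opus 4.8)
The plan is to exploit the conservation law \eqref{eq:w21} to diagonalize the trace defining $Z_{\b}(H(\mu))$ in blocks labeled by the total magnetization. First I would write $H(\mu) = H_0 - \mu M$, where $H_0 = -\sum_{(i,j)\in E}\big(J_{ij}(X_iX_j+Y_iY_j)+J_{ij}^{zz}Z_iZ_j\big)$ and $M = \tfrac12\sum_{i\in V}(Z_i+\iden)$. The key observation, already noted in the text, is that $[H_0,M]=0$ because $[X_iX_j+Y_iY_j, Z_i+Z_j]=0$ for every edge, and of course $[M,M]=0$; hence $[H(\mu),M]=0$. Since $M$ and $H(\mu)$ commute, they can be simultaneously block-diagonalized with respect to the eigenspace decomposition $\cH=\bigoplus_{k=0}^n \cH_k$, where $\cH_k$ is the span of computational-basis strings of Hamming weight $k$ and $M|_{\cH_k}=k\,\iden_{\cH_k}$. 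In particular each $\cH_k$ is invariant under $e^{-\b H(\mu)}$, so the trace splits as a sum over $k$ of traces restricted to $\cH_k$.

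Next I would compute the restricted exponential. On $\cH_k$ we have $-\b H(\mu) = -\b H_0 + \b\mu M = \b\sum_{(i,j)\in E}\big(J_{ij}(X_iX_j+Y_iY_j)+J_{ij}^{zz}Z_iZ_j\big) + \b\mu k\,\iden_{\cH_k}$. Because the $\b\mu k\,\iden$ term is a scalar multiple of the identity on $\cH_k$, it commutes with everything and factors out of the matrix exponential: $e^{-\b H(\mu)}|_{\cH_k} = e^{\b\mu k}\, e^{\b\sum_{(i,j)\in E}(J_{ij}(X_iX_j+Y_iY_j)+J_{ij}^{zz}Z_iZ_j)}\big|_{\cH_k}$. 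Taking the trace over $\cH_k$ and summing,
\ba
Z_{\b}(H(\mu)) = \sum_{k=0}^n e^{\b\mu k}\,\Tr_{\cH_k}\!\big[e^{\b\sum_{(i,j)\in E}(J_{ij}(X_iX_j+Y_iY_j)+J_{ij}^{zz}Z_iZ_j)}\big] = \sum_{k=0}^n q_k z^k,
\ea
with $z=e^{\b\mu}$ and $q_k$ exactly as defined in the statement. This is a polynomial of degree at most $n$ in $z$, as claimed.

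There is really no hard step here: the statement is essentially a bookkeeping consequence of the symmetry. The only points requiring a modicum of care are (i) checking that $H_0$ genuinely preserves each $\cH_k$ — i.e., that the off-diagonal hopping terms $X_iX_j+Y_iY_j$ move weight-$k$ strings only to other weight-$k$ strings, which follows since $X_iX_j+Y_iY_j = 2(\sigma^+_i\sigma^-_j+\sigma^-_i\sigma^+_j)$ flips one spin up and one down — and (ii) noting that restricting the operator to $\cH_k$ before or after exponentiating gives the same thing, precisely because $\cH_k$ is an invariant subspace. Both are immediate, so I would present the argument compactly in the order above, with the invariance of $\cH_k$ stated as the one genuine ingredient and the scalar factorization of $e^{\b\mu k\iden}$ as the step that produces the monomial $z^k$.
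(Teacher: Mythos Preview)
Your proposal is correct and follows essentially the same approach as the paper: use the commutation $[H_0,M]=0$ to block-diagonalize over the magnetization sectors $\cH_k$, then factor out the scalar $e^{\b\mu k}$ on each block to obtain $Z_\b(H(\mu))=\sum_k q_k z^k$. Your added justification via $X_iX_j+Y_iY_j=2(\sigma^+_i\sigma^-_j+\sigma^-_i\sigma^+_j)$ is a nice explicit check of the invariance of $\cH_k$, but otherwise the argument is identical to the paper's.
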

\begin{proof}[\textbf{Proof of \lemref{XXZ polynomial}}]
We have
\ba
Z_{\b}(H(\mu))&=\Tr_{\cH}[e^{-\b H(\mu)}]\nonumber\\
&=\Tr_{\cH}[e^{\b\sum_{(i,j)\in E} \left(J_{ij}(X_i X_j+Y_i Y_j)+J_{ij}^{zz}Z_i Z_j\right)}e^{\b \mu/2 \sum_{i\in V}(Z_i+\iden)}.]\nonumber\\
&=\sum_{k=0}^n e^{\b \mu k}\Tr_{\cH_k}[e^{\b\sum_{(i,j)\in E} \left(J_{ij}(X_i X_j+Y_i Y_j)+J_{ij}^{zz}Z_i Z_j\right)}]\nonumber\\
&=\sum_{k=0}^n q_k z^k.\label{eq:w28}\nonumber
\ea
\end{proof}

Now, we are ready to state an algorithm for this model.
\begin{thm}[Approximation algorithm for the partition function of the $\mathrm{XXZ}$ model]\label{thm:alg for xxz}
There is an algorithm that runs in $n^{O(\log (n/\e))}$ time and outputs an $\e$-multiplicative approximation to the partition function of the anisotropic $\mathrm{XXZ}$ model in the ferromagnetic regime, i.e. when $J_{ij}^{zz}\geq |J_{ij}|$ and $\mu$ is an arbitrary constant.
\end{thm}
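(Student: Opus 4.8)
The plan is to combine the quantum-to-classical Lee-Yang result of \thmref{t_higH_field} (specialized to the $\mathrm{XXZ}$ Hamiltonian) with the polynomial structure of \lemref{XXZ polynomial} and the truncated Taylor series algorithm of part (4) of \propref{taylor_bounded}. First I would observe that the $\mathrm{XXZ}$ model \eqref{eq:XXZ} is a special case of the ferromagnetic Heisenberg Hamiltonian \eqref{eq:Heisenbrg} with $J_{ij}^{xx}=J_{ij}^{yy}=J_{ij}$, $h_i^x=h_i^y=0$, and $h_i^z=1$ (absorbing the harmless $\iden$ shift), so the condition \eqref{eq:w4} reduces exactly to $J_{ij}^{zz}\ge |J_{ij}|$, which is the ferromagnetic hypothesis of the theorem. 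Hence \thmref{t_higH_field} applies and tells us that the partition function, viewed as a function of $\mu$, only vanishes when $\re(\mu)=0$. Since $\b>0$ is fixed, the substitution $z=e^{\b\mu}$ maps the right half-plane $\re(\mu)>0$ onto $|z|>1$ and the left half-plane onto $|z|<1$, so all the zeros of the polynomial $Z_\b(H(\mu))=\sum_{k=0}^n q_k z^k$ from \lemref{XXZ polynomial} lie on the unit circle $|z|=1$ in the $z$-plane.

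The next step is to set up the extrapolation. We want to estimate $Z_\b(H(\mu))$ at some target constant $\mu$, equivalently at $z^\star=e^{\b\mu}$, and I would extrapolate from $z=0$ (i.e.\ $\mu\to-\infty$, the large-field regime where only the highest-Hamming-weight sector survives and the problem is trivial). The obstacle is that the zeros sit \emph{on} the unit circle, while $z^\star=e^{\b\mu}$ for real $\mu>0$ also has $|z^\star|>1$, which is on the wrong side: a straight path from $0$ to $z^\star$ would cross the circle of zeros. The fix, exactly as in the proof of \thmref{extrapolation algorithm}, is to compose with a conformal map. Concretely, I would choose a polynomial $\phi(w)$ of constant degree mapping a disk $\D_b$ of some constant radius $b>1$ into the slit region $\{z\in\bbC:|z|<R\}\setminus[\,\text{circle}\,]$ — more precisely into the simply connected zero-free region obtained from the $z$-plane by removing the unit circle and working in the component containing both $0$ and $z^\star$ — with $\phi(0)=0$ and $\phi(1)=z^\star$; the existence of such a $\phi$ with constant degree for a target point at constant distance from the zero locus is Lemma~2.2.3 of \cite{Barvinok_book}. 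Then $g(w):=Z_\b(H(\mu))\circ\phi(w)$ is a polynomial of degree $N=O(n)$ (degree of $Z$ in $z$ times $\deg\phi$) that is non-vanishing on $\D_b$, and part (3)–(4) of \propref{taylor_bounded} gives an $\e$-additive approximation to $\log g(1)=\log Z_\b(H(\mu))$, hence an $\e$-multiplicative approximation to the partition function, in time $N^{O(\log(N/\e))}=n^{O(\log(n/\e))}$.

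The remaining ingredient is the efficient computability of the Taylor coefficients, i.e.\ the derivatives $\frac{d^k g}{dw^k}\big|_{w=0}$, in time $n^{O(k)}$. Since $\phi$ has constant degree, it suffices to compute the derivatives $\frac{d^k Z_\b(H(\mu))}{dz^k}\big|_{z=0}$, and because $Z_\b(H(\mu))=\sum_{k=0}^n q_k z^k$ is literally a polynomial in $z$, the $k$-th derivative at $0$ is just $k!\,q_k$. So I need to show each $q_k=\Tr_{\cH_k}[e^{\b H_0}]$ with $H_0=\sum_{(i,j)\in E}(J_{ij}(X_iX_j+Y_iY_j)+J_{ij}^{zz}Z_iZ_j)$ can be computed (or adequately approximated) in time $n^{O(k)}$ — but here I should be careful: only the \emph{first} $K=O(\log(n/\e))$ coefficients are needed, so I only need $q_0,\dots,q_{K}$. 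Each such $q_k$ is a trace over the Hamming-weight-$k$ subspace $\cH_k$, whose dimension is $\binom nk=n^{O(k)}$; expanding $e^{\b H_0}$ in its own truncated Taylor series to order $O(\log(n/\e))$ and noting that $H_0$ is a sum of $m=O(n)$ local terms, a power $H_0^t$ restricted to $\cH_k$ can be evaluated by summing over $n^{O(t)}$ products of local terms, each of which maps a weight-$k$ basis string to $O(1)$ other weight-$k$ strings (the $X_iX_j+Y_iY_j$ term only moves a spin flip between neighbors, preserving Hamming weight, and $Z_iZ_j$ is diagonal). This gives $q_k$ up to exponentially small error in time $n^{O(\log(n/\e))}$, which suffices. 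I expect the main obstacle — and the place to be most careful in writing this up — is precisely this bookkeeping: verifying that the conformal map $\phi$ lands in a genuinely zero-free simply connected region containing $0$ and the target $z^\star$ (one must check the target $\mu$ being ``an arbitrary constant'' keeps $z^\star$ at constant distance from the unit circle, so $b$ and $\deg\phi$ stay constant), and confirming that truncating both the outer $\log$-Taylor series and the inner $e^{\b H_0}$-expansion at logarithmic order, together with the $n^{O(k)}$-time evaluation of weight-restricted traces, assembles into the claimed $n^{O(\log(n/\e))}$ running time with $\e$-multiplicative error.
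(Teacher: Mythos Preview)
Your overall strategy --- combine the Lee--Yang zero location from \thmref{t_higH_field}, the polynomial structure of \lemref{XXZ polynomial}, and the Taylor-series algorithm of \propref{taylor_bounded} --- matches the paper's. However, two of your concrete steps do not work as written.

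First, the conformal map you propose for $\mu>0$ cannot exist. You want $\phi$ to send $\D_b$ into ``the component containing both $0$ and $z^\star$'' of the $z$-plane with the unit circle removed, but removing the unit circle disconnects $\bbC$ into $\{|z|<1\}$ and $\{|z|>1\}$, and for $\mu>0$ the points $0$ and $z^\star=e^{\b\mu}>1$ lie in \emph{different} components. If instead you remove only the $n$ actual zeros, the plane stays connected, but you have no control over where on the circle they sit (they may cluster near $z=1$ for large $\b$ --- this is precisely the phase transition), so any path from $0$ to $z^\star$ may pass within $o(1)$ of a zero and the running time blows up. The simple fix, which the paper uses implicitly, is that for $\mu>0$ you reverse the polynomial and extrapolate in $1/z$ from $1/z=0$ (physically $\mu\to+\infty$), computing the high-weight coefficients $q_n,q_{n-1},\dots$; equivalently, the spin-flip $\prod_i X_i$ reduces to $\mu\le 0$. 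Then only a rescaling, not Barvinok's strip-to-disk map, is needed.

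Second, your plan to compute $q_k$ by truncating the Taylor series of $e^{\b H_0}$ at order $O(\log(n/\e))$ fails because $\|H_0\|=\Theta(m)$ grows with the system: the exponential series needs $\Theta(m)$ terms before it begins to converge, so a logarithmic truncation captures essentially none of it (already $q_0=e^{\b\sum_{(i,j)}J_{ij}^{zz}}$ cannot be approximated by a degree-$O(\log n)$ polynomial in $\b$). The paper's approach is simpler and correct: since $H_0$ preserves each Hamming-weight sector $\cH_k$ of dimension $\binom{n}{k}=n^{O(k)}$, one writes down this block explicitly, diagonalizes it in time $\poly\!\bigl(\binom{n}{k}\bigr)=n^{O(k)}$, and evaluates $q_k=\sum_j e^{\b\lambda_j}$ directly --- no inner series truncation is needed.
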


\begin{proof}[\textbf{Proof of \thmref{alg for xxz}}]
By \lemref{XXZ polynomial}, the partition function is a polynomial of degree $n$ given in \eqref{eq:w22}. The location of its zeros is given by \thmref{t_higH_field}. Hence, we can apply the truncated Taylor series of \propref{taylor_bounded} to obtain an approximation algorithm for $Z_{\b}(H(\mu))$.

According to \lemref{XXZ polynomial}, the partition function of this system is
\ba
Z_{\b}(H(\mu))= \sum_{k=0}^n q_k z^k.\nonumber
\ea
The running time of the extrapolation algorithm is dominated by the calculation of the coefficients $q_k$ of the Taylor expansion, where $q_k$ is
\ba
q_k=\Tr_{\cH_k}[e^{\b\sum_{(i,j)\in E} \left(J_{ij}(X_i X_j+Y_i Y_j)+J_{ij}^{zz}Z_i Z_j\right)}]
\ea
and $\dim (\cH_k)=\binom{n}{k}$. In general, we can decompose the Hilbert space of the system as $\cH=\oplus_k \cH_k$. The local Hamiltonian $H$ is block diagonal in this basis. Since $H$ is sum of local terms, it takes time $n^{O(k)}$ to compute the entries of $H$ and diagonalize it in the block corresponding to the subspace $\cH_k$. Then we can find the trace of the exponential of this block also in time $n^{O(k)}$. Since we only need $k=O(\log(n))$ in the truncated Taylor expansion, we achieve an overall running time of $n^{O(\log(n/\e))}$. 
\end{proof}

Even though \thmref{t_higH_field} applies to a broader class of $2$-local Hamiltonians such as the Heisenberg model, our method does not immediately give an algorithm for those Hamiltonians. The reason is a technical difficulty in representing the partition function of these models as a polynomial in $\exp(\b \mu)$. This turns out not to be an issue for the $\mathrm{XXZ}$ model since the $1$-body terms $\sum_i Z_i$ commute with the rest of the Hamiltonian. 

One might wonder why we could not use the extrapolation algorithm directly for the classical system that we get after the mapping in \propref{quantum_to_classical}. After all, the partition function of this system is also a polynomial of degree $\poly(n)$ and the location of its zeros is the same as that of the quantum system. It seems that at least naively applying this idea does not work. This is because the point that we want to extrapolate to in the classical system is $\mu/\eta$ instead of $\mu$. For the error of the mapping to be $1/\poly(n)$, we need $\eta$ to be $\poly(n)$. Thus, the ending point of the extrapolation is vanishingly close to the imaginary axis where the zeros are located. This makes the running time blow up and become exponential instead of quasi-polynomial.

Note that sampling algorithms like the ones used in \cite{bravyi_classicalmapping,Bravyi_ferro} do not encounter this problem. The running time of these algorithms remains efficient even if the parameters of the classical Hamiltonian scale with the number of particles $n$. There are unfortunately no randomized algorithms based on sampling known for the $4$-local classical Hamiltonian obtained in the mapping of \propref{quantum_to_classical}. We leave extending our result to cover all the Hamiltonians considered in \thmref{t_higH_field} as a challenge for future work.

\section*{Acknowledgements} We thank Fernando Brand\~ao, Kohtaro Kato, Zeph Landau, Milad Marvian, and John Wright for helpful discussions.
This work was funded by NSF grants CCF-1452616, CCF-1729369, PHY-1818914; ARO contract
W911NF-17-1-0433; and a
Samsung Advanced Institute of Technology Global Research Partnership.

\bibliographystyle{alpha}
\bibliography{main}

\end{document}